\def\BibTeX{{\rm B\kern-.05em{\sc i\kern-.025em b}\kern-.08em
    T\kern-.1667em\lower.7ex\hbox{E}\kern-.125emX}}
\begin{document}
\title{Sparse Polynomial Zonotopes: A Novel Set Representation for Reachability Analysis}

\author{Niklas Kochdumper, Matthias Althoff
\thanks{This paragraph of the first footnote will contain the date on 
which you submitted your paper for review. This work was supported by the German Research Foundation (DFG) project faveAC under grant number AL 1185/5-1 }
\thanks{Niklas Kochdumper and Matthias Althoff are both with the Department of Computer Science, Technical University of Munich, 85748 Garching, Germany (e-mail: niklas.kochdumper@tum.de, althoff@tum.de). }}

\captionsetup{
    width=\linewidth, 
    labelfont=bf,        
    font=small,          
}

\newcommand{\SPZ}{SPZ}
\newcommand{\SPZl}{SPZ }
\newcommand{\SPZs}{SPZs}
\newcommand{\SPZsl}{SPZs }
\newcommand{\ID}{id}
\newcommand\lword[1]{\leavevmode\nobreak\hskip0pt plus\linewidth\penalty50\hskip0pt plus-\linewidth\nobreak\textbf{#1}}
\newcommand{\operator}[1]{{\normalfont \texttt{#1}}}

\newtheoremstyle{style}
  {\topsep}
  {\topsep}
  {\itshape}
  {0pt}
  {\bfseries}
  {:}
  { }
  {\thmname{#1}\thmnumber{ #2}\thmnote{ (#3)}}

\theoremstyle{style}
\newtheorem{definition}{Definition}
\newtheorem{proposition}{Proposition}
\newtheorem{theorem}{Theorem}
\newtheorem{lemma}{Lemma}
\newtheorem{remark}{Remark}
\newtheorem{assumption}{Assumption}
\newtheorem{example}{Example}
\newtheorem{corollary}{Corollary}
\newtheorem{problem}{Problem}

\setlength{\textfloatsep}{0.1cm}

\maketitle

\begin{abstract}
We introduce \textit{sparse polynomial zonotopes}, a new set representation for formal verification of hybrid systems. Sparse polynomial zonotopes can represent non-convex sets and are generalizations of zonotopes, polytopes, and Taylor models. Operations like Minkowski sum, quadratic mapping, and reduction of the representation size can be computed with polynomial complexity w.r.t. the dimension of the system. In particular, for reachability analysis of nonlinear systems, the wrapping effect is substantially reduced using sparse polynomial zonotopes, as demonstrated by numerical examples. In addition, we can significantly reduce the computation time compared to zonotopes when dealing with nonlinear dynamics. 
\end{abstract}

\begin{IEEEkeywords}
Reachability analysis, nonlinear dynamics, hybrid systems, sparse polynomial zonotopes. 
\end{IEEEkeywords}


\section{Introduction}
\label{sec:Introduction}

\IEEEPARstart{E}{fficient} set representations are highly relevant for controller synthesis and formal verification of hybrid systems, since many underlying algorithms compute with sets; see e.g., \cite{Bravo2006, Zamani2012, Kaynama2012, Schuermann2017c}. Improvements originating from a new set representation often significantly reduce computation time and improve the accuracy of set-based computations.

\subsection{State of the Art}

Fig.~\ref{fig:RelationsSetRepresentations} shows relevant set representations and their relations to each other. Almost all typical set representations are convex, except Taylor models, star sets, level sets, and polynomial zonotopes. Since all convex sets can be represented by support functions, which are closed under Minkowski addition, linear maps, and convex hull operations, they are a good choice for reachability analysis \cite{Girard2008b,Frehse2011,Frehse2015,Schupp2017,Ray2015}. Ellipsoids and polytopes are special cases of support functions, which are often used for reachability analysis \cite{Chutinan2003,Kurzhanskiy2007,Schupp2017,Bogomolov2018,Dang2012} and computations of invariant sets \cite{Blanchini1999,Alamo2005,Rungger2017,Legat2018}. However, the disadvantage of ellipsoids is that they are not closed under intersection and Minkowski addition; the disadvantage of polytopes is that the Minkowski sum is computationally expensive \cite{Tiwary2008}. 

\begin{figure}
\begin{center}
	\includegraphics[width = 0.45 \textwidth]{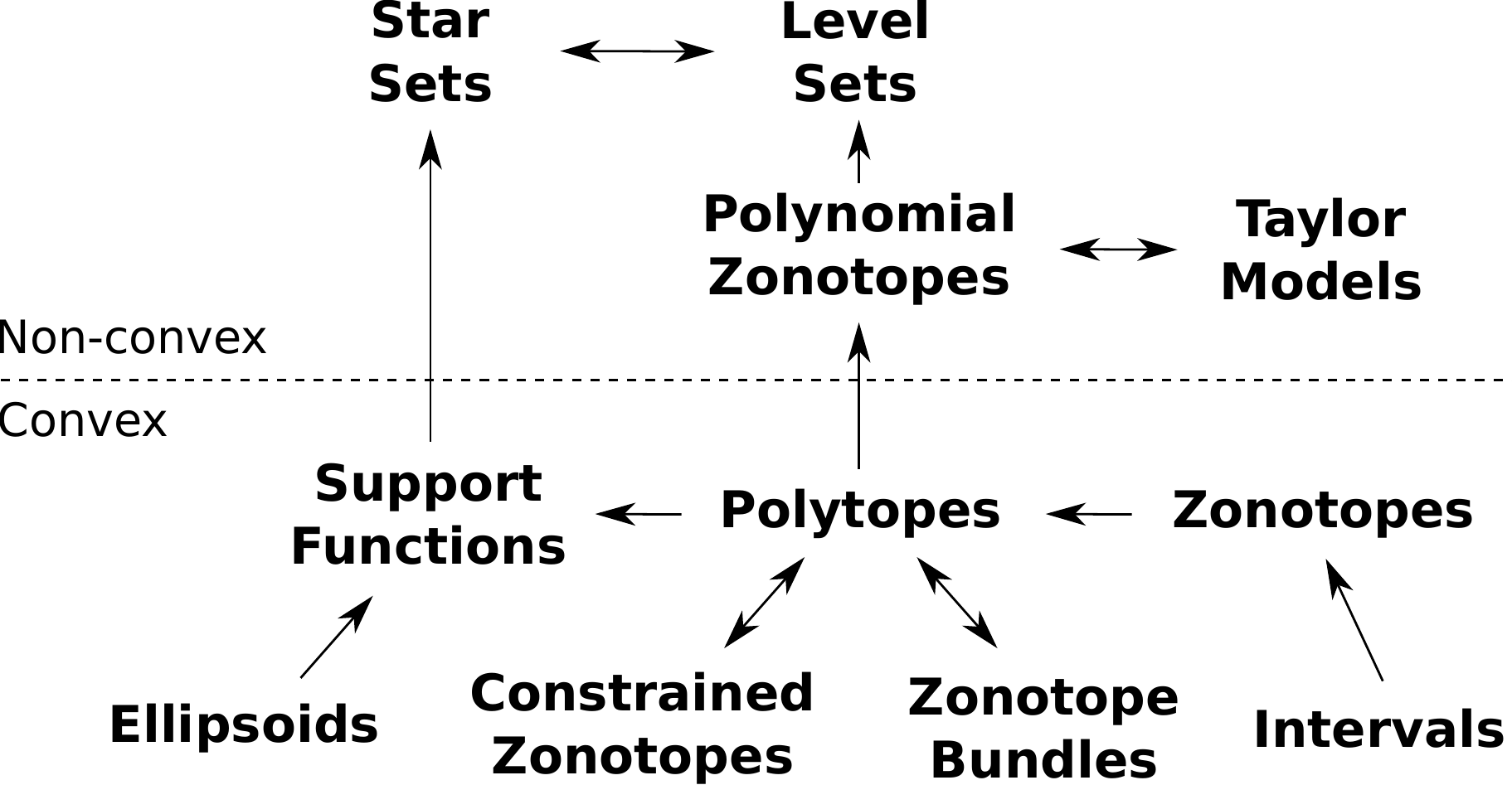}
	\caption{Visualization of the relations between the different set representations, where A $\rightarrow$ B denotes that B is a generalization of A.}
	\label{fig:RelationsSetRepresentations}
	\end{center}
\end{figure}

One important subclass of polytopes is zonotopes, which can be represented compactly by so-called generators: a zonotope with $l$ generators in $n$ dimensions might have up to ${l\choose n-1}$ halfspaces. More importantly, Minkowski sum and linear maps can be computed cheaply, making them a good choice for reachability analysis \cite{Girard2005,Althoff2016c,Immler2015b,Immler2015,Schupp2017}. Zonotopes are closely related to affine arithmetic \cite{deFigueiredo2004} with the zonotope factors being identical to the noise symbols in affine arithmetic. Two relevant extensions to zonotopes are zonotope bundles \cite{Althoff2011f}, where the set is represented implicitly by the intersection of several zonotopes, and constrained zonotopes \cite{Scott2016}, where additional equality constraints on the zonotope factors are considered. Zonotope bundles, as well as constrained zonotopes, are both able to represent any bounded polytope. Both representations make use of lazy computations and thus suffer much less from the curse of dimensionality, as is the case for polytopes \cite{Tiwary2008}. Two other set representations related to zonotopes are complex zonotopes \cite{Adimoolam2016} and zonotopes with sub-polyhedric domains \cite{Ghorbal2010}. Complex zonotopes are defined by complex valued vectors and are well-suited to verify global exponential stability for systems with complex valued eigenvectors \cite{Adimoolam2016}. Zonotopes with sub-polyhedric domains use zones, octagons, and polyhedra instead of intervals to represent the domain for the zonotope factors, which enables the efficient computation of intersections and unions of sets by exploiting lazy computations \cite{Ghorbal2010}. A special case of zonotopes are multi-dimensional intervals, which are particularly useful for range bounding of nonlinear functions via interval arithmetic \cite{Jaulin2006}, but they are also used for reachability analysis \cite{Eggers2012,Ramdani2011b}. Since intervals are not closed under linear maps, one often has to split them to reduce the wrapping effect \cite{Lohner2001}.

In general, reachable sets of nonlinear systems are non-convex, so that tight enclosures can only be achieved using non-convex set representations when avoiding the splitting of reachable sets. Taylor models \cite{Makino2003}, which consist of a polynomial and an interval remainder part, are an example of non-convex set representation. They are typically used for range bounding \cite{Makino2005} and reachability analysis \cite{Chen2012,Chen2013,Makino2009,Neher2007}. Polynomial zonotopes, another type of non-convex set representation, are introduced in \cite{Althoff2013a} and can equally represent the set defined by a Taylor model, as later shown in this work. Quadratic zonotopes \cite{Adje2015} are a special case of polynomial zonotopes. Two other ways to represent non-convex sets are star sets, which are especially useful for simulation-based reachability analysis \cite{Duggirala2016,Bak2017}, and level sets of nonlinear functions \cite{Mitchell2008b}, which are applied to compute reachable sets \cite{Mitchell2005} and controlled invariant regions \cite{Korda2014}. While star sets and level sets are very expressive, it is yet unclear how some operations, such as nonlinear mapping, are computed.

\subsection{Overview}

In this work, we introduce a new non-convex set representation called \textit{sparse polynomial zonotopes}, which is a non-trivial extension of polynomial zonotopes from \cite{Althoff2013a} and exhibits the following major advantages: a) sparse polynomial zonotopes enable a very compact representation of sets, b) they are closed under all relevant operations, c) many other set representations can be converted to a sparse polynomial zonotope, and most importantly, d) all operations have at most polynomial complexity.

The remainder of this paper is structured as follows: In Sec. \ref{sec:SparsePolynomialZonotope}, the formal definition of sparse polynomial zonotopes is provided and important operations on them are derived. We show how sparse polynomial zonotopes provide substantially better results for reachability analysis in Sec. \ref{sec:ReachabilityAnalysis}, which is demonstrated in Sec. \ref{sec:NumericalExamples} on four numerical examples.

\subsection{Notation}
\label{subsec:notation}

In the remainder of this paper, we will use the following notations: Sets are always denoted by calligraphic letters, matrices by uppercase letters, and vectors by lowercase letters. Given a discrete set $\mathcal{H} \in \{ \cdot \}^n$, $|\mathcal{H}| = n$ denotes the cardinality of the set. Given a vector $b \in \mathbb{R}^n$, $b_{(i)}$ refers to the $i$-th entry. Given a matrix $A \in \mathbb{R}^{n \times m}$, $A_{(i,\cdot)}$ represents the $i$-th matrix row, $A_{(\cdot,j)}$ the $j$-th column, and $A_{(i,j)}$ the $j$-th entry of matrix row $i$. Given a discrete set of positive integer indices $\mathcal{H} = \{h_1,\dots,h_{|\mathcal{H}|} \}$ with $1 \leq h_i \leq m ~ \forall i \in \{ 1, \dots, |\mathcal{H}| \}$, $A_{(\cdot,\mathcal{H})}$ is used for $[ A_{( \cdot,h_1 )} ~ \dots ~ A_{( \cdot, h_{|\mathcal{H}|} )} ]$, where $[C~D]$ denotes the concatenation of two matrices $C$ and $D$. The symbols $\mathbf{0}^{(n,m)} \in \mathbb{R}^{n \times m}$ and $\mathbf{1}^{(n,m)} \in \mathbb{R}^{n \times m}$ represent matrices of zeros and ones, and $I_n \in \mathbb{R}^{n \times n}$ is the identity matrix. The empty matrix is denoted by $[~]$. The left multiplication of a matrix $M \in \mathbb{R}^{m \times n}$ with a set $\mathcal{S} \subset \mathbb{R}^n$ is defined as $M \otimes \mathcal{S} = \{ M s ~ | ~ s \in \mathcal{S} \}$, the Minkowski addition of two sets $\mathcal{S}_1 \subset \mathbb{R}^n$ and $\mathcal{S}_2 \subset \mathbb{R}^n$ is defined as $\mathcal{S}_1 \oplus \mathcal{S}_2 = \{ s_1 + s_2 ~|~ s_1 \in \mathcal{S}_1, s_2 \in \mathcal{S}_2 \}$, and the Cartesian product of two sets $\mathcal{S}_1 \subset \mathbb{R}^n$ and $\mathcal{S}_2 \subset \mathbb{R}^m$ is defined as $\mathcal{S}_1 \times \mathcal{S}_2 = \{ [s_1~s_2]^T ~|~ s_1 \in \mathcal{S}_1, s_2 \in \mathcal{S}_2\}$. We further introduce an $n$-dimensional interval as $\mathcal{I} = [l,u],~ \forall i ~ l_{(i)} \leq u_{(i)},~ l,u \in \mathbb{R}^n$. The Nabla operator is defined as $\nabla = \sum_{i=1}^{n} e_i \frac{\partial}{\partial x_{(i)}}$, with $x \in \mathbb{R}^n$ and $e_i \in \mathbb{R}^n$ being orthogonal unit vectors. For the derivation of computational complexity, we consider all binary operations except concatenations, and initializations are explicitly not considered.


\section{Sparse Polynomial Zonotopes} 
\label{sec:SparsePolynomialZonotope}

\begin{figure}
\begin{center}
	\includegraphics[width = 0.4 \textwidth]{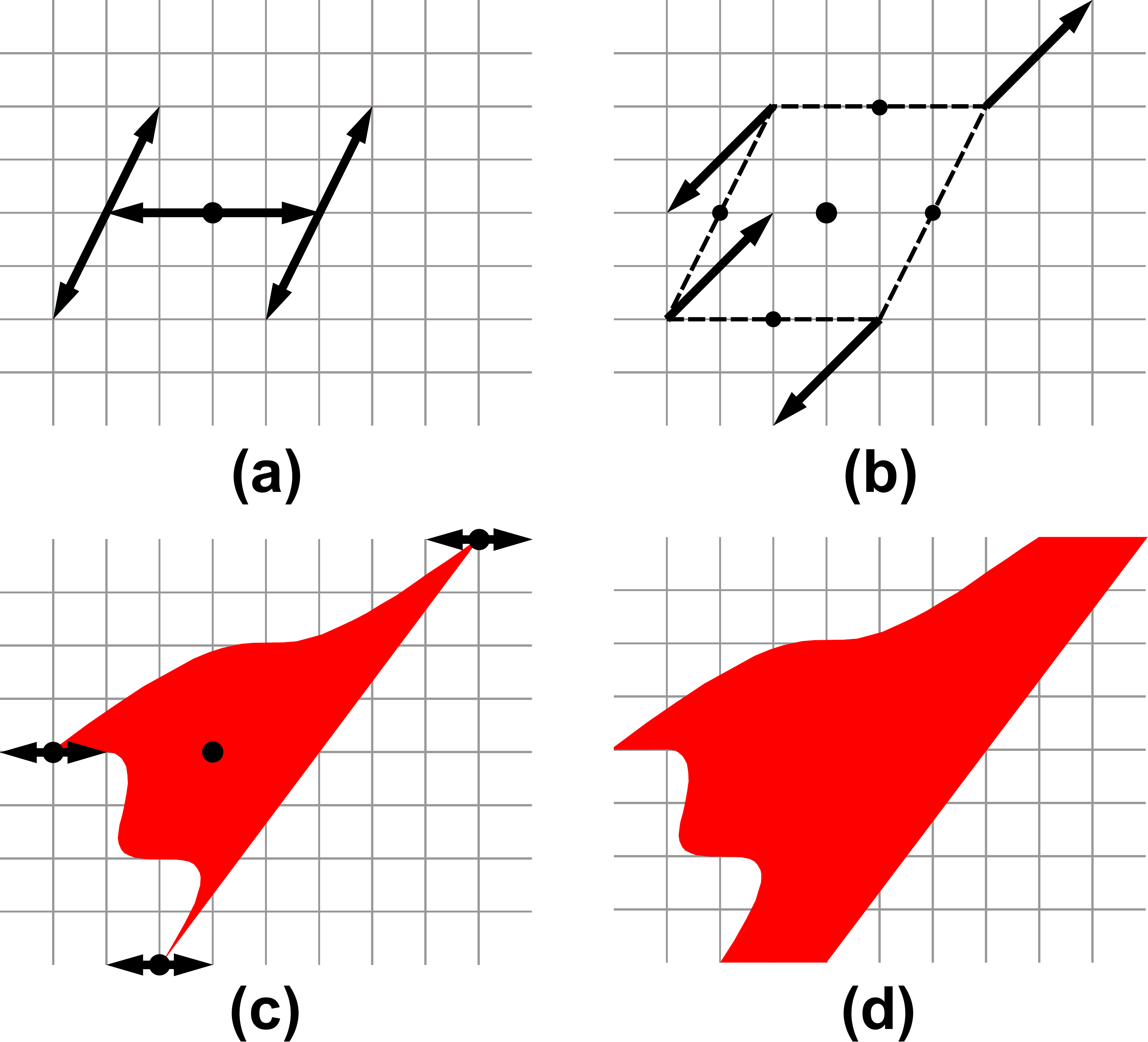}
	\caption{Construction of the \SPZl in Example \ref{ex:PolyZonotope} from the single generator vectors.}
	\label{fig:Example}
	\end{center}
\end{figure}

Let us first define sparse polynomial zonotopes (\SPZs), followed by derivations of relevant operations on them. 
\begin{definition}
  (Sparse Polynomial Zonotope) Given a generator matrix of dependent generators $G \in \mathbb{R}^{n \times h}$, a generator matrix of independent generators $G_I \in \mathbb{R}^{n \times q}$, and an exponent matrix $E \in \mathbb{N}_{0}^{p \times h}$, an \SPZl is defined as  
  \begin{equation}
  	\begin{split}
    \mathcal{PZ} = \bigg\{ & \sum _{i=1}^h \bigg( \prod _{k=1}^p \alpha _k ^{E_{(k,i)}} \bigg) G_{(\cdot,i)} + \sum _{j=1}^{q} \beta _j G_{I(\cdot,j)} ~ \bigg| \\ & \alpha_k, \beta_j \in [-1,1] \bigg\}.
    \end{split}
  \label{eq:polyZonotope}
  \end{equation}
\end{definition}  


The scalars $\alpha_k$ are called \textit{ dependent factors} since a change in their value affects multiplication with multiple generators. Consequently, the scalars $\beta_j$ are called \textit{independent factors} because they only affect multiplication with one generator. The number of dependent factors is $p$, the number of independent factors is $q$, and the number of dependent generators is $h$. The order of an \SPZl $\rho$ is defined as $\rho = \frac{h + q}{n}$. The independent generators are required for computational reasons: while computations on the dependent generators are exact but computational expensive, computations on the independent generators are often over-approximative but fast. For the derivation of the computational complexity of set operations, we introduce 
\begin{equation}
	h = c_h n, ~ p = c_p n ,~ q = c_q n,
	\label{eq:complexity}
\end{equation} 
with $c_h,c_p,c_q \in \mathbb{R}_{\geq 0}$. The assumption in \eqref{eq:complexity} is justified by the fact that we limit the order $\rho$ of an \SPZl to stay below a desired value $\rho_d$. In the remainder of this paper, we call the term $\alpha_1^{E_{(1,i)}} \cdot \dotsc \cdot \alpha_p^{E_{(p,i)}} \cdot G_{(\cdot,i)}$ a monomial, and $\alpha_1^{E_{(1,i)}} \cdot \dotsc \cdot \alpha_p^{E_{(p,i)}}$ the variable part of the monomial. In order to keep track of the dependencies between the dependent factors from different \SPZs, we introduce an unambiguous integer identifier for each dependent factor $\alpha_k$ and store the identifiers for all dependent factors in a row vector $\ID \in \mathbb{N}_{>0}^{1 \times p}$. Using this identifier vector, we introduce the shorthand $\mathcal{PZ} = \langle G, G_I, E, \ID \rangle_{PZ} \subset \mathbb{R}^n$ for the representation of \SPZs.
All components of a set $\square_i$ have the index $i$, e.g., $p_i$, $h_i$, and $q_i$ belong to $\mathcal{PZ}_i$. Since many set operations require the generation of new unique identifiers we introduce the operation $\operator{uniqueID}(m)$, which returns an identifier vector of length $m$ that contains newly generated unique identifiers. The concept of using unique identifiers to keep track of dependencies is also used in \cite{Combastel2019}, so that the operation \operator{uniqueID} can be implemented as in \cite[Tab.~3]{Combastel2019}. To make \SPZsl more intuitive, we introduce the following example:

\begin{example} 
	The \SPZl
	\begin{equation*}
		\mathcal{PZ} = \left\langle \begin{bmatrix} 4 & 2 & 1 & 2 \\ 4 & 0 & 2 & 2 \end{bmatrix}, \begin{bmatrix} 1 \\ 0 \end{bmatrix}, \begin{bmatrix} 0 & 1 & 0 & 3 \\ 0 & 0 & 1 & 1 \end{bmatrix}, [1 ~ 2 ] \right\rangle_{PZ}
	\end{equation*}
	defines the set
	\begin{equation*}
	\begin{split}
  		\mathcal{PZ} = \bigg\{ & \begin{bmatrix} 4 \\ 4 \end{bmatrix} + \begin{bmatrix} 2 \\ 0 	\end{bmatrix} \alpha_1 + \begin{bmatrix} 1 \\ 2 \end{bmatrix} \alpha_2 + \begin{bmatrix} 2 \\ 2 \end{bmatrix} \alpha_1^3 \alpha_2 + \begin{bmatrix} 1 \\ 0 \end{bmatrix} \beta_1 ~ \bigg| \\ 
  		& ~ \alpha_1, \alpha_2, \beta_1 \in [-1,1] \bigg\}.
	\end{split}
	\end{equation*}
The construction of this \SPZl is visualized in Fig.~\ref{fig:Example}: (a) shows the set spanned by the constant offset vector and the second and third dependent generator, (b) shows the addition of the dependent generator with the mixed term $\alpha_1^3 \alpha_2$, (c) shows the addition of the independent generator, and (d) visualizes the final set. 
	\label{ex:PolyZonotope}
\end{example}

\SPZsl are a more compact representation of polynomial zonotopes \cite{Althoff2013a}, resulting in completely different algorithms for operations on them. In \cite[Def. 1]{Althoff2013a}, the generators $g^{([i],j,k,\dots,m)}$ for all possible combinations of dependent factors up to a certain polynomial degree $\mu$ are stored: 
\begin{equation*}
	\begin{split}
		\mathcal{PZ} = \bigg \{ & c + \sum_{j=1}^p \alpha_j ~ g^{([1],j)} + \sum_{j=1}^p \sum_{k=j}^p \alpha_j \alpha_k ~ g^{([2],j,k)} + \dots + \\
		& \sum_{j=1}^p \sum_{k=1}^p \dots \sum_{m=l}^p \alpha_j \alpha_k \dots \alpha_m ~ g^{([\mu],j,k,\dots,m)} \\
		& + \sum_{i = 1}^q \beta_j ~ G_{I(\cdot,j)} ~\bigg | ~ \alpha_i, \beta_j \in [-1,1] \bigg \},
	\end{split}
\end{equation*}
with $g^{([\mu],j,k,\dots,m)} \in \mathbb{R}^n$, $c \in \mathbb{R}^n$, $G_I \in \mathbb{R}^{n \times q}$. This results in $h = {\mu + p \choose p }$ generators \cite[Chapter~3~(3.8)]{Reimer2003}. For the one-dimensional polynomial zonotope $\mathcal{PZ} = \{ \alpha_1 \cdot \ldots \cdot \alpha_{19} \cdot \alpha_{20}^{10} | \alpha_i \in [-1,1] \}$ with $p=20$ dependent factors and with a polynomial degree of $\mu = 10$, the number of dependent generators is $h = 30045015$. This demonstrates that the number of stored generators can become very large if the polynomial degree and the number of dependent factors are high, which makes computations on the previous set representation very inefficient. Even in comparison with quadratic zonotopes, which correspond to a polynomial order of $\mu = 2$, \SPZsl have lower or equal complexity for all set operations considered in \cite{Althoff2013a} (see Tab.~\ref{tab:complexity}). We in turn use a sparse representation, where only the generators for desired factor combinations are stored, which enables the representation of the above polynomial zonotope with only one single generator. Furthermore, our representation does not require limiting the polynomial degree of the polynomial zonotope in advance, which is advantageous for reachability analysis, as shown in Sec. \ref{subsec:advantages}.  

\begin{table}[h]
\begin{center}
\caption{Computational complexity with respect to the dimension $n$.}
\label{tab:complexity}
\begin{tabular}{l c c}
 \toprule
 \textbf{Set Operation} & \textbf{SPZ} & \textbf{Quad. Zono. \cite{Althoff2013a}}  \\ \midrule 
 Multiplication with matrix & $\mathcal{O}(n^2 m)$ & $\mathcal{O}(n^2 m)$ \\
 Mink. add. with zonotope & $\mathcal{O}(1)$ & $\mathcal{O}(n)$ \\
 Enclosure by zonotope & $\mathcal{O}(n^2)$ & $\mathcal{O}(n^2)$ \\
 Quadratic map & $\mathcal{O}(n \log(n) + n^3 m)$ & $\mathcal{O}(n^4 m)$ \\
 \bottomrule 
\end{tabular}
\end{center}
\end{table}


\subsection{Preliminary Operations}

First, we introduce preliminary set operations that are required for many other operations.


\subsubsection{Merging the Set of Identifiers}

For all set operations that involve two or more \SPZs, the operator \operator{mergeID} is necessary in order to build a common representation of exponent matrices to fully exploit the dependencies between identical dependent factors:

\begin{proposition}
	(Merge ID) Given two \SPZs, $\mathcal{PZ}_1 = \langle G_1,\linebreak[1] G_{I,1}, E_1, \ID_1 \rangle_{PZ}$ and $\mathcal{PZ}_2 = \langle G_2, \linebreak[1] G_{I,2},$ $E_2, \ID_2 \rangle_{PZ}$, \operator{mergeID} returns two adjusted \SPZsl with identical identifier vectors that are equivalent to $\mathcal{PZ}_1$ and $\mathcal{PZ}_2$, and has a complexity of $\mathcal{O}(n^2)$:
	\begin{equation*}
	\begin{split}
		& \operator{mergeID}(\mathcal{PZ}_1,\mathcal{PZ}_2) = \big \{ \langle G_1, G_{I,1}, \overline{E}_1, \overline{\ID} \rangle_{PZ}, \\
		& ~~~~~~~~~~~~~~~~~~~~~~~~~~~~~~~~~ \langle G_2, G_{I,2}, \overline{E}_2, \overline{\ID} \rangle_{PZ} \big \} \\
		& ~~ \\
		& \text{with} ~~ \overline{\ID} = \begin{bmatrix} \ID_1 & \ID_{2(\cdot,\mathcal{H})} \end{bmatrix},~~\mathcal{H} = \left\{ i~ |~ \ID_{2(i)} \not\in \ID_1 \right\}, \\
		& ~~~~~~~ \overline{E}_1 = \begin{bmatrix} E_1 \\ \mathbf{0}^{(|\mathcal{H}|,h_1)} \end{bmatrix} \in \mathbb{R}^{a \times h_1}, \\
		& ~~~~~~~ \overline{E}_{2(i,\cdot)} = \begin{cases} E_{2(j,\cdot)},~ \mathrm{if} ~ \exists j~\overline{\ID}_{(i)} = \ID_{2(j)} \\ \mathbf{0}^{(1,h_2)}, ~\mathrm{otherwise} \end{cases} i = 1 \dots a,
	\end{split}
	\end{equation*}
where $a = |\mathcal{H}|+p_1$.
\label{prop:mergeID}
\end{proposition}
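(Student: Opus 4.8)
The plan is to prove the proposition in two parts: first, that the returned pair of \SPZsl is well-defined (i.e., the exponent matrices and identifier vectors have matching dimensions), and second, that each returned \SPZl denotes exactly the same set as its input, and finally to bound the arithmetic cost by $\mathcal{O}(n^2)$.

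First I would establish well-definedness. The merged identifier vector $\overline{\ID} = [\ID_1 ~ \ID_{2(\cdot,\mathcal{H})}]$ has length $a = p_1 + |\mathcal{H}|$, where $\mathcal{H}$ indexes exactly the factors of $\mathcal{PZ}_2$ whose identifiers do not already appear in $\ID_1$; since each identifier is unambiguous, $\overline{\ID}$ has no repeated entries, so it is a valid identifier vector. Then $\overline{E}_1 = [E_1^T ~ (\mathbf{0}^{(|\mathcal{H}|,h_1)})^T]^T$ has $a$ rows and $h_1$ columns by construction, matching $\overline{\ID}$ and $G_1$. For $\overline{E}_2$, note that every identifier $\ID_{2(j)}$ for $j = 1,\dots,p_2$ appears in $\overline{\ID}$ exactly once: if $\ID_{2(j)} \in \ID_1$ it appears in the $\ID_1$-block, otherwise $j \in \mathcal{H}$ and it appears in the $\ID_{2(\cdot,\mathcal{H})}$-block. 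Hence for each row index $i \in \{1,\dots,a\}$ there is at most one $j$ with $\overline{\ID}_{(i)} = \ID_{2(j)}$, so the case distinction defining $\overline{E}_{2(i,\cdot)}$ is unambiguous, and $\overline{E}_2 \in \mathbb{N}_0^{a \times h_2}$ is well-defined.

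Next I would verify set equality for each \SPZl separately. For $\mathcal{PZ}_1$: the new factors $\alpha_k$ with $k \in \{p_1+1,\dots,a\}$ all have exponent $0$ in every column of $\overline{E}_1$, so $\prod_{k=1}^a \alpha_k^{\overline{E}_{1(k,i)}} = \prod_{k=1}^{p_1} \alpha_k^{E_{1(k,i)}}$ for every $i$; substituting into \eqref{eq:polyZonotope} shows $\langle G_1, G_{I,1}, \overline{E}_1, \overline{\ID} \rangle_{PZ} = \mathcal{PZ}_1$. For $\mathcal{PZ}_2$: here the key observation is that $\overline{E}_2$ is obtained from $E_2$ by a permutation of rows together with insertion of all-zero rows (one zero row for each identifier in $\ID_1 \setminus \ID_2$). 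Insertion of a zero row introduces a factor $\alpha^0 = 1$ into each monomial and thus does not change the set; permutation of rows is just a relabeling of the dependent factors, consistent with the simultaneously permuted $\overline{\ID}$. Formally, for any assignment of the $\alpha$-values indexed by $\overline{\ID}$, reading off the exponents row by row reproduces exactly the monomial $\prod_{k=1}^{p_2}\alpha_{\ID_{2(k)}}^{E_{2(k,i)}}$, so $\langle G_2, G_{I,2}, \overline{E}_2, \overline{\ID}\rangle_{PZ} = \mathcal{PZ}_2$. The independent generators $G_{I,1}, G_{I,2}$ are untouched, so those terms are trivially unchanged. The main obstacle here is purely notational bookkeeping: carefully matching up rows of $E_2$ with rows of $\overline{E}_2$ via the identifier correspondence, rather than any genuine mathematical difficulty.

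Finally I would bound the complexity. Computing $\mathcal{H}$ requires membership tests of the $p_2$ entries of $\ID_2$ against $\ID_1$; and building $\overline{E}_2$ requires, for each of the $a$ rows, locating the matching column of $E_2$. Using \eqref{eq:complexity} we have $p_1, p_2, h_1, h_2, a = \mathcal{O}(n)$; the membership structure can be handled so that the dominant cost is the row-lookup and copy step, which touches $\mathcal{O}(a) = \mathcal{O}(n)$ rows each of length $h_2 = \mathcal{O}(n)$, giving $\mathcal{O}(n^2)$ in total (the zero-padding of $\overline{E}_1$ is likewise $\mathcal{O}(|\mathcal{H}| \cdot h_1) = \mathcal{O}(n^2)$). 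Hence \operator{mergeID} has complexity $\mathcal{O}(n^2)$, which completes the proof.
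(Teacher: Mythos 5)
Your proof is correct and follows essentially the same route as the paper's: the paper's entire argument is the one-line observation that padding the exponent matrices with all-zero rows (and, implicitly, permuting rows consistently with the merged identifier vector) changes only the representation, not the set, and that the construction of $\mathcal{H}$ dominates the cost at $\mathcal{O}(p_1 p_2) = \mathcal{O}(n^2)$. You simply spell out the well-definedness and row-correspondence bookkeeping that the paper leaves implicit.
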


\begin{proof}
The extension of the exponent matrices with all-zero rows only changes the representation of the set, but not the set itself.
	
Complexity:	The only operation with super-linear complexity with respect to the system dimension $n$ is the construction of the set $\mathcal{H}$ with $\mathcal{O}(p_1 p_2) = \mathcal{O}(n^2)$ using \eqref{eq:complexity}.
\end{proof}


\subsubsection{Transformation to a Compressed Representation}

Some set operations result in an \SPZl that contains multiple monomials with an identical variable part, which we combine to one single monomial:

\begin{proposition}
	(Compact) Given an \SPZl $\mathcal{PZ} = \langle G,G_I,\linebreak[0]E,\ID \rangle_{PZ}$, the operation \operator{compact} returns a compressed representation of the set $\mathcal{PZ}$ and has a complexity of $\mathcal{O}(n^2 \log (n))$:
	\begin{equation*}
		\begin{split}
			& \operator{compact}(\mathcal{PZ}) = \langle \overline{G}, G_I, \overline{E}, \ID \rangle_{PZ} \\
			& ~~ \\
			& \text{with} ~~ \overline{E} = \operator{uniqueColumns}( E ) \in \mathbb{N}_{0}^{p \times k}, \\
			& ~~~~~~~~ \mathcal{H}_j = \big \{ i~ |~ \overline{E}_{(l,j)} = E_{(l,i)} ~ \forall l \in \{1, \dots, p\} \big \}, \\
			& ~~~~~~~~ \overline{G} = \bigg[ \sum_{i \in \mathcal{H}_1} G_{(\cdot,i)} ~ \dots ~ \sum_{i \in \mathcal{H}_k} G_{(\cdot,i)} \bigg],
		\end{split}
	\end{equation*}
	where the operation \operator{uniqueColumns} removes identical matrix columns until all columns are unique.
	\label{prop:compact}
\end{proposition}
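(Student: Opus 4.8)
The plan is to prove the proposition in two parts, mirroring the structure of the \operator{mergeID} proof: first establish that \operator{compact} produces a set equal to $\mathcal{PZ}$, then bound its computational complexity. For the set equality, I would argue that the output is merely a re-grouping of the sum in \eqref{eq:polyZonotope}. Fix any admissible assignment of the factors $\alpha_1,\dots,\alpha_p,\beta_1,\dots,\beta_q \in [-1,1]$. In the original representation, the point produced is $\sum_{i=1}^{h} \big( \prod_{k=1}^{p} \alpha_k^{E_{(k,i)}} \big) G_{(\cdot,i)}$ plus the independent part. Since $\overline{E}$ collects the distinct columns of $E$ and $\mathcal{H}_j$ indexes exactly those original columns equal to the $j$-th distinct column, the index sets $\mathcal{H}_1,\dots,\mathcal{H}_k$ partition $\{1,\dots,h\}$, and every $i \in \mathcal{H}_j$ has the same variable part $\prod_{k=1}^{p} \alpha_k^{\overline{E}_{(k,j)}}$. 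Hence $\sum_{i=1}^{h} \big( \prod_{k=1}^{p} \alpha_k^{E_{(k,i)}} \big) G_{(\cdot,i)} = \sum_{j=1}^{k} \big( \prod_{l=1}^{p} \alpha_l^{\overline{E}_{(l,j)}} \big) \sum_{i \in \mathcal{H}_j} G_{(\cdot,i)} = \sum_{j=1}^{k} \big( \prod_{l=1}^{p} \alpha_l^{\overline{E}_{(l,j)}} \big) \overline{G}_{(\cdot,j)}$, which is exactly the dependent part generated by $\langle \overline{G}, G_I, \overline{E}, \ID \rangle_{PZ}$. The independent generators $G_I$ and the identifier vector $\ID$ are untouched, so both set inclusions hold and the two \SPZsl coincide.

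For the complexity bound, I would go through the three constructive steps. Sorting the columns of $E$ to detect duplicates via \operator{uniqueColumns}: comparing and sorting $h$ columns, each of length $p$, costs $\mathcal{O}(p \, h \log h)$; with $h = c_h n$ and $p = c_p n$ from \eqref{eq:complexity} this is $\mathcal{O}(n^2 \log n)$. Forming the index sets $\mathcal{H}_j$ can be done as a by-product of the same sort (grouping equal adjacent columns), so it adds nothing asymptotically; a naive double loop would give $\mathcal{O}(p\, h\, k) = \mathcal{O}(n^3)$, so one must be a little careful to phrase it as part of the sorting pass to get the claimed bound. Finally, computing $\overline{G}$: each original generator $G_{(\cdot,i)} \in \mathbb{R}^n$ is added into exactly one column sum (the partition property), so the total work is $\mathcal{O}(n h) = \mathcal{O}(n^2)$. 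The dominating term is the sort, giving the overall $\mathcal{O}(n^2 \log n)$.

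The main obstacle I expect is the complexity accounting for \operator{uniqueColumns} and the $\mathcal{H}_j$: one has to commit to an implementation (sort-based grouping of columns) so that duplicate detection and the partition $\mathcal{H}_1,\dots,\mathcal{H}_k$ come out of a single $\mathcal{O}(p h \log h)$ pass rather than a quadratic-in-$h$ comparison of every distinct column against every original column. The set-equality half is essentially a bookkeeping argument — the only subtlety is noting explicitly that $\{\mathcal{H}_j\}$ is a partition of $\{1,\dots,h\}$ (every column of $E$ equals exactly one distinct column), which is what licenses interchanging the order of summation and pulling the shared variable part out of the inner sum.
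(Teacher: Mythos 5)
Your proposal is correct and takes essentially the same approach as the paper: the set equality is the same regrouping argument (the paper illustrates it with the two-identical-columns case and generalizes, while you state the partition property explicitly), and the complexity analysis matches the paper's sort-and-group implementation of \operator{uniqueColumns} with cost $\mathcal{O}(p\,h\log h)$ dominating. No gaps.
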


\begin{proof}
	For an \SPZl where the exponent matrix $E = [e ~ e]$ consists of $2$ identical columns $e \in \mathbb{N}_{0}^{p}$, it holds that 
	\begin{equation*}
	\begin{split}
		& \bigg \{ \bigg( \prod _{k=1}^p \alpha _k ^{e_{(k)}} \bigg) G_{(\cdot,1)} + \bigg( \prod _{k=1}^p \alpha _k ^{e_{(k)}} \bigg) G_{(\cdot,2)} ~\bigg| ~ \alpha_k \in [-1,1] \bigg \} \\
		& = \bigg \{  \bigg( \prod _{k=1}^p \alpha _k ^{e_{(k)}} \bigg) \bigg ( G_{(\cdot,1)} + G_{(\cdot,2)} \bigg) ~\bigg| ~ \alpha_k \in [-1,1] \bigg \}.
	\end{split}
	\end{equation*}
	Summation of the generators for monomials with identical exponents therefore does not change the set, which proves that $\operator{compact}(\mathcal{PZ}) \equiv \mathcal{PZ}$. Furthermore, since the number of unique columns $k$ of matrix $E$ is smaller than the number of overall columns $h$, the matrices $\overline{E}$ and $\overline{G}$ are smaller or of equal size compared to the matrices $E$ and $G$, which results in a compressed representation of the set. 
	
Complexity:	The operation \operator{uniqueColumns} in combination with the construction of the sets $\mathcal{H}_j$ can be efficiently implemented by sorting the matrix columns followed by an identification of identical neighbors, which can be realized with a worst-case complexity of $\mathcal{O}(p h \log (h))$ \cite[Chapter~5]{Knuth1997}. Since all other operations have at most quadratic complexity, the overall complexity is $\mathcal{O}(n^2 \log (n))$ using \eqref{eq:complexity}.
\end{proof}


\subsection{Conversion from other Set Representations}

This subsection shows how other set representations can be converted to \SPZs.


\subsubsection{Zonotope and Interval}

We first provide the definition of a zonotope:

\begin{definition}
	(Zonotope) \cite[Def. 1]{Girard2005} Given a center $c \in \mathbb{R}^n$ and a generator matrix $G \in \mathbb{R}^{n \times l}$, a zonotope is defined as 
	\begin{equation}
		\mathcal{Z} = \bigg\{ c + \sum_{i=1}^l \alpha_i ~ G_{(\cdot,i)} ~\bigg|~ \alpha_i \in [-1,1] \bigg\}.
	\label{eq:zonotope}
	\end{equation}
\end{definition}

For a compact notation, we introduce the shorthand $\mathcal{Z} = \langle c,G \rangle_{Z}$. Any zonotope can be converted to an \SPZ:

\begin{proposition}
	(Conversion Zonotope) A zonotope $\mathcal{Z} = \langle c, G \rangle_{Z}$ can be represented by an \SPZ:
	\begin{equation*}
		\mathcal{Z} = \left\langle [c ~ G] ,[~],[\mathbf{0}^{(n,1)} ~ I_l],\operator{uniqueID}(l) \right\rangle_{PZ}.
	\end{equation*}
	The complexity of the conversion is $\mathcal{O}(l)$, with $l$ denoting the number of zonotope generators.
	\label{prop:conversionZonotope}
\end{proposition}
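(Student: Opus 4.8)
The plan is to verify the claimed identity by exhibiting an explicit bijection between the parametrizations of $\mathcal{Z}$ in its native form \eqref{eq:zonotope} and in the proposed \SPZl form \eqref{eq:polyZonotope}, and then to count the operations involved.

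First I would write out what the proposed \SPZl actually denotes. With $G = [c~G]$ as the dependent generator matrix, $G_I = [~]$ (no independent generators, so $q = 0$), exponent matrix $E = [\mathbf{0}^{(n,1)}~I_l]$, and $p = l$ dependent factors, Definition~1 gives
\begin{equation*}
\mathcal{PZ} = \bigg\{ \bigg( \prod_{k=1}^l \alpha_k^{0} \bigg) c + \sum_{i=1}^{l} \bigg( \prod_{k=1}^{l} \alpha_k^{(I_l)_{(k,i)}} \bigg) G_{(\cdot,i)} ~\bigg|~ \alpha_k \in [-1,1] \bigg\}.
\end{equation*}
The first monomial has the all-zero exponent column, so its variable part is the empty product, which equals $1$, contributing exactly $c$. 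For the $i$-th remaining monomial, the exponent column is the $i$-th unit vector, so the variable part $\prod_{k=1}^{l}\alpha_k^{(I_l)_{(k,i)}} = \alpha_i$. Hence the set reduces to $\{ c + \sum_{i=1}^l \alpha_i G_{(\cdot,i)} \mid \alpha_i \in [-1,1]\}$, which is precisely $\mathcal{Z}$ from \eqref{eq:zonotope}. The two parametrizations share the same factor ranges $[-1,1]$ and the identifier vector $\operator{uniqueID}(l)$ is irrelevant to the denoted set (it only records dependencies across sets), so the sets are equal. I would note that the key observations are (i) an exponent column of all zeros yields the constant term, and (ii) the columns of $I_l$ select each generator with a distinct linear factor.

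For the complexity claim: the construction consists of the concatenation $[c~G]$ (not counted, per the convention in the Notation section that concatenations and initializations are free), forming $[\mathbf{0}^{(n,1)}~I_l]$ (an initialization, also free), and the call $\operator{uniqueID}(l)$, which generates $l$ identifiers and can be done in $\mathcal{O}(l)$ following \cite[Tab.~3]{Combastel2019}. No binary operations with super-linear cost are performed, so the total is $\mathcal{O}(l)$.

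I do not expect a genuine obstacle here; the statement is essentially a definitional unfolding. The only point requiring a little care is the bookkeeping around the empty/all-zero exponent column and the convention that concatenation and matrix initialization are not charged in the complexity count, so that the bound is $\mathcal{O}(l)$ rather than $\mathcal{O}(nl)$ from writing down $[c~G]$.
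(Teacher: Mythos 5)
Your proposal is correct and matches the paper's own argument: insert the given matrices into the definition of an \SPZl, observe that the all-zero exponent column produces the constant offset $c$ and the columns of $I_l$ produce the terms $\alpha_i G_{(\cdot,i)}$, recovering \eqref{eq:zonotope}; the complexity is $\mathcal{O}(1)$ for the (uncounted) concatenations plus $\mathcal{O}(l)$ for \operator{uniqueID}. The added detail about the empty product and the irrelevance of the identifier vector is a fine, if slightly more explicit, rendering of the same definitional unfolding.
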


\begin{proof}
	If we insert $E = [\mathbf{0}^{(n,1)} ~ I_l]$ and $G_I = [~]$ into \eqref{eq:polyZonotope}, we obtain a zonotope (see \eqref{eq:zonotope}). 
	
Complexity:	The construction of the matrices only involves concatenations, and therefore has complexity $\mathcal{O}(1)$. Generation of $l$ unique identifiers has complexity $\mathcal{O}(l)$, resulting in an overall complexity of $\mathcal{O}(1) + \mathcal{O}(l) = \mathcal{O}(l)$.
\end{proof}

Since any interval can be represented as a zonotope \cite[Prop. 2.1]{Althoff2010a}, their conversion to an \SPZl is straightforward.


\subsubsection{Polytope}

We start with the vertex-representation of a polytope:
\begin{definition}
	(Polytope) \cite[Def. 2.2]{Althoff2010a} Given $r$ polytope vertices $v_i \in \mathbb{R}^n,~i \in \{1, \dots, r\}$, a polytope $\mathcal{P}$ is defined as 
	\begin{equation*}
		\mathcal{P} = \bigg\{  \sum_{i=1}^{r} \lambda_i v_i ~ \bigg | ~ \lambda_i \geq 0,~ \sum_{i=1}^r \lambda_i = 1  \bigg \}.
	\end{equation*}
	\label{def:polytope}
\end{definition}

We use the shorthand $\mathcal{P} = \langle [v_1 ~ \dots ~v_r] \rangle_P$.

\begin{theorem}
	(Conversion Polytope) Every bounded polytope can be equivalently represented as an \SPZ. 
\end{theorem}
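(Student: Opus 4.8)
The plan is to exhibit an explicit \SPZl with no independent generators that equals $\mathcal{P}$. Since $\mathcal{P}$ is bounded, the Minkowski--Weyl theorem guarantees that a finite vertex set suffices, so $r < \infty$ and only finitely many dependent generators are needed. The barycentric coordinates $\lambda = (\lambda_1, \dots, \lambda_r)$ in Definition~\ref{def:polytope} range over the standard simplex $\Delta = \{ \lambda \in \mathbb{R}^r \mid \lambda_i \geq 0, ~ \sum_{i=1}^r \lambda_i = 1 \}$, which has $r-1$ degrees of freedom. Accordingly, I would use $p = r-1$ dependent factors $\alpha_1, \dots, \alpha_{r-1} \in [-1,1]$, set $q = 0$ and $G_I = [~]$, and construct a surjective polynomial reparametrization of $\Delta$ by $[-1,1]^{r-1}$ whose components are polynomials in the $\alpha_k$.

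Concretely, I would substitute $\beta_k = (1+\alpha_k)/2 \in [0,1]$ and use the stick-breaking map
\[
\lambda_1 = \beta_1, \qquad \lambda_i = \beta_i \prod_{k=1}^{i-1} (1-\beta_k) \quad (2 \leq i \leq r-1), \qquad \lambda_r = \prod_{k=1}^{r-1} (1-\beta_k).
\]
A telescoping argument gives $\sum_{i=1}^r \lambda_i = 1$, and nonnegativity is immediate, so this map sends $[-1,1]^{r-1}$ into $\Delta$. I would then prove it is onto $\Delta$ by induction on $r$: given a target $\lambda^\star \in \Delta$, choose $\beta_1 = \lambda_1^\star$; if $\lambda_1^\star = 1$ the remaining factors are arbitrary, and otherwise the tail $(\lambda_2^\star, \dots, \lambda_r^\star)/(1-\lambda_1^\star)$ lies in the standard simplex in $\mathbb{R}^{r-1}$ and is realized by $\beta_2, \dots, \beta_{r-1}$ via the inductive hypothesis. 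This surjectivity claim is the only nontrivial step; everything else is bookkeeping.

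Finally, substituting $\lambda_i = \lambda_i(\alpha)$ into $\sum_{i=1}^r \lambda_i v_i$ yields, after expanding the products of affine factors (note each $\alpha_k$ occurs at most once in the product defining a given $\lambda_i$, so each $\lambda_i$ is multiaffine), an expression of the form $\sum_{j=1}^h \bigl( \prod_{k=1}^{r-1} \alpha_k^{E_{(k,j)}} \bigr) G_{(\cdot,j)}$, where every monomial has exponents in $\{0,1\}$ (hence $h \leq 2^{r-1}$), the matrix $E \in \mathbb{N}_0^{(r-1) \times h}$ collects the distinct exponent vectors, and $G \in \mathbb{R}^{n \times h}$ collects the corresponding coefficient vectors built from the $v_i$. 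Comparing with \eqref{eq:polyZonotope}, this is exactly the \SPZl $\langle G, [~], E, \operator{uniqueID}(r-1) \rangle_{PZ}$, which may optionally be post-processed by \operator{compact} to merge repeated monomials. The inclusion $\langle G, [~], E, \cdot \rangle_{PZ} \subseteq \mathcal{P}$ holds because every $\alpha \in [-1,1]^{r-1}$ produces a valid coordinate vector $\lambda(\alpha) \in \Delta$, and the reverse inclusion holds because the stick-breaking map is onto $\Delta$; hence the two sets coincide. The main obstacle is the surjectivity of the stick-breaking parametrization onto $\Delta$; the polynomial expansion is routine.
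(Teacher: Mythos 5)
Your proof is correct, but it takes a different route from the paper. The paper's argument is a two-line reduction: each vertex $v_i$ is trivially an \SPZl without independent generators, and since such \SPZsl are closed under the binary convex hull operation (Prop.~\ref{prop:convHull1}), iterating that operation over the vertex list yields the polytope; the explicit construction is delegated to a cited algorithm. You instead build the set in one shot by parametrizing the barycentric simplex of Def.~\ref{def:polytope} with a stick-breaking map in $r-1$ factors, verify the telescoping identity $\sum_i \lambda_i = 1$, nonnegativity, and surjectivity onto $\Delta$ by induction, and then observe that each $\lambda_i$ is multiaffine in the $\alpha_k$, so the expansion lands exactly in the form \eqref{eq:polyZonotope} with binary exponents and $h \leq 2^{r-1}$. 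The two constructions are in fact close relatives: unrolling the paper's iterated $\operator{conv}(\cdot,v_{i})$ produces precisely a stick-breaking weighting of the vertices (built from the opposite end), with a comparable exponential generator count before \operator{compact}. What the paper's version buys is brevity and reuse of already-established machinery (closure under convex hull), which also matches the implemented conversion algorithm; what yours buys is self-containedness --- it needs neither Prop.~\ref{prop:convHull1} nor the external reference, and it makes the resulting structure ($p=r-1$ dependent factors, exponents in $\{0,1\}$, no independent generators) fully explicit, with both inclusions argued directly from the range and surjectivity of the parametrization. I see no gap: the surjectivity induction correctly handles the degenerate case $\lambda_1^\star = 1$, and the all-zero exponent column for the constant term is admissible under the \SPZl definition.
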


\begin{proof}
	As shown in Def.~\ref{def:polytope}, every bounded polytope $\mathcal{P}$ can be described as the convex hull of its vertices. Each vertex $v_i \in \mathbb{R}^n$ can be equivalently represented as an \SPZl without independent generators $v_i = \langle v_i, [~], 0,\operator{uniqueID}(1) \rangle_{PZ}$. Since \SPZsl without independent generators are closed under the convex hull operation, as we show later in Prop.~\ref{prop:convHull1}, computing the convex hull of all vertices results in an \SPZl that is equivalent to the bounded polytope $\mathcal{P}$.
\end{proof}

An algorithm for the conversion of a polytope in vertex-representation to an \SPZl is provided in \cite[Alg.~1]{Kochdumper2019a}.


\subsubsection{Taylor Model}

First, we formally define multi-dimensional Taylor models:

\begin{definition}
	(Taylor Model) \cite[Def. 2.1]{Chen2012} Given a vector field $w: \mathbb{R}^s \to \mathbb{R}^n$, where each sub-function $w_{(i)}: \mathbb{R}^s \to \mathbb{R}$ is a polynomial function defined as  
	\begin{equation}
		w_{(i)} \left( x_1,\dots ,x_s \right) = \sum_{j = 1}^{m_i} b_{i,j} \prod_{k=1}^s	x_{k}^{E_{i(k,j)}},
		\label{eq:polyFuncTaylorModel}	
	\end{equation}
	and an interval $\mathcal{I} \subset \mathbb{R}^n$, a Taylor model $\mathcal{T}(x) \subset \mathbb{R}^n$ is defined as 
	\begin{equation*}
		\mathcal{T}(x) = \left\{ \left. \begin{bmatrix} w_{(1)}(x) \\ \vdots \\ w_{(n)}(x) \end{bmatrix} + \begin{bmatrix} y_{(1)} \\ \vdots \\ y_{(n)} \end{bmatrix} ~ \right| ~ y \in \mathcal{I} \right\},
	\end{equation*}
	where $x = [x_1 ~ \dots ~ x_s]^T$, $E_i \in \mathbb{N}_{0}^{s \times m_i}$ represents an exponent matrix and $b_{i,j} \in \mathbb{R}$ are the polynomial coefficients. 
\end{definition}

For a concise notation, we introduce the shorthand $\mathcal{T}(x) = \langle w(x), \mathcal{I} \rangle_{T}$. The set defined by any Taylor model can be converted to an \SPZ:

\begin{proposition}
	(Conversion Taylor Model) The set defined by a Taylor model $\mathcal{T}(x) = \langle w(x), \mathcal{I} \rangle_{T}$ on the domain $x \in \mathcal{D}$ with $\mathcal{D} = [l_D,u_D]$ and $\mathcal{I} = [l_R,u_R]$ can be equivalently represented by an \SPZ:
	\begin{equation*}
	\mathcal{T}(x) = \left\langle \begin{bmatrix} \frac{l_R + u_R}{2} & \widehat{G} \end{bmatrix} , G_I, \begin{bmatrix} \mathbf{0}^{(s,1)} & \widehat{E} \end{bmatrix}, \operator{uniqueID}(s) \right\rangle_{PZ}
	\end{equation*}
	\begin{equation}
	\begin{split}
			& \text{with} ~~ \widehat{G} = \begin{bmatrix} \big [ \overline{b}_{1,1} ~ \dots ~ \overline{b}_{1,m_1} \big ] &  & \mathbf{0} \\  & \ddots &  \\ \mathbf{0} &  & \big [ \overline{b}_{n,1} ~ \dots ~ \overline{b}_{n,m_n} \big ] \end{bmatrix}, \\
			& ~~~~~~~~ \widehat{E} = \begin{bmatrix} \overline{E}_1 & \dots & \overline{E}_n \end{bmatrix}, \\
			& ~~~~~~~~ G_I = \begin{bmatrix} \frac{u_{R(1)}-l_{R(1)}}{2} & & \mathbf{0} \\ & \ddots & \\ \mathbf{0} & & \frac{u_{R(n)}-l_{R(n)}}{2} \end{bmatrix}.
		\end{split}
		\label{eq:conversionTaylorModel}
	\end{equation}
	The coefficients $\overline{b}_{i,j}$ and the matrices $\overline{E}_i$ result from the definition
	\begin{equation}
		w_{(i)} \left( \delta_1(\alpha_1),\dots ,\delta_s(\alpha_s) \right) := \sum_{j = 1}^{\overline{m}_i} \overline{b}_{i,j} \prod_{k=1}^s	\alpha_k^{\overline{E}_{i(k,j)}},
	\label{eq:TaylorModelTransform}
	\end{equation}
	where $w_{(i)}(\cdot)$ is defined as in \eqref{eq:polyFuncTaylorModel} and
	\begin{equation}	
		\begin{split}
			\delta_k(\alpha_k) = & \frac{l_{D(k)} + u_{D(k)}}{2} + \frac{u_{D(k)} - l_{D(k)}}{2} ~ \alpha_k , \\     
			& \alpha_k \in [-1,1], ~k=1\dots s.
		\end{split}
		\label{eq:auxVar}
	\end{equation}
	The \operator{compact} operation is applied to remove monomials with an identical variable part. The complexity of the conversion is $\mathcal{O}(n^{4+n} \log(n))$.
\end{proposition}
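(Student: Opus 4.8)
The plan is to exhibit the claimed \SPZl and verify that it parametrizes exactly the point set $\{\, w(x) + y \mid x \in \mathcal{D},\ y \in \mathcal{I} \,\}$ defined by the Taylor model, then bound the cost. The starting observation is that the domain $\mathcal{D} = [l_D,u_D]$ is an axis-aligned box, so each coordinate $x_k$ ranges independently over $[l_{D(k)},u_{D(k)}]$, and the affine maps $\delta_k$ in \eqref{eq:auxVar} are bijections from $[-1,1]$ onto exactly this interval. Hence substituting $x_k = \delta_k(\alpha_k)$ and letting $\alpha_k$ range over $[-1,1]$ sweeps out precisely the same values of $w(x)$ as letting $x$ range over $\mathcal{D}$; this is the same reparametrization used to turn an interval into a zonotope in Prop.~\ref{prop:conversionZonotope}. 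The identity \eqref{eq:TaylorModelTransform} then merely re-expands the composed polynomial $w_{(i)}(\delta_1(\alpha_1),\dots,\delta_s(\alpha_s))$ into monomials in the $\alpha_k$, collecting the new coefficients $\overline{b}_{i,j}$ and new exponent columns $\overline{E}_{i(\cdot,j)}$ (concretely, expanding each factor $\delta_k(\alpha_k)^{E_{i(k,j)}}$ by the binomial theorem and distributing).

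Next I would substitute the proposed matrices into the \SPZl definition \eqref{eq:polyZonotope} and read off the three contributions. The block-diagonal shape of $\widehat{G}$ together with $\widehat{E} = [\,\overline{E}_1\ \dots\ \overline{E}_n\,]$ means that the $i$-th component of $\sum_\ell \big(\prod_k \alpha_k^{\widehat{E}_{(k,\ell)}}\big)\widehat{G}_{(\cdot,\ell)}$ picks out exactly the $\overline{m}_i$ columns originating from $\overline{E}_i$, hence evaluates to $\sum_{j=1}^{\overline{m}_i} \overline{b}_{i,j}\prod_k \alpha_k^{\overline{E}_{i(k,j)}} = w_{(i)}(\delta_1(\alpha_1),\dots,\delta_s(\alpha_s))$ by \eqref{eq:TaylorModelTransform}. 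The prepended all-zero exponent column with generator $\frac{l_R+u_R}{2}$ contributes the midpoint of $\mathcal{I}$, and the diagonal independent-generator matrix $G_I$ with $\beta \in [-1,1]^n$ contributes $\{\, y - \frac{l_R+u_R}{2} \mid y \in \mathcal{I} \,\}$, since a diagonal generator matrix with radii $\frac{u_{R(i)}-l_{R(i)}}{2}$ is exactly the zonotope form of the box $\mathcal{I}$ recentered at the origin. Summing the three contributions reproduces $\{\, w(x) + y \mid x \in \mathcal{D},\ y \in \mathcal{I} \,\}$, establishing set equality. Applying \operator{compact} afterwards only merges columns of $\widehat{E}$ that share a variable part while summing the corresponding columns of $\widehat{G}$, which by Prop.~\ref{prop:compact} leaves the set unchanged and yields a valid \SPZl representation.

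For the complexity I would bound the number of monomials produced by \eqref{eq:TaylorModelTransform}: expanding a single term $\prod_{k=1}^s \delta_k(\alpha_k)^{E_{i(k,j)}}$ yields at most $\prod_k (E_{i(k,j)}+1)$ monomials in the $\alpha_k$, so the column counts of $\widehat{G}$ and $\widehat{E}$ grow combinatorially in $s$ and in the per-variable polynomial degree; turning this worst-case count, together with \eqref{eq:complexity} and the fact that the dominant step is the sort-based \operator{uniqueColumns} inside \operator{compact} (cost $\mathcal{O}(p\,h\log h)$ as in Prop.~\ref{prop:compact}), into the stated $\mathcal{O}(n^{4+n}\log(n))$ is the last step. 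The main obstacle I anticipate is purely bookkeeping: making the expansion in \eqref{eq:TaylorModelTransform} precise (tracking how each original monomial splits and which $\overline{b}_{i,j}$ accumulate from several sources) and then pinning the worst-case column count down tightly enough to justify the exact exponent $4+n$ rather than a looser bound.
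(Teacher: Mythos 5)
Your argument follows essentially the same route as the paper: reparametrize the box $\mathcal{D}$ via the affine bijections $\delta_k$, write $\mathcal{I}$ as a recentered diagonal zonotope, read off the block structure of $\widehat{G},\widehat{E},G_I$ against \eqref{eq:polyZonotope}, and bound the cost by the worst-case monomial count of the expanded polynomial fed into \operator{compact}. The only piece you defer --- turning the monomial count into the exponent $4+n$ --- is exactly the paper's closing arithmetic: with $m=\max_i m_i$ and $\epsilon$ the largest exponent, the expansion yields at most $m(\epsilon+1)^s$ terms per component, hence $\widehat{h}=nm(\epsilon+1)^s$ columns, and $\mathcal{O}(s\,\widehat{h}\log\widehat{h})$ becomes $\mathcal{O}(n^{4+n}\log(n))$ under \eqref{eq:complexity} with $s,m,\epsilon\in\mathcal{O}(n)$.
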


\begin{proof}
	The auxiliary variables $\delta_k(\alpha_k)$ (see \eqref{eq:auxVar}) represent the domain $\mathcal{D}$ with dependent factors $\alpha_k \in [-1,1]$: 
\begin{equation*}
	\mathcal{D} = \big \{ \begin{bmatrix} \delta_1(\alpha_1) & \dots & \delta_s(\alpha_s)\end{bmatrix}^T ~|~ \alpha_1, \dots, \alpha_s \in [-1,1]  \big \}.
\end{equation*}	
Furthermore, the interval $\mathcal{I}= [l_R,u_R]$ can be equivalently represented as a zonotope $\mathcal{I} = \langle 0.5(l_R+u_R), 0.5 ~ \operator{diag}(u_{R(1)}-l_{R(1)}, \dots, u_{R(n)}-l_{R(n)}  ) \rangle_{Z}$ (see \cite[Prop. 2.1]{Althoff2010a}), where the operator \operator{diag} returns a diagonal matrix. The set defined by the Taylor model $\mathcal{T}(x)$, $x \in \mathcal{D}$ can therefore be equivalently expressed as 	
\begin{equation*}
	\begin{split}
		& \mathcal{T}(\delta(\alpha)) = \left\{ \left. \begin{bmatrix} w_{(1)}(\delta(\alpha)) \\ \vdots \\ w_{(n)}(\delta(\alpha)) \end{bmatrix} + \begin{bmatrix} y_{(1)} \\ \vdots \\ y_{(n)} \end{bmatrix} ~ \right| ~ y \in \mathcal{I} \right\} \overset{\substack{ \scriptscriptstyle \eqref{eq:TaylorModelTransform} \\ \scriptscriptstyle \text{and} \\  \scriptscriptstyle \mathcal{I} = [l_{\scaleto{R\mathstrut}{3pt}},u_{\scaleto{R\mathstrut}{3pt}}] }}{=}\\
		& ~ \\
		&  \Bigg\{ \sum_{j = 1}^{\overline{m}_1} \begin{bmatrix} \overline{b}_{1,j} \\ o \end{bmatrix} \prod_{k=1}^s	\alpha_k^{\overline{E}_{1(k,j)}} + \dots + \sum_{j = 1}^{\overline{m}_n} \begin{bmatrix} o \\ \overline{b}_{n,j} \end{bmatrix} \prod_{k=1}^s	\alpha_k^{\overline{E}_{n(k,j)}} \\
		& ~~ + \frac{1}{2} \begin{bmatrix} u_{R(1)} - l_{R(1)} \\ o \end{bmatrix} \beta_1 + \dots + \frac{1}{2} \begin{bmatrix} o \\ u_{R(n)} - l_{R(n)} \end{bmatrix} \beta_n \\
		& ~~ + \frac{l_R + u_R}{2} ~ \bigg | ~ \alpha_k, \beta_1, \dots,\beta_n \in [-1,1]  \Bigg\} \\
		& ~ \\
		& = \left\langle \begin{bmatrix} \frac{l_R + u_R}{2} & \widehat{G} \end{bmatrix} , G_I, \begin{bmatrix} \mathbf{0}^{(s,1)}  & \widehat{E} \end{bmatrix}, \operator{uniqueID}(s) \right\rangle_{PZ},
	\end{split}
\end{equation*}	
where $\delta(\alpha) = [ \delta_1(\alpha_1)~ \dots ~ \delta_s(\alpha_s)]^T$ and $o = \mathbf{0}^{(n-1,1)}$.
	
Complexity: Let $m = \max (m_1,\dots,m_n)$ and $\epsilon = \max ( \linebreak[1] \max(E_1), \dots , \max(E_n))$, where $\max (A)$ returns the maximum entry of a matrix $A$. Since $\delta_k(\alpha_k) = c_0 + c_1 \alpha_k$, $c_0,c_1 \in \mathbb{R}$ (see \eqref{eq:auxVar}), it holds that $\delta_k(\alpha_k)^\epsilon$ is a polynomial in $\alpha_k$ with $\epsilon+1$ polynomial terms. Naive evaluation without intermediate simplification of the function $w_{(i)} \left( \delta_1(\alpha_1),\dots ,\delta_s(\alpha_s) \right)$ with $w_{(i)}(\cdot)$ as defined in \eqref{eq:polyFuncTaylorModel} therefore results in a multivariate polynomial with $\overline{m} = m (\epsilon + 1)^s$ terms in the worst-case. From \eqref{eq:conversionTaylorModel} and \eqref{eq:TaylorModelTransform} it can be deduced that the exponent matrix $\widehat{E}$ consequently consists of at most $\widehat{h} = n \overline{m} = n m (\epsilon + 1)^s$ columns. Since the complexity of \operator{compact} is $\mathcal{O}(\widehat{p} \widehat{h} \log(\widehat{h}))$ (see Prop.~\ref{prop:compact}) and $\widehat{p} = s$, the subsequent application of the \operator{compact} operation has complexity $\mathcal{O}( s n m (\epsilon + 1)^s \log(n m (\epsilon + 1)^s)) = \mathcal{O}(s n m (\epsilon + 1)^s ( \log(n) + \log(m) + s \log(\epsilon + 1))) = \mathcal{O}(n^{4+n} \log(n))$ using \eqref{eq:complexity} and the fact that $s = c_s n$, $m = c_m n$, and $\epsilon = c_\epsilon n$ with $c_s, c_m, c_\epsilon \in \mathbb{R}_{\geq 0}$ holds. This is also the overall complexity of the conversion, since all other operations have a lower complexity. 
\end{proof}


\subsection{Enclosure by other Set Representations}

For computational reasons many algorithms that compute with sets require to enclose sets by simpler set representations. In this subsection we therefore describe how \SPZsl can be enclosed by other set representations. We consider the \SPZ

\begin{equation}
	\begin{split}
	\mathcal{PZ} = \bigg \langle & \begin{bmatrix} -0.5 & 1 & 0 & -1 & 1 \\ -0.5 & 1 & 1 & 1 & 1 \end{bmatrix},[~], \\
	& \begin{bmatrix} 0 & 1 & 0 & 1 & 2 \\ 0 & 0 & 1 & 1 & 0 \end{bmatrix}, [1~2]  \bigg \rangle_{PZ}
	\end{split}
	\label{eq:polyZonotopeEnclose}
\end{equation}

as a running example to demonstrate the tightness of the enclosure.

\begin{figure}
\begin{center}
	\includegraphics[width = 0.49 \textwidth]{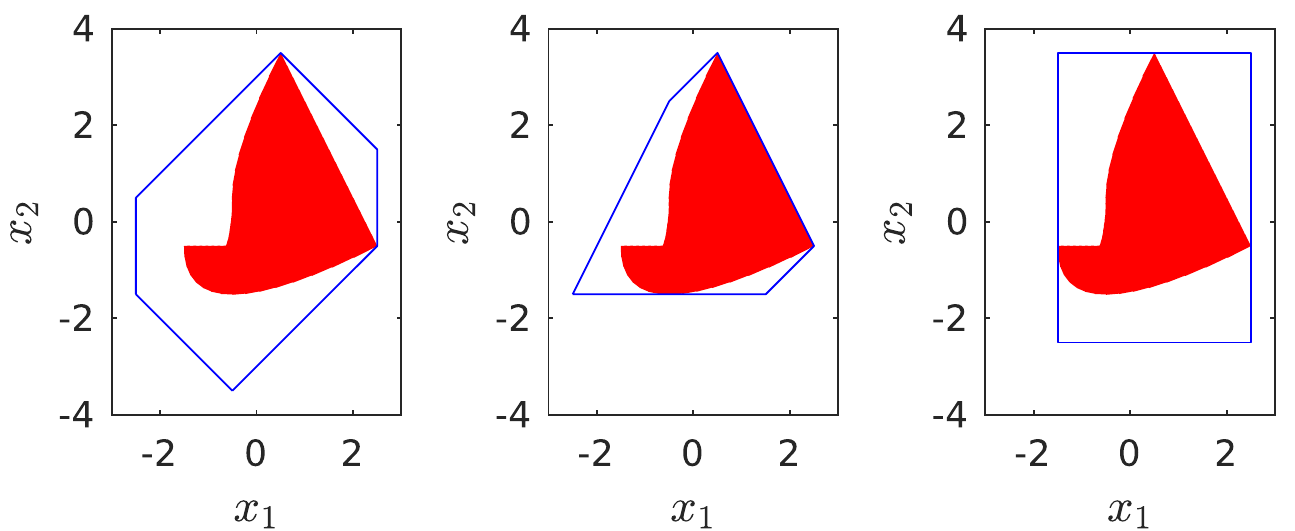}
	\caption{Enclosure of the \SPZl in \eqref{eq:polyZonotopeEnclose} with a zonotope (left), a polytope (middle) and an interval (right).}
	\label{fig:enclosure}
	\end{center}
\end{figure}


\subsubsection{Zonotope}

We first show how an \SPZl can be enclosed by a zonotope:

\begin{proposition}
	(Zonotope) Given an \SPZl $\mathcal{PZ} = \langle G,G_I,\linebreak[1] E,\ID \rangle_{PZ}$, the operation \operator{zono} returns a zonotope that encloses $\mathcal{PZ}$ with complexity $\mathcal{O}(n^2)$:
	\begin{equation*}
		\begin{split}
			& \operator{zono} \left( \mathcal{PZ} \right) = \bigg\langle \sum_{i \in \mathcal{N}} G_{\left(\cdot, i \right) } + 0.5 \sum_{i \in \mathcal{H}} G_{\left(\cdot, i \right)}, \\
			& ~~~~~~~~~~~~~~~~~~~~ \begin{bmatrix} 0.5 ~ G_{(\cdot,\mathcal{H})} & G_{(\cdot,\mathcal{K})} & G_I\end{bmatrix} \bigg\rangle_{Z} \\
			& ~~ \\
			& \text{with} ~~ \mathcal{N} = \left\{ i ~ \left|~ E_{(j,i)} = 0 ~ \forall j \in \{ 1, \dots, p \} \right. \right\}, \\
			& ~~~~~~~~ \mathcal{H} = \bigg\{ i ~\bigg|~ \prod_{j=1}^p \left(1-(E_{(j,i)} ~\mathrm{mod} ~ 2)\right) = 1 \bigg\} \setminus \mathcal{N}, \\
			& ~~~~~~~~ \mathcal{K} = \{1, \dots ,h\} \setminus ( \mathcal{H} \cup \mathcal{N} ), \\
			& ~~
		\end{split}
	\end{equation*}
where $x ~ \mathrm{mod} ~ y,~ x,y \in \mathbb{R}$ is the modulo operation.
\label{prop:zonoEnclose}
\end{proposition}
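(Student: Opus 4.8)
The plan is to split the sum defining $\mathcal{PZ}$ in \eqref{eq:polyZonotope} according to the behaviour of the variable part $\prod_{k=1}^p \alpha_k^{E_{(k,i)}}$, enclose each piece by a zonotope, and then invoke the fact that a Minkowski sum of zonotopes is again a zonotope whose center is the sum of the centers and whose generator matrix is the concatenation of the generator matrices. Concretely, I would first observe that $\{1,\dots,h\}$ is partitioned by $\operator{zono}$ into $\mathcal{N}$ (monomials whose exponent column is all-zero), $\mathcal{H}$ (monomials whose exponent column is nonzero but has only even entries; note $\prod_{j}(1-(E_{(j,i)}\bmod 2))=1$ precisely when every $E_{(j,i)}$ is even), and $\mathcal{K}$ (monomials whose exponent column has at least one odd entry), on top of which there is the independent part $\sum_{j=1}^q \beta_j G_{I(\cdot,j)}$.

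The key step is to bound the range of the variable part within each group using only $\alpha_k\in[-1,1]$. For $i\in\mathcal{N}$ the product is identically $1$. For $i\in\mathcal{H}$, each factor $\alpha_k^{E_{(k,i)}}$ is either $1$ (when $E_{(k,i)}=0$) or an even, hence nonnegative, power of a number in $[-1,1]$, so it lies in $[0,1]$, and a product of numbers in $[0,1]$ stays in $[0,1]$. For $i\in\mathcal{K}$ each factor lies in $[-1,1]$, hence so does the product. Consequently, as a set, $\{(\prod_k \alpha_k^{E_{(k,i)}})\,G_{(\cdot,i)}\mid \alpha_k\in[-1,1]\}$ is contained in $\{G_{(\cdot,i)}\}$ for $i\in\mathcal{N}$, in the segment $\langle 0.5\,G_{(\cdot,i)},\,0.5\,G_{(\cdot,i)}\rangle_{Z}$ for $i\in\mathcal{H}$, and in $\langle \mathbf{0}^{(n,1)},\,G_{(\cdot,i)}\rangle_{Z}$ for $i\in\mathcal{K}$, while the independent part is exactly the zonotope $\langle \mathbf{0}^{(n,1)},\,G_I\rangle_{Z}$.

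To finish, I would note that for any fixed $\alpha,\beta\in[-1,1]$ the corresponding point of $\mathcal{PZ}$ is a sum of one point from each of these zonotopes — this is precisely the step where the dependency between monomials that share a factor is dropped, which can only enlarge the enclosure — so $\mathcal{PZ}$ is contained in their Minkowski sum. Using $\langle c_1,G^{(1)}\rangle_{Z}\oplus\langle c_2,G^{(2)}\rangle_{Z}=\langle c_1+c_2,[G^{(1)}~G^{(2)}]\rangle_{Z}$ and summing the four groups yields center $\sum_{i\in\mathcal{N}}G_{(\cdot,i)}+0.5\sum_{i\in\mathcal{H}}G_{(\cdot,i)}$ and generator matrix $[0.5\,G_{(\cdot,\mathcal{H})}~G_{(\cdot,\mathcal{K})}~G_I]$, i.e., exactly $\operator{zono}(\mathcal{PZ})$. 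For the complexity, forming the partition $\mathcal{N},\mathcal{H},\mathcal{K}$ amounts to reading the parity of each entry of $E\in\mathbb{N}_0^{p\times h}$, i.e., $\mathcal{O}(ph)$ operations; building the center needs at most $\mathcal{O}(nh)$ additions of entries of $G$ and $\mathcal{O}(n)$ multiplications by $0.5$; assembling the generator matrix is concatenation only. With $h=c_h n$ and $p=c_p n$ from \eqref{eq:complexity}, the dominating cost is $\mathcal{O}(n^2)$. The only delicate point — and the part I would check most carefully — is the parity bookkeeping, i.e., verifying that the stated formulas for $\mathcal{N}$, $\mathcal{H}$, $\mathcal{K}$ really do sort the monomials into the three cases "identically $1$", "range in $[0,1]$", and "range in $[-1,1]$"; once that is settled, the zonotope enclosures and their Minkowski composition are routine.
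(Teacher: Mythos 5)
Your proposal is correct and follows essentially the same route as the paper's proof: partition the dependent generators by the parity of their exponent columns, enclose the range of each monomial's variable part by $\{1\}$, $[0,1]$, or $[-1,1]$ respectively, and observe that replacing each variable part by an independent interval (thereby discarding the dependencies between monomials sharing a factor) can only enlarge the set, yielding the stated Minkowski-sum zonotope; the complexity accounting also matches. Your explicit justification of the parity bookkeeping and of the Minkowski-sum containment is slightly more detailed than the paper's, but there is no substantive difference.
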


\begin{proof}
	We over-approximate all monomial variable parts in \eqref{eq:polyZonotope} with additional independent factors, which yields a zonotope \eqref{eq:zonotope}. Since monomials with exclusively even exponents ($i \in \mathcal{H}$) are strictly positive, we can enclose them tighter using
	\begin{equation*}
		\begin{split}
			\forall i \in \mathcal{H}: ~~ & \left( \prod_{k=1}^p [-1,1]^{E_{(k,i)}} \right) G_{(\cdot, i)} = [0,1] G_{(\cdot, i)} = \\ & ~~~~~~ = 0.5 ~ G_{(\cdot, i)} + [-1,1]~0.5~ G_{(\cdot, i)}.
		\end{split}
	\end{equation*}
	For all other monomials ($i \in \mathcal{K} $), evaluation of the monomial variable part directly results in the interval $[-1,1]$. A dependent factor affects all monomials that contain the dependent factor. Since the over-approximation of the monomial variable parts with new independent factors destroys this dependence between different monomials (e.g., $\{\alpha_1 \alpha_2^2 + \alpha_1^3 \alpha_2 | \alpha_1, \alpha_2 \in [-1,1] \} \subseteq \{ \beta_1 + \beta_2 | \beta_1,\beta_2 \in [-1,1] \}$), the resulting zonotope encloses $\mathcal{PZ}$, because removing dependence results in an over-approximation \cite{Jaulin2006}.
	
Complexity:	The calculation of the set $\mathcal{H}$ has complexity $\mathcal{O}(ph)$, and the construction of the zonotope is $\mathcal{O}(nh)$ in the worst-case where all exponents are exclusively even, resulting in an overall complexity of $\mathcal{O}(ph) + \mathcal{O}(nh)$, which is $\mathcal{O}(n^2)$ using \eqref{eq:complexity}. 
\end{proof}

The enclosing zonotope for the \SPZl in \eqref{eq:polyZonotopeEnclose} calculated according to Prop.~\ref{prop:zonoEnclose} is visualized in Fig.~\ref{fig:enclosure} (left).


\subsubsection{Polytope}

Next, we show how to enclose an \SPZl with a polytope:

\begin{proposition}
	(Polytope) Given an \SPZl $\mathcal{PZ} = \langle G,G_I,E,\ID \rangle_{PZ}$, the operation \operator{poly} returns a polytope that encloses $\mathcal{PZ}$ with complexity $\mathcal{O}(2^{n^2})$:
	\begin{equation*}
		\operator{poly}(\mathcal{PZ}) = \langle [v_1 ~ \dots ~v_r] \rangle_P,
	\end{equation*}
	where the vertex-representation of the polytope $\langle [v_1 ~ \dots ~v_r] \rangle_P$ is computed by applying \cite[Alg.~2]{Kochdumper2019a} to the \SPZ
	\begin{equation}
		\begin{split}
			& \overline{\mathcal{PZ}} = \big \langle [c_z~\overline{G}_{(\cdot,\mathcal{K})}],[G_I~G_z],[\mathbf{0}^{(n,1)}  ~ \overline{E}_{(\cdot,\mathcal{K})}],id \big \rangle_{PZ} \\
			& ~ \\
			& \text{with}~ \mathcal{H} = \big \{ i ~\big |~ \exists j \in \{ 1,\dots,p \}: ~ E_{(j,i)}  > 1 \big \}, \\
			& ~~~~~~ \mathcal{K} = \{1,\dots,h \} \setminus \mathcal{H} \\
			& ~~~~~~ \langle c_z,G_z \rangle_Z = \operator{zono} \big( \langle G_{(\cdot,\mathcal{H})},[~],E_{(\cdot,\mathcal{H})}, id \rangle_{PZ} \big)
		\end{split}
		\label{eq:polyEnclose2}
	\end{equation}
	\label{prop:polyEnclose}
\end{proposition}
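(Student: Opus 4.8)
The plan is to prove the inclusion in two stages. First I would show that the auxiliary \SPZl $\overline{\mathcal{PZ}}$ defined in \eqref{eq:polyEnclose2} encloses $\mathcal{PZ}$; then I would show that $\overline{\mathcal{PZ}}$, which by its construction contains only monomials all of whose exponents lie in $\{0,1\}$, is in turn enclosed by the polytope $\langle[v_1~\dots~v_r]\rangle_P$ returned by \cite[Alg.~2]{Kochdumper2019a}. Chaining the two inclusions gives $\mathcal{PZ}\subseteq\langle[v_1~\dots~v_r]\rangle_P$, which is the claim.

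For the first stage I would partition the dependent-generator index set $\{1,\dots,h\}$ of $\mathcal{PZ}$ into $\mathcal{H}$ and $\mathcal{K}=\{1,\dots,h\}\setminus\mathcal{H}$ as in \eqref{eq:polyEnclose2}, and decompose each element of $\mathcal{PZ}$ into the partial sum over $\mathcal{H}$, the partial sum over $\mathcal{K}$, and the independent part $\sum_j\beta_jG_{I(\cdot,j)}$. For any fixed value of the dependent factors $\alpha$, the $\mathcal{H}$-part is a point of the sub-\SPZl $\langle G_{(\cdot,\mathcal{H})},[~],E_{(\cdot,\mathcal{H})},id\rangle_{PZ}$, which by Prop.~\ref{prop:zonoEnclose} is contained in the zonotope $\langle c_z,G_z\rangle_Z$, so that part can be written as $c_z+\sum_j\beta_j'G_{z(\cdot,j)}$ for suitable $\beta_j'\in[-1,1]$. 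Substituting, treating $c_z$ as the coefficient of the appended all-zero exponent column, and leaving the $\mathcal{K}$-part and the original independent generators $G_I$ untouched, shows that every point of $\mathcal{PZ}$ is a point of $\overline{\mathcal{PZ}}$; the final \operator{compact} step (yielding $\overline{G},\overline{E}$) does not alter the represented set by Prop.~\ref{prop:compact}. The only delicate point is that the factors $\beta_j'$ may depend on $\alpha$, which is harmless because the argument is carried out at the level of sets — this is the same ``relaxing dependence over-approximates'' reasoning already used in Prop.~\ref{prop:zonoEnclose}.

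For the second stage I would exploit that, by the definition of $\mathcal{K}$, every column of the exponent matrix $[\mathbf{0}~\overline{E}_{(\cdot,\mathcal{K})}]$ has all entries in $\{0,1\}$; hence each coordinate of $\overline{\mathcal{PZ}}$, regarded as a function of its $\bar p$ dependent and $\bar q$ independent factors, is a sum of products of distinct factors each to the first power together with affine independent terms, i.e.\ a multilinear (multi-affine) function on $[-1,1]^{\bar p+\bar q}$. The key fact I would invoke is that every multilinear map $f:[-1,1]^m\to\mathbb{R}^n$ satisfies $f(x)=\sum_{v\in\{-1,1\}^m}\lambda_v(x)\,f(v)$ with nonnegative interpolation weights $\lambda_v(x)=\prod_{i:v_i=1}\tfrac{1+x_i}{2}\prod_{i:v_i=-1}\tfrac{1-x_i}{2}$ that sum to one, so $f([-1,1]^m)\subseteq\operatorname{conv}\{f(v):v\in\{-1,1\}^m\}$, a polytope with at most $2^m$ vertices. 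Since \cite[Alg.~2]{Kochdumper2019a} returns precisely (the vertex representation of) the convex hull of the $2^{\bar p+\bar q}$ evaluations of $\overline{\mathcal{PZ}}$ at the $\pm1$ combinations of its factors, this yields $\overline{\mathcal{PZ}}\subseteq\langle[v_1~\dots~v_r]\rangle_P$. I expect this multilinearity observation to be the main obstacle, since it is what legitimises enclosing a polynomial zonotope with all exponents in $\{0,1\}$ by a finite convex hull even though such a set need not be convex (a bowtie-shaped example already shows it can fail to be a polytope).

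For the complexity estimate I would note $\bar p\le p$ and $\bar q\le q+|\mathcal{H}|\le q+h$, so $\bar p+\bar q=\mathcal{O}(n)$ by \eqref{eq:complexity}. Forming $\langle c_z,G_z\rangle_Z$ costs $\mathcal{O}(n^2)$ (Prop.~\ref{prop:zonoEnclose}) and \operator{compact} costs $\mathcal{O}(n^2\log n)$ (Prop.~\ref{prop:compact}), so the dominant cost is the call to \cite[Alg.~2]{Kochdumper2019a} on an \SPZl with $\mathcal{O}(n)$ factors: the $2^{\mathcal{O}(n)}$ candidate vertices in $\mathbb{R}^n$ can in the worst case give rise to $2^{\mathcal{O}(n^2)}$ facets during the polytope construction, which is where the bound $\mathcal{O}(2^{n^2})$ comes from; for this last step I would cite the complexity analysis of \cite[Alg.~2]{Kochdumper2019a} rather than rederive the convex-hull bound.
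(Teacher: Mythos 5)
Your proposal is correct and follows the same route as the paper: split $\mathcal{PZ}$ into the part whose exponent-matrix columns contain only zeros and ones and a remainder part, enclose the remainder by a zonotope via Prop.~\ref{prop:zonoEnclose}, and pass the combined set $\overline{\mathcal{PZ}}$ to \cite[Alg.~2]{Kochdumper2019a}, whose stated complexity $\mathcal{O}((2^p)^{\lfloor n/2 \rfloor + 1} + 4^p(p+n))$ yields the $\mathcal{O}(2^{n^2})$ bound. The only difference is that the paper treats the correctness of \cite[Alg.~2]{Kochdumper2019a} on zero/one-exponent sets as a black box, whereas you additionally supply the multilinear-interpolation argument that justifies it --- a sound and welcome addition, though not required by the paper's own proof.
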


\begin{proof}
Alg.~2 in \cite{Kochdumper2019a} computes an enclosing polytope for an \SPZl for which the exponent matrix has only zeros or ones as entries. We therefore first split $\mathcal{PZ}$ into one part $\langle G_{(\cdot,\mathcal{K})},G_I,E_{(\cdot,\mathcal{K})},id \rangle_{PZ}$ with only zeros or ones in the exponent matrix, and one remainder part $\langle G_{(\cdot,\mathcal{H})},[~],E_{(\cdot,\mathcal{H})},id \rangle_{PZ}$. In order to remove exponents that are greater than one the remainder part is enclosed by a zonotope (see \eqref{eq:polyEnclose2}) using Prop.~\ref{prop:zonoEnclose}. Combination of the two parts finally yields the \SPZl $\overline{\mathcal{PZ}}$ which satisfies
	\begin{equation*}
		\mathcal{PZ} \subseteq \overline{\mathcal{PZ}} \overset{\text{\cite[Alg.~2]{Kochdumper2019a}}}{\subseteq} \langle [v_1 ~ \dots ~v_r] \rangle_P.
	\end{equation*}
	
Complexity:	The calculation of the sets $\mathcal{H}$ and $\mathcal{K}$ in \eqref{eq:polyEnclose2} has complexity $\mathcal{O}(ph)$, and the computation of an enclosing zonotope has according to Prop.~\ref{prop:zonoEnclose} complexity $\mathcal{O}(n^2)$. According to \cite[Prop.~7]{Kochdumper2019a} the complexity of \cite[Alg.~2]{Kochdumper2019a} is $\mathcal{O}((2^p)^{\lfloor n/2 \rfloor + 1} + 4^p (p + n))$. Using \eqref{eq:complexity}, the overall complexity is therefore $\mathcal{O}(2^{n^2})$.
\end{proof}

The enclosing polytope for the \SPZl in \eqref{eq:polyZonotopeEnclose} calculated according to Prop.~\ref{prop:polyEnclose} is visualized in Fig.~\ref{fig:enclosure} (middle).


\subsubsection{Support Function, Interval, and Template Polyhedra}
\label{subsubsec:supportFunc}

Let us first derive the support function of an \SPZ.

\begin{definition}
	(Support Function) \cite[Def. 1]{Girard2008b} Given a set $\mathcal{S} \subset \mathbb{R}^n$ and a direction $d \in \mathbb{R}^n$, the support function $s_{\mathcal{S}}: \mathbb{R}^n \to \mathbb{R}$ of $\mathcal{S}$ is defined as 
	\begin{equation*}
		s_{\mathcal{S}}(d) = \max\limits_{x \in \mathcal{S}} ~ d^T x.
	\end{equation*} 
\end{definition}

If $\mathcal{S}$ is convex, its support function is an exact representation; otherwise, an over-approximation is returned. Since \SPZsl are non-convex in general, one can only over-approxi\-mate them by support functions. To compute support functions for \SPZs, we need to introduce the range bounding operation. Given a function $f: \mathbb{R}^{m} \to \mathbb{R}$ and an interval $\mathcal{I} \subset \mathbb{R}^m$, the range bounding operation
	\begin{equation*}
		\operator{B}(f(x),\mathcal{I}) \supseteq \left[\min\limits_{x \in \mathcal{I}} f(x),~ \max\limits_{x \in \mathcal{I}} f(x)\right]
	\end{equation*}
returns an over-approximation of the exact bounds. 

\begin{proposition}
	(Support Function) An \SPZl $\mathcal{PZ} = \langle G,G_I,E,\ID \rangle_{PZ}$ is over-approximated by the support function
	\begin{equation*}
		\begin{split}
			& \widehat{s}_{\mathcal{PZ}}(d) = u + \sum_{j=1}^q \left| \overline{g}_{I(j)} \right| \\
			& ~ \\
			& \text{with} ~~ \langle \overline{g},\overline{g}_I, E, \ID \rangle_{PZ} := d^T \otimes \mathcal{PZ}, \\
			& ~~~~~~~~ [l,u] = \operator{B} \big( w(\alpha_1, \dots , \alpha_{p}),[-\mathbf{1}^{(n,1)},\mathbf{1}^{(n,1)}]\big), \\
			& ~~~~~~~~ w(\alpha_1, \dots , \alpha_{p}) = \sum _{i=1}^{h} \left( \prod _{k=1}^{p} \alpha _k ^{E_{(k,i)}} \right) \overline{g}_{(i)},
		\end{split}
	\end{equation*}
where the vector $d \in \mathbb{R}^n$ specifies the direction. The calculation of $\widehat{s}_{\mathcal{PZ}}(d)$ has complexity $\mathcal{O}(n^2) + \mathcal{O}(\operator{B})$, where $\mathcal{O}(\operator{B})$ denotes the computational complexity of the range bounding operation.
\label{prop:bound}
\end{proposition}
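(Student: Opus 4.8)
The plan is to expand $d^T x$ for an arbitrary point $x \in \mathcal{PZ}$, separate the contribution of the independent factors from that of the dependent factors, and bound the two pieces independently.

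First I would rewrite $d^T \otimes \mathcal{PZ}$ in the \SPZl form used in the statement: from the definition of $\otimes$ and \eqref{eq:polyZonotope}, left-multiplication with the row vector $d^T$ acts only on the generators, so $d^T \otimes \mathcal{PZ} = \langle d^T G,\, d^T G_I,\, E,\, \ID \rangle_{PZ}$, and with $\overline g := d^T G$ and $\overline g_I := d^T G_I$ every $x \in \mathcal{PZ}$ satisfies $d^T x = \sum_{i=1}^h \big( \prod_{k=1}^p \alpha_k^{E_{(k,i)}} \big) \overline g_{(i)} + \sum_{j=1}^q \beta_j\, \overline g_{I(j)} = w(\alpha_1,\dots,\alpha_p) + \sum_{j=1}^q \beta_j\, \overline g_{I(j)}$ for suitable $\alpha_k,\beta_j \in [-1,1]$; conversely, every such choice of factors yields a point of $\mathcal{PZ}$. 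Hence $s_{\mathcal{PZ}}(d) = \max_{x \in \mathcal{PZ}} d^T x$ equals the maximum of the right-hand side over the box $[-1,1]^{p+q}$.

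Next I would use that the dependent factors $\alpha_k$ and the independent factors $\beta_j$ range over disjoint boxes, so this maximization decouples. The $\beta$-part is linear, giving $\max_{\beta \in [-1,1]^q}\sum_{j=1}^q \beta_j\,\overline g_{I(j)} = \sum_{j=1}^q |\overline g_{I(j)}|$ (attained at $\beta_j = \operatorname{sign}(\overline g_{I(j)})$). For the $\alpha$-part, $\max_{\alpha \in [-1,1]^p} w(\alpha) \le u$, since by definition the range bounding operation returns an interval $[l,u] \supseteq [\min w,\max w]$. Adding the two bounds yields $s_{\mathcal{PZ}}(d) \le u + \sum_{j=1}^q |\overline g_{I(j)}| = \widehat s_{\mathcal{PZ}}(d)$, the claimed over-approximation; the only slack relative to the exact support-function value comes from the over-approximation inside $\operator{B}$, so $\widehat s_{\mathcal{PZ}}$ would in fact be exact if the range bounding were.

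For the complexity I would observe that forming $d^T \otimes \mathcal{PZ}$ costs $\mathcal{O}(nh) + \mathcal{O}(nq) = \mathcal{O}(n^2)$ by \eqref{eq:complexity}, summing $\sum_{j=1}^q |\overline g_{I(j)}|$ costs $\mathcal{O}(q) = \mathcal{O}(n)$, and the single invocation of $\operator{B}$ costs $\mathcal{O}(\operator{B})$; everything else is cheaper, so the total is $\mathcal{O}(n^2) + \mathcal{O}(\operator{B})$. I do not expect a genuine obstacle here — the result follows directly from the definitions — so the only point needing care is the bookkeeping: making explicit that $\mathcal{PZ}$ is exactly the image of the box $[-1,1]^{p+q}$ under the defining map of \eqref{eq:polyZonotope}, which is what lets the $\alpha$- and $\beta$-maxima be taken separately.
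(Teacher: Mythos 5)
Your proposal is correct and follows essentially the same route as the paper: project onto $d$, split the result into the dependent part (bounded by $u$ via \texttt{B}) and the independent part (whose maximum $\sum_{j=1}^q |\overline{g}_{I(j)}|$ is exact), and add the two bounds, with the identical complexity accounting $\mathcal{O}(nh)+\mathcal{O}(nq)=\mathcal{O}(n^2)$ plus $\mathcal{O}(\texttt{B})$.
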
  

\begin{proof}
	We first project the \SPZl onto the direction $d$, and then divide the one-dimensional projected \SPZl into one part with dependent generators and one with independent generators: 
\begin{equation*}
	\begin{split}
		d^T \otimes \mathcal{PZ} = & \underbrace{\bigg \{ \sum_{i=1}^h \bigg( \prod _{k=1}^p \alpha _k ^{E_{(k,i)}} \bigg) \overline{g}_{(i)} ~\bigg| ~ \alpha_k \in [-1,1] \bigg \} }_{\subseteq [l,u] ~ \text{(dependent}~\text{part)}} \\
		& \oplus \underbrace{ \bigg \{ \sum_{j=1}^q \beta_j \overline{g}_{I(j)}  ~\bigg | ~ \beta_j \in [-1,1] \bigg \}}_{\overset{\equiv \big [ -\sum_{j=1}^q \left| \overline{g}_{I(j)} \right|,~ \sum_{j=1}^q \left| \overline{g}_{I(j)} \right| \big ]}{\overset{~}{\text{(independent}~\text{part)}}}}.
	\end{split}
\end{equation*}	
The bounds for the independent part calculated by the sum of absolute values are exact \cite[Sec. 2]{Girard2008b}. However, the lower bound $l$ and the upper bound $u$ of the dependent part are over-approximative since the range bounding operation \operator{B} returns an over-approximation, so that $\widehat{s}_{\mathcal{PZ}}(d)$ over-approximates $\mathcal{PZ}$. 

Complexity: The calculation of the projection onto $d$ has a complexity of $\mathcal{O}(nh) + \mathcal{O}(nq)$ (see Prop.~\ref{prop:multiplication}), which results in an overall complexity of $\mathcal{O}(n^2) + \mathcal{O}(\operator{B})$ using \eqref{eq:complexity} since all other operations have linear complexity. 
\end{proof}

Note that the tightness of $\widehat{s}_{\mathcal{PZ}}(d)$ solely depends on the tightness of the bounds of the function $w(\cdot)$ obtained by one of the range bounding techniques, e.g., interval arithmetic \cite{Jaulin2006}, Bernstein polynomials \cite{Cargo1966}, and verified global optimization \cite{Makino2005}. A comparison of different techniques can be found in \cite{Althoff2018}. 

A template polyhedron enclosing an \SPZl can easily be constructed by evaluating the support function $\widehat{s}_\mathcal{PZ}(d)$ for a discrete set of directions $\mathcal{D} = \{d_1, \dots , d_r \},~ d_i \in \mathbb{R}^n,~i = 1 \dots r$. The over-approximation with an interval represents a special case where $\mathcal{D} = \{ I_{n(\cdot,1)}, \linebreak[1] \dots, \linebreak[1] I_{n(\cdot,n)},\linebreak[1] -I_{n(\cdot,1)},\linebreak[1] \dots, \linebreak[1] -I_{n(\cdot,n)} \}$. The enclosing interval for the \SPZl in \eqref{eq:polyZonotopeEnclose} calculated by using Bernstein polynomials for range bounding is visualized in Fig.~\ref{fig:enclosure} (right).


\subsection{Basic Set Operations}
\label{sec:basicSetOperations}

This subsection derives basic operations on \SPZs.


\subsubsection{Multiplication with a Matrix}
\label{subsubsec:MulMatrix}

The left-multiplication with a matrix is obtained as:

\begin{proposition}
	(Multiplication) Given an \SPZl $\mathcal{PZ} = \langle G, \linebreak[1] G_I,E,\ID \rangle_{PZ} \subset \mathbb{R}^n$ and a matrix $M \in \mathbb{R}^{m \times n}$, the left-multiplication is computed as 
	\begin{equation*}
		M \otimes \mathcal{PZ} = \left\langle MG, MG_I, E, \ID \right\rangle_{PZ},
	\end{equation*}
	which has complexity $\mathcal{O}(mn^2)$.
	\label{prop:multiplication}
\end{proposition}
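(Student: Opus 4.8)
The plan is to show that left-multiplication by a matrix distributes over the defining sum in \eqref{eq:polyZonotope} and commutes with the scalar monomial coefficients, so that the resulting set is again of the form \eqref{eq:polyZonotope} with the dependent generator matrix $G$ replaced by $MG$ and the independent generator matrix $G_I$ replaced by $MG_I$, while the exponent matrix $E$ and the identifier vector $\ID$ are unchanged. Concretely, I would start from the definition of the left-multiplication of a matrix with a set, $M \otimes \mathcal{PZ} = \{ Mx \mid x \in \mathcal{PZ} \}$, substitute the explicit representation of a generic point $x \in \mathcal{PZ}$ from \eqref{eq:polyZonotope}, and then push $M$ through using linearity: $M\big(\sum_{i=1}^h (\prod_{k=1}^p \alpha_k^{E_{(k,i)}}) G_{(\cdot,i)} + \sum_{j=1}^q \beta_j G_{I(\cdot,j)}\big) = \sum_{i=1}^h (\prod_{k=1}^p \alpha_k^{E_{(k,i)}}) (M G_{(\cdot,i)}) + \sum_{j=1}^q \beta_j (M G_{I(\cdot,j)})$, where I use that the scalars $\prod_k \alpha_k^{E_{(k,i)}}$ and $\beta_j$ commute with $M$ and that $M G_{(\cdot,i)} = (MG)_{(\cdot,i)}$, $M G_{I(\cdot,j)} = (MG_I)_{(\cdot,j)}$. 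Since the factors $\alpha_k,\beta_j$ still range independently over $[-1,1]$ and the exponents are untouched, the right-hand side is exactly the set $\langle MG, MG_I, E, \ID\rangle_{PZ}$; the identifier vector is preserved because the dependent factors themselves are unchanged.

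For the complexity claim, I would observe that the only nontrivial computation is forming the two matrix products $MG \in \mathbb{R}^{m\times h}$ and $MG_I \in \mathbb{R}^{m\times q}$ (the exponent matrix and identifier vector are copied, which is a concatenation/initialization and hence not counted per the conventions in Sec.~\ref{subsec:notation}). Computing $MG$ with $M\in\mathbb{R}^{m\times n}$ and $G\in\mathbb{R}^{n\times h}$ costs $\mathcal{O}(mnh)$ scalar multiplications and additions, and similarly $MG_I$ costs $\mathcal{O}(mnq)$. Using the standing assumptions $h = c_h n$ and $q = c_q n$ from \eqref{eq:complexity}, both terms are $\mathcal{O}(mn^2)$, so the overall complexity is $\mathcal{O}(mn^2)$ as stated.

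I do not anticipate a genuine obstacle here: the statement is essentially the observation that $M\otimes(\cdot)$ is a linear map and that \SPZsl are built by affine-linear combinations of their generators with nonlinear-in-the-factors but linear-in-the-generators coefficients. The only point requiring a modicum of care is making explicit that the coefficients $\prod_k \alpha_k^{E_{(k,i)}}$ are scalars, so that $M\big((\prod_k\alpha_k^{E_{(k,i)}})G_{(\cdot,i)}\big) = (\prod_k\alpha_k^{E_{(k,i)}})\,MG_{(\cdot,i)}$, and to confirm that no \operator{compact} step is needed because distinct columns of $G$ may map to equal columns of $MG$ but that only changes the representation, not the set. If desired one could append a remark that applying \operator{compact} afterwards yields a smaller representation, but it is not required for correctness of the proposition.
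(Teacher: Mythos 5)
Your proposal is correct and follows the same route as the paper: insert the explicit representation from \eqref{eq:polyZonotope} into the definition of $M \otimes \mathcal{S}$, push $M$ through by linearity, and bound the cost of the two matrix products by $\mathcal{O}(mnh) + \mathcal{O}(mnq) = \mathcal{O}(mn^2)$ using \eqref{eq:complexity}. The paper's proof is just a terser version of the same argument, so no further comment is needed.
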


\begin{proof}
	The result follows directly from inserting the definition of \SPZsl in \eqref{eq:polyZonotope} into the definition of the operator $\otimes$ (see Sec. \ref{subsec:notation}). 
	
Complexity: The complexity results from the complexity of matrix multiplications and is therefore $\mathcal{O}(mnh) + \mathcal{O}(mnq) = \mathcal{O}(mn^2)$ using \eqref{eq:complexity}.
\end{proof}


\subsubsection{Minkowski Addition}

Even though every zonotope can be represented as an \SPZ, we provide a separate definition for the Minkowski addition of an \SPZl and a zonotope for computational reasons.
\begin{proposition}
	(Addition) Given two \SPZs, $\mathcal{PZ}_1 = \langle G_1, \linebreak[1] G_{I,1}, E_1, \ID_1 \rangle_{PZ}$ and $\mathcal{PZ}_2 = \langle G_2, G_{I,2},$ $E_2, \ID_2 \rangle_{PZ}$, as well as a zonotope $\mathcal{Z} = \langle c_z,G_z \rangle_{Z}$, their Minkowski sum is defined as
	\begin{equation}
		\begin{split}
			\mathcal{PZ}_1 & \oplus \mathcal{PZ}_2 = \\
			 \big\langle & \begin{bmatrix} G_1 & G_2 \end{bmatrix}, \begin{bmatrix} G_{I,1} & G_{I,2} \end{bmatrix}, \\ & \begin{bmatrix} E_1 & \mathbf{0}^{(p_1,h_2)}  \\ \mathbf{0}^{(p_2,h_1)}  & E_2 \end{bmatrix} , \operator{uniqueID}(p_1 + p_2) \big\rangle_{PZ},
		\end{split}
		\label{eq:addition}
	\end{equation}	
	\begin{equation}
		\begin{split}
			\mathcal{PZ}_1 \oplus \mathcal{Z} = \big\langle & \begin{bmatrix} c_z & G_1 \end{bmatrix}, \begin{bmatrix} G_{I,1} & G_z \end{bmatrix}, \\
			 & \begin{bmatrix} \mathbf{0}^{(n,1)} & E_1 \end{bmatrix}, \ID_1 \big\rangle_{PZ},~~~~~~~~~~~~~~~~~
		\end{split}
		\label{eq:additionZonotope}
	\end{equation}
	where \eqref{eq:addition} has complexity $\mathcal{O}(n)$ and \eqref{eq:additionZonotope} has complexity $\mathcal{O}(1)$.  
	\label{prop:addition}
\end{proposition}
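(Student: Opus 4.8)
The plan is to verify both identities in \eqref{eq:addition} and \eqref{eq:additionZonotope} by direct substitution into the defining equation \eqref{eq:polyZonotope}, and then to read off the complexities from the matrix sizes involved. For the general case \eqref{eq:addition}, I would start from the Minkowski sum definition $\mathcal{PZ}_1 \oplus \mathcal{PZ}_2 = \{ z_1 + z_2 \mid z_1 \in \mathcal{PZ}_1,\, z_2 \in \mathcal{PZ}_2 \}$, expand each $z_i$ using \eqref{eq:polyZonotope} with its own dependent factors $\alpha^{(1)}_k$ and $\alpha^{(2)}_k$, and observe that since the two \SPZsl use disjoint identifier sets (enforced here by assigning fresh identifiers via $\operator{uniqueID}(p_1+p_2)$), the stacked factor vector $(\alpha^{(1)}_1,\dots,\alpha^{(1)}_{p_1},\alpha^{(2)}_1,\dots,\alpha^{(2)}_{p_2})$ ranges over $[-1,1]^{p_1+p_2}$ freely. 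The key bookkeeping step is checking that the block exponent matrix $\left[\begin{smallmatrix} E_1 & \mathbf{0} \\ \mathbf{0} & E_2 \end{smallmatrix}\right]$ reproduces exactly the right monomials: the zero blocks ensure that the generators coming from $\mathcal{PZ}_1$ carry exponent zero in the new factors $\alpha^{(2)}_k$ (so those factors do not appear), and symmetrically for $\mathcal{PZ}_2$, so that $\prod_k \alpha_k^{E_{(k,i)}}$ evaluated on the stacked factors equals the original product for each generator. The independent generators simply concatenate, since independent factors are by definition unshared. Hence the substitution yields precisely $\mathcal{PZ}_1 \oplus \mathcal{PZ}_2$.

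For the zonotope case \eqref{eq:additionZonotope}, the argument is the same but simpler: write $\mathcal{Z} = \langle c_z, G_z\rangle_Z$ as $\{ c_z + \sum_j \beta_j G_{z(\cdot,j)} \mid \beta_j \in [-1,1]\}$, note that the center $c_z$ becomes a constant dependent generator with the all-zero exponent column $\mathbf{0}^{(n,1)}$ prepended to $E_1$ (consistent with how constants are encoded in \eqref{eq:polyZonotope}), and that the zonotope generators $G_z$ can be absorbed directly into the independent-generator block, since each $\beta_j$ only multiplies a single generator. No new dependent factors are introduced, so the identifier vector remains $\ID_1$ and no merging is needed. The sum $z_1 + (c_z + \sum_j \beta_j G_{z(\cdot,j)})$ then matches the claimed \SPZl term by term.

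For the complexity bounds, I would appeal to \eqref{eq:complexity}: in \eqref{eq:addition} the construction consists entirely of concatenations of existing matrices, which by the convention in Sec.~\ref{subsec:notation} are not counted, except for the call $\operator{uniqueID}(p_1+p_2)$, which generates $p_1 + p_2 = (c_{p,1}+c_{p,2})n = \mathcal{O}(n)$ identifiers; hence $\mathcal{O}(n)$. In \eqref{eq:additionZonotope} everything is a concatenation and the identifier vector is reused unchanged, so the cost is $\mathcal{O}(1)$. I do not expect a genuine obstacle here; the only point requiring care is making the identifier/dependency argument precise — specifically, justifying that assigning fresh identifiers to the combined factor set is what makes the factors of $\mathcal{PZ}_1$ and $\mathcal{PZ}_2$ independent in the result (which is correct precisely because Minkowski sum treats the two sets' parametrizations as independent), and contrasting this with operations like intersection where such merging would be invalid. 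This is a modeling subtlety rather than a computational difficulty, so the proof remains short.
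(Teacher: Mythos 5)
Your proposal is correct and follows essentially the same route as the paper: direct substitution of the \SPZl and zonotope definitions into the definition of the Minkowski sum, the observation that fresh identifiers are appropriate because the Minkowski sum by definition decouples the two parametrizations, and the same complexity accounting (concatenations are free, $\operator{uniqueID}(p_1+p_2)$ costs $\mathcal{O}(p_1+p_2)=\mathcal{O}(n)$ by \eqref{eq:complexity}). Your write-up is merely more explicit about the block-exponent-matrix bookkeeping than the paper's one-line argument.
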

\begin{proof}
	The result is obtained by inserting the definition of zonotopes \eqref{eq:zonotope} and \SPZsl \eqref{eq:polyZonotope} into the definition of the Minkowski sum (see Sec. \ref{subsec:notation}). For two \SPZs, we generate new identifiers for all factors since the Minkowski sum per definition removes all dependencies between the two added sets.

Complexity: The construction of the resulting \SPZsl only involves concatenations and therefore has complexity $\mathcal{O}(1)$. For two \SPZs, $p_1 + p_2$ unique identifiers have to be generated, resulting in the complexity $\mathcal{O}(p_1 + p_2)$, which is $\mathcal{O}(n)$ using \eqref{eq:complexity}.
\end{proof}


\subsubsection{Exact Addition}

By computation of the Minkowski sum of two \SPZs, $\mathcal{PZ}_1$ and $\mathcal{PZ}_2$, as defined in \eqref{eq:addition}, possible dependencies between $\mathcal{PZ}_1$ and $\mathcal{PZ}_2$ due to common dependent factors get lost. We therefore introduce the exact addition $\mathcal{PZ}_1 \boxplus \mathcal{PZ}_2$ of two \SPZs, which explicitly considers the dependencies between $\mathcal{PZ}_1$ and $\mathcal{PZ}_2$. To bring the exponent matrices to a common representation, we apply \operator{mergeID} prior to the computation.
\begin{proposition}
	(Exact Addition) Given two \SPZs, $\mathcal{PZ}_1 = \langle G_1, \linebreak[1] G_{I,1}, E_1, \ID \rangle_{PZ}$ and $\mathcal{PZ}_2 = \langle G_2, G_{I,2},$ $E_2, \ID \rangle_{PZ}$ with a common identifier vector $\ID$, their exact addition is defined as
	\begin{equation*}
		\begin{split}
			\mathcal{PZ}_1 \boxplus \mathcal{PZ}_2 = \big\langle & \begin{bmatrix} G_1 & G_2 \end{bmatrix}, \begin{bmatrix} G_{I,1} & G_{I,2} \end{bmatrix}, \\ & \begin{bmatrix} E_1 & E_2 \end{bmatrix} , \ID \big\rangle_{PZ},
		\end{split}
	\end{equation*}
	which has complexity $\mathcal{O}(n^2 \log(n))$. The \operator{compact} operation is applied to remove monomials with an identical variable part.	
	\label{prop:exactAddition}
\end{proposition}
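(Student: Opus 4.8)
The plan is to establish two things: that the representation $\langle [G_1~G_2],[G_{I,1}~G_{I,2}],[E_1~E_2],\ID\rangle_{PZ}$ is a syntactically valid \SPZl equal to the intended exact sum, and that it can be built with complexity $\mathcal{O}(n^2\log(n))$. Since the statement essentially \emph{defines} $\boxplus$, the real content is to show that this formula coincides with the set obtained by adding a point of $\mathcal{PZ}_1$ to a point of $\mathcal{PZ}_2$ while forcing every dependent factor that carries the same identifier to take the same value; this is exactly what distinguishes $\boxplus$ from the Minkowski sum \eqref{eq:addition}, where fresh identifiers are generated instead.

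First I would use the hypothesis that the two operands share the identifier vector $\ID$ --- which can always be arranged beforehand via \operator{mergeID} (Prop.~\ref{prop:mergeID}) without altering either set --- so that both exponent matrices have the same number of rows $p=|\ID|$ and the dependent factors $\alpha_1,\dots,\alpha_p$ are literally the same symbols in $\mathcal{PZ}_1$ and $\mathcal{PZ}_2$. Then I would insert the definition \eqref{eq:polyZonotope} of $\mathcal{PZ}_1$ and $\mathcal{PZ}_2$ into the expression $x_1+x_2$ with shared $\alpha_k$ but separate independent factors $\beta^{(1)}_j,\beta^{(2)}_j$, and regroup: the two dependent-generator sums merge into a single sum of $h_1+h_2$ monomials over $[G_1~G_2]$ with exponent matrix $[E_1~E_2]$, and the two independent-generator sums merge into one sum over $[G_{I,1}~G_{I,2}]$ with unconstrained factors (legitimate, since independent factors are by definition never shared and each lies in $[-1,1]$). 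Comparison with \eqref{eq:polyZonotope} then gives precisely the claimed \SPZ. The subsequent \operator{compact} step is justified verbatim by Prop.~\ref{prop:compact}: summing the generators of monomials whose variable parts --- equivalently, columns of $[E_1~E_2]$ --- coincide leaves the represented set unchanged and only shrinks the representation.

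For the complexity, I would note that forming the four matrices involves only concatenations, hence $\mathcal{O}(1)$ (and the preceding \operator{mergeID}, if counted at all, costs only $\mathcal{O}(n^2)$ by Prop.~\ref{prop:mergeID}). The dominant cost is the single \operator{compact} call, which by Prop.~\ref{prop:compact} has complexity $\mathcal{O}(\widehat{p}\,\widehat{h}\log(\widehat{h}))$ in the number of rows $\widehat{p}$ and columns $\widehat{h}$ of the exponent matrix it receives. Here $\widehat{h}=h_1+h_2$ and $\widehat{p}=p\le p_1+p_2$ (after \operator{mergeID}, $p=|\mathcal{H}|+p_1\le p_1+p_2$), both $\mathcal{O}(n)$ by \eqref{eq:complexity}, so this is $\mathcal{O}(n^2\log(n))$, which is therefore the overall complexity.

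I do not expect a genuine obstacle here; the computation is a routine unrolling of \eqref{eq:polyZonotope}. The one point that needs care is pinning down the semantics --- making explicit that $\boxplus$ keeps the dependent factors coupled across the two sets (exactly the content of the common-$\ID$ hypothesis) while still treating their independent factors as distinct --- and confirming that \operator{mergeID} lets us assume a common $\ID$ without loss of generality. A minor secondary check is that $p$ remains $\mathcal{O}(n)$ after \operator{mergeID}, which it does, so that the \operator{compact} bound genuinely evaluates to $\mathcal{O}(n^2\log(n))$.
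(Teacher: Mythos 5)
Your proposal is correct and follows essentially the same route as the paper: the set identity is obtained by unrolling \eqref{eq:polyZonotope} for the sum with shared dependent factors (the paper phrases this as ``identical to Minkowski addition but with the common identifier vector''), and the complexity is dominated by the single \operator{compact} call at $\mathcal{O}(n^2\log(n))$ after an $\mathcal{O}(n^2)$ \operator{mergeID} and $\mathcal{O}(1)$ concatenations. Your version is merely more explicit about the semantics of keeping the $\alpha_k$ coupled while the independent factors remain separate, which matches the paper's intent.
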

\begin{proof}
	The result is identical to the one for Minkowski addition of two \SPZsl in \eqref{eq:addition}, with the difference that the common identifier vector $\ID$ is used instead of newly generated unique identifiers.
	
	Complexity: Merging the identifier vectors has complexity $\mathcal{O}(n^2)$ (see Prop.~\ref{prop:mergeID}). The construction of the resulting \SPZl only involves concatenations and therefore has complexity $\mathcal{O}(1)$. Subsequent application of the \operator{compact} operation has complexity $\mathcal{O}(p_1(h_1 + h_2) \log(h_1 + h_2))$ (see Prop.~\ref{prop:compact}), which is $\mathcal{O}(n^2 \log(n))$ using \eqref{eq:complexity}. The overall complexity is therefore $\mathcal{O}(n^2) + \mathcal{O}(1) + \mathcal{O}(n^2 \log(n)) = \mathcal{O}(n^2 \log(n))$.
\end{proof}

As demonstrated later in Sec. \ref{sec:ReachabilityAnalysis}, using exact addition instead of the Minkowski sum for reachability analysis results in a significantly tighter enclosure of the reachable set.


\subsubsection{Cartesian Product}

Even though every zonotope can be represented as an \SPZ, we provide a separate definition for the Cartesian product of an \SPZl and a zonotope for computational reasons.
\begin{proposition}
	(Cartesian Product) Given two \SPZs, $\mathcal{PZ}_1 = \langle G_1, \linebreak[1] G_{I,1}, E_1, \ID_1 \rangle_{PZ} \subset \mathbb{R}^n$ and $\mathcal{PZ}_2 = \langle G_2, G_{I,2},$ $E_2, \ID_2 \rangle_{PZ}\subset \mathbb{R}^m$, as well as a zonotope $\mathcal{Z} = \langle c_z,G_z \rangle_{Z} \subset \mathbb{R}^m$, their Cartesian product is defined as
	\begin{equation}
		\begin{split}
			\mathcal{PZ}_1 & \times \mathcal{PZ}_2 = \\
			\bigg\langle & \begin{bmatrix} G_1 & \mathbf{0}^{(n,h_2)} \\ \mathbf{0}^{(m,h_1)} & G_2 \end{bmatrix}, \begin{bmatrix} G_{I,1} & \mathbf{0}^{(n,q_2)} \\ \mathbf{0}^{(m,q_1)} & G_{I,2} \end{bmatrix}, \\ & \begin{bmatrix} E_1 & \mathbf{0}^{(p_1,h_2)} \\ \mathbf{0}^{(p_2,h_1)} & E_2 \end{bmatrix} , \operator{uniqueID}(p_1 + p_2) \bigg\rangle_{PZ},
		\end{split}
		\label{eq:cartProd}
	\end{equation}	
	\begin{equation}
		\begin{split}
			\mathcal{PZ}_1 \times \mathcal{Z} = \bigg\langle & \begin{bmatrix} \mathbf{0}^{(n,1)} & G_1 \\ c_z &\mathbf{0}^{(m,h_1)} \end{bmatrix}, \begin{bmatrix} G_{I,1} & \mathbf{0}^{(n,l)} \\ \mathbf{0}^{(m,q_1)} & G_z \end{bmatrix}, \\
			 & \begin{bmatrix} \mathbf{0}^{(p_1,1)} & E_1 \end{bmatrix}, \ID_1 \bigg\rangle_{PZ},
		\end{split}
		\label{eq:cartProdZonotope}
	\end{equation}
	where \eqref{eq:cartProd} has complexity $\mathcal{O}(n)$ and \eqref{eq:cartProdZonotope} has complexity $\mathcal{O}(1)$.  
	\label{prop:cartProduct}
\end{proposition}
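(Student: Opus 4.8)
The plan is to mirror the proof of Minkowski addition (Prop.~\ref{prop:addition}): insert the definition of \SPZsl \eqref{eq:polyZonotope} (and of zonotopes \eqref{eq:zonotope}) into the definition of the Cartesian product from Sec.~\ref{subsec:notation} and read off the claimed block structure. For \eqref{eq:cartProd} I would take a generic point $[s_1~s_2]^T$ with $s_1 \in \mathcal{PZ}_1$ and $s_2 \in \mathcal{PZ}_2$, expand $s_1$ with dependent factors $\alpha^{(1)}_1,\dots,\alpha^{(1)}_{p_1}$ and independent factors $\beta^{(1)}_j$, and $s_2$ with $\alpha^{(2)}_1,\dots,\alpha^{(2)}_{p_2}$ and $\beta^{(2)}_j$. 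Stacking expresses $s_1$ as a sum of terms $[\,G_{1(\cdot,i)}~\mathbf{0}\,]^T$ weighted by monomials in $\alpha^{(1)}$ and $s_2$ as a sum of terms $[\,\mathbf{0}~G_{2(\cdot,j)}\,]^T$ weighted by monomials in $\alpha^{(2)}$, with the independent parts stacking analogously. Since $\alpha^{(1)}$ and $\alpha^{(2)}$ range over $[-1,1]$ independently of one another, the resulting set is exactly an \SPZl in $p_1+p_2$ dependent factors whose exponent matrix has $E_1$ in the top-left block and $E_2$ in the bottom-right block and zeros elsewhere, so that the $\mathcal{PZ}_1$-monomials do not involve the $\mathcal{PZ}_2$-factors and vice versa (using $\alpha^0 = 1$); as the Cartesian product carries no dependence between its arguments, fresh identifiers $\operator{uniqueID}(p_1+p_2)$ are assigned. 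This is precisely \eqref{eq:cartProd}.

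For \eqref{eq:cartProdZonotope} the only difference is that $\mathcal{Z} = \langle c_z,G_z \rangle_{Z}$ contributes a constant offset $c_z$ together with generators $G_z$ whose coefficients $\alpha_i \in [-1,1]$ carry no dependence. The offset $c_z$ is represented as a dependent generator with the all-zero exponent column $\mathbf{0}^{(p_1,1)}$ (since $\mathcal{PZ}_1$ already supplies all $p_1$ dependent factors), and $G_z$ becomes a block of independent generators stacked below $G_{I,1}$; because no new dependent factor is introduced, the identifier vector $\ID_1$ is retained, which gives \eqref{eq:cartProdZonotope}.

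For the complexity, in both cases the resulting \SPZl is assembled purely by concatenations and zero-padding, which are not counted (Sec.~\ref{subsec:notation}); for \eqref{eq:cartProd} one additionally generates $p_1+p_2$ unique identifiers at cost $\mathcal{O}(p_1+p_2) = \mathcal{O}(n)$ by \eqref{eq:complexity}, while \eqref{eq:cartProdZonotope} needs none, so it is $\mathcal{O}(1)$. I do not expect a genuine obstacle: the argument is pure bookkeeping, and the only points needing care are checking that the block-diagonal exponent matrix correctly decouples the two factor groups (equivalently, that padding an exponent matrix with zero rows leaves every monomial unchanged) and that reusing $\ID_1$ in the zonotope case is legitimate because the generators of $\mathcal{Z}$ are genuinely independent of everything in $\mathcal{PZ}_1$.
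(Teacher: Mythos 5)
Your proof is correct and follows the same route as the paper: insert the definitions of \SPZsl and zonotopes into the definition of the Cartesian product, read off the block structure, and note that the construction involves only concatenations plus the generation of $p_1+p_2$ unique identifiers (none in the zonotope case). The paper states this in one line; your version just spells out the bookkeeping explicitly.
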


\begin{proof}
	The result is obtained by inserting the definition of zonotopes \eqref{eq:zonotope} and \SPZsl \eqref{eq:polyZonotope} into the definition of the Cartesian product (see Sec. \ref{subsec:notation}).

Complexity: The construction of the resulting \SPZsl only involves concatenations and therefore has complexity $\mathcal{O}(1)$. For two \SPZs, $p_1 + p_2$ unique identifiers have to be generated, resulting in the complexity $\mathcal{O}(p_1 + p_2)$, which is $\mathcal{O}(n)$ using \eqref{eq:complexity}.
\end{proof}


\subsubsection{Quadratic Map}

For reachability analysis based on the conservative polynomialization approach \cite{Althoff2013a}, a polynomial abstraction of the nonlinear dynamic function is calculated, requiring quadratic and higher-order maps. Here, we derive the equations for the quadratic map.

\begin{definition}
	(Quadratic Map) \cite[Theorem 1]{Althoff2013a} Given a set $\mathcal{S} \subset \mathbb{R}^n$ and a discrete set of matrices $Q_{i} \in \mathbb{R}^{n \times n}, i = 1 \dots m$, the quadratic map of $\mathcal{S}$ is defined as 
	\begin{equation*}
		\operator{sq}(Q,\mathcal{S}) = \left\{ x \left|~ x_{(i)} = s^T Q_{i} s,~ s \in \mathcal{S},~i = 1 \dots m \right. \right\}.
	\end{equation*}
	\label{def:quadMap}
\end{definition}

For \SPZs, we first consider the special case without independent generators, and later present the general case.
 
\begin{proposition}
Given the \SPZl $\widehat{PZ} \linebreak[1] = \langle \widehat{G},[~], \widehat{E}, \widehat{\ID} \rangle_{PZ}$ and a discrete set of matrices $Q_{i} \in \mathbb{R}^{n \times n}, i = 1 \dots m$, the result of the quadratic map is
	\begin{equation}
		\begin{split}
		& \operator{sq}(Q,\mathcal{\widehat{PZ}}) = \langle \overline{G}, [~], \overline{E}, \widehat{\ID} \rangle_{PZ} \\
		& ~~ \\
		& \mathrm{text} ~~ \overline{E} = \begin{bmatrix} \overline{E}_1 & \dots & \overline{E}_{\widehat{h}} \end{bmatrix},~ \overline{G} = \begin{bmatrix} \overline{G}_1 & \dots & \overline{G}_{\widehat{h}} \end{bmatrix},
	\end{split}
	\label{eq:quadMap}
	\end{equation}
where 
	\begin{equation*}
		\overline{E}_j = \widehat{E} + \widehat{E}_{(\cdot,j)} \cdot \mathbf{1}^{(1,\widehat{h})} , ~~ \overline{G}_j = \begin{bmatrix} \widehat{G}_{(\cdot,j)}^T Q_{1} \widehat{G} \\ \vdots \\ \widehat{G}_{(\cdot,j)}^T Q_{m} \widehat{G} \end{bmatrix}, ~ j = 1\dots \widehat{h}.
	\end{equation*}
	The \operator{compact} operation is applied to remove monomials with an identical variable part. The overall complexity is $\mathcal{O}(n^3 \log(n))+ \mathcal{O}(n^3 m)$.
	\label{prop:quadMap1}
\end{proposition}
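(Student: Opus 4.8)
The plan is to substitute the generator representation of $\widehat{PZ}$ directly into Definition~\ref{def:quadMap} and expand the resulting bilinear form coordinate-wise. Writing a generic point of $\widehat{PZ}$ as $s = \sum_{i=1}^{\widehat{h}} \mu_i(\alpha)\, \widehat{G}_{(\cdot,i)}$ with the variable parts $\mu_i(\alpha) = \prod_{k=1}^{\widehat{p}} \alpha_k^{\widehat{E}_{(k,i)}}$, the $l$-th coordinate of a point of $\operator{sq}(Q,\widehat{PZ})$ is
\begin{equation*}
  s^T Q_l s = \sum_{j=1}^{\widehat{h}} \sum_{i=1}^{\widehat{h}} \mu_j(\alpha)\,\mu_i(\alpha)\, \widehat{G}_{(\cdot,j)}^T Q_l \widehat{G}_{(\cdot,i)}, \qquad l = 1 \dots m .
\end{equation*}
Since $\mu_j(\alpha)\,\mu_i(\alpha) = \prod_{k=1}^{\widehat{p}} \alpha_k^{\widehat{E}_{(k,i)}+\widehat{E}_{(k,j)}}$, the product of two variable parts is again a variable part, now with exponent vector $\widehat{E}_{(\cdot,i)} + \widehat{E}_{(\cdot,j)}$. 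Reading the double sum with $j$ as the outer index, the pair $(i,j)$ therefore contributes a monomial whose exponent column equals the $i$-th column of $\widehat{E} + \widehat{E}_{(\cdot,j)}\mathbf{1}^{(1,\widehat{h})} = \overline{E}_j$ and whose generator column equals the $i$-th column of $\big[\widehat{G}_{(\cdot,j)}^T Q_1 \widehat{G};\ \dots;\ \widehat{G}_{(\cdot,j)}^T Q_m \widehat{G}\big] = \overline{G}_j$ (the column-stacking of \eqref{eq:quadMap}). Concatenating these $\widehat{h}$ blocks over $j$ gives precisely the \SPZl $\langle \overline{G},[~],\overline{E},\widehat{\ID}\rangle_{PZ}$ of \eqref{eq:quadMap}, and since the identification is term-by-term it holds without any symmetry assumption on the $Q_l$.

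I would then observe that the identifier vector is inherited unchanged: the quadratic map introduces no new dependent factors and does not couple $\widehat{PZ}$ to another \SPZ, so the dependencies recorded by $\widehat{\ID}$ carry over directly. Because distinct pairs $(i,j)$ may produce the same exponent column $\widehat{E}_{(\cdot,i)} + \widehat{E}_{(\cdot,j)}$, the final step is to invoke Proposition~\ref{prop:compact} to merge all such monomials into one, which by that proposition leaves the represented set unchanged.

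For the complexity I would count as follows. Precomputing $Q_l \widehat{G}$ for $l = 1 \dots m$ costs $\mathcal{O}(m n^2 \widehat{h})$; each block $\overline{G}_j$ is then assembled from the $m$ products of the $1 \times n$ row $\widehat{G}_{(\cdot,j)}^T$ with the $n \times \widehat{h}$ matrix $Q_l \widehat{G}$, i.e.\ $\mathcal{O}(m n \widehat{h})$ per $j$ and $\mathcal{O}(m n \widehat{h}^2)$ over all $j$; each block $\overline{E}_j$ is a sum of $\widehat{p} \times \widehat{h}$ matrices, i.e.\ $\mathcal{O}(\widehat{p}\,\widehat{h})$ per $j$ and $\mathcal{O}(\widehat{p}\,\widehat{h}^2)$ in total. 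The constructed \SPZl has $\widehat{h}^2$ dependent generators, so by Proposition~\ref{prop:compact} the subsequent \operator{compact} costs $\mathcal{O}(\widehat{p}\,\widehat{h}^2 \log(\widehat{h}^2))$. With $\widehat{h} = \mathcal{O}(n)$ and $\widehat{p} = \mathcal{O}(n)$ in the sense of \eqref{eq:complexity}, these four contributions are $\mathcal{O}(m n^3)$, $\mathcal{O}(m n^3)$, $\mathcal{O}(n^3)$, and $\mathcal{O}(n^3 \log n)$, so the overall complexity is $\mathcal{O}(n^3 \log(n)) + \mathcal{O}(n^3 m)$.

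The algebraic expansion is routine; the only points that require care are lining up the block-and-column indexing of $\overline{E}_j$ and $\overline{G}_j$ with the double sum — keeping the inner index $i$ ranging over columns and the outer index $j$ fixing both the exponent shift $\widehat{E}_{(\cdot,j)}$ and the left factor $\widehat{G}_{(\cdot,j)}$ — and checking that \operator{compact} applied to the $\widehat{h}^2 = \mathcal{O}(n^2)$ generated columns still comes out at $\mathcal{O}(n^3 \log n)$ rather than a larger bound. I expect this bookkeeping, rather than any conceptual step, to be the main thing to get exactly right.
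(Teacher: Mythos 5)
Your proposal is correct and follows essentially the same route as the paper: direct substitution of the generator representation into the definition of the quadratic map, expansion of the resulting bilinear form into a double sum over generator pairs, identification of the blocks $\overline{E}_j$ and $\overline{G}_j$ with the terms of that sum, and a subsequent \operator{compact} to merge monomials with identical variable parts. Your complexity accounting — including the precomputation and reuse of $Q_l\widehat{G}$, the $\mathcal{O}(\widehat{p}\,\widehat{h}^2)$ cost of the exponent blocks, and the $\mathcal{O}(\widehat{p}\,\widehat{h}^2\log(\widehat{h}^2))$ cost of \operator{compact} — matches the paper's exactly.
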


\begin{proof}
	The equations are obtained by substitution of $s$ in Def.~\ref{def:quadMap} with the definition of an \SPZl from \eqref{eq:polyZonotope}, which yields
	\begin{equation*}
		\begin{split}
			& \operator{sq}(Q,\widehat{\mathcal{PZ}}) = \bigg\{ x ~ \bigg| ~ i = 1 \dots m,~ \alpha_k \in [-1,1],  \\
			& x_{(i)} = \bigg( \sum _{j=1}^{\widehat{h}} \prod _{k=1}^{\widehat{p}} \alpha _k ^{\widehat{E}_{(k,j)}}  \widehat{G}_{(\cdot,j)} \bigg)^T Q_i \bigg( \sum _{l=1}^{\widehat{h}} \prod _{k=1}^{\widehat{p}} \alpha _k ^{\widehat{E}_{(k,l)}}  \widehat{G}_{(\cdot,l)} \bigg) \bigg \} \\
			& ~ \\
			& = \bigg \{ x ~ \bigg | ~ x_{(i)} = \sum_{j = 1}^{\widehat{h}} \sum_{l = 1}^{\widehat{h}} \bigg( \prod _{k=1}^{\widehat{p}} \alpha _k ^{\widehat{E}_{(k,j)} + \widehat{E}_{(k,l)}} \bigg) \widehat{G}_{(\cdot,j)}^T Q_i \widehat{G}_{(\cdot,l)}, \\
			& ~~~~~~~~~~~ i = 1 \dots m,~ \alpha_k \in [-1,1] \bigg \} = \langle \overline{G}, [~], \overline{E}, \widehat{\ID} \rangle_{PZ}.
		\end{split}
	\end{equation*}	
Note that only the generator matrix, but not the exponent matrix, is different for each dimension $x_{(i)}$.
	
Complexity: The construction of the matrices $\overline{E}_j$ has complexity $\mathcal{O}(\widehat{h}^2 \widehat{p})$, and the construction of the matrices $\overline{G}_j$ has complexity $\mathcal{O}(n^2 \widehat{h} m) + \mathcal{O}(n \widehat{h}^2 m)$ if the results for $Q_i \widehat{G}$ are stored and reused. Since the matrices $\overline{E}$ and $\overline{G}$ both consist of $\overline{h} = \widehat{h}^2$ columns, and because the complexity of the \operator{compact} operation is $\mathcal{O}(\widehat{p}\overline{h} \log(\overline{h}))$ (see Prop.~\ref{prop:compact}), the complexity of the subsequent application of \operator{compact} is $\mathcal{O}(\widehat{p} \widehat{h}^2 \log(\widehat{h}^2))$. The resulting overall complexity is therefore $\mathcal{O}(n^2 \widehat{h} m) + \mathcal{O}(n \widehat{h}^2 m) + \mathcal{O}(\widehat{p} \widehat{h}^2 \log(\widehat{h}^2))$, which is $\mathcal{O}(n^3 \log(n))+ \mathcal{O}(n^3 m)$ using \eqref{eq:complexity}.
\end{proof}

We now extend Prop.~\ref{prop:quadMap1} to the general case including independent generators, for which we compute an over-approxi\-mation for computational reasons.

\begin{proposition}
	(Quadratic Map) Given an \SPZl $\mathcal{PZ} = \langle G,G_I,E,\ID \rangle_{PZ} \subset \mathbb{R}^n$ and a discrete set of matrices $Q_{i} \in \mathbb{R}^{n \times n}, i = 1 \dots m$,
	\begin{equation}
		\begin{split}
			& \operator{sq}(Q,\mathcal{PZ}) \subseteq \big\langle  \big[ c_z ~ \overline{G}_{(\cdot,\mathcal{H})} \big], G_z, \\
			& ~~~~~~~~~~~~~~~~~~~ \big[ \mathbf{0}^{(p,1)} ~ \overline{E}_{\left(\{ 1, \dots ,p \} , \mathcal{H} \right)} \big], \ID \big\rangle_{PZ} \\
			& ~ \\
			& \text{with} ~ \mathcal{K} = \left\{ i ~\left|~ \exists j > p ~ \overline{E}_{(j,i)} \neq 0 \right. \right\}, \\
			& ~~~~~~ \mathcal{H} = \{ 1, \dots , h+q \} \setminus \mathcal{K}, \\
			& ~~~~~~ \langle c_z, G_z \rangle_{Z} = \operator{zono} \left( \big\langle \overline{G}_{\left( \cdot, \mathcal{K} \right)}, [~], \overline{E}_{\left( \cdot, \mathcal{K} \right)}, \widehat{\ID} \big\rangle_{PZ} \right),
		\end{split}
		\label{eq:quadMapPolyZono}
	\end{equation} 
	where the zonotope enclosure is calculated by applying Prop.~\ref{prop:zonoEnclose}. The matrices $\overline{G}$ and $\overline{E}$ are computed according to \eqref{eq:quadMap} with the extended generator and exponent matrices $\widehat{G}$ and $\widehat{E}$, as well as the extended identifier vector $\widehat{\ID}$ defined as 
	\begin{equation}
		\begin{split}
		& \widehat{G} = \begin{bmatrix} G & G_I \end{bmatrix}, ~ \widehat{E} = \begin{bmatrix} E & \mathbf{0}^{(p,q)} \\ \mathbf{0}^{(q,h)} & I_q \end{bmatrix}, \\
		& \widehat{\ID} = \begin{bmatrix} \ID &  \operator{uniqueID}(q) \end{bmatrix},
		\end{split}
		\label{eq:quadMapExtMat}
	\end{equation}	
so that $\widehat{p} = p + q$. The \operator{compact} operation is applied to remove monomials with an identical variable part. The complexity of the calculations is $\mathcal{O}(n^3 \log(n)) + \mathcal{O}(n^3 m)$. 
	\label{prop:quadMapSpecial}
\end{proposition}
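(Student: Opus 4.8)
The plan is to reduce the general case to the independent‑generator‑free case of Prop.~\ref{prop:quadMap1}, paying for the reduction with a single zonotope over‑approximation. First I would observe that the \SPZl $\mathcal{PZ} = \langle G,G_I,E,\ID \rangle_{PZ}$ defines exactly the same set as the \SPZl $\widehat{\mathcal{PZ}} = \langle \widehat{G},[~],\widehat{E},\widehat{\ID} \rangle_{PZ}$ with $\widehat{G},\widehat{E},\widehat{\ID}$ as in \eqref{eq:quadMapExtMat}: every independent factor $\beta_j$ is merely relabelled as a fresh dependent factor occurring with exponent one in a single monomial, which by \eqref{eq:polyZonotope} does not change the represented set. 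Since $\operator{sq}(Q,\cdot)$ depends only on the set, $\operator{sq}(Q,\mathcal{PZ}) = \operator{sq}(Q,\widehat{\mathcal{PZ}})$, and Prop.~\ref{prop:quadMap1} applies verbatim to $\widehat{\mathcal{PZ}}$ (with $\widehat{h} = h+q$, $\widehat{p} = p+q$), yielding the \emph{exact} representation $\operator{sq}(Q,\widehat{\mathcal{PZ}}) = \langle \overline{G},[~],\overline{E},\widehat{\ID} \rangle_{PZ}$ with $\overline{G},\overline{E}$ given by \eqref{eq:quadMap}.

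Next I would restore the dependent/independent split. By construction the last $q$ rows of $\widehat{E}$, and hence of $\overline{E}$ (whose columns are pairwise sums of columns of $\widehat{E}$), track the relabelled former independent factors. I partition the columns of $\overline{G}$ into the set $\mathcal{H}$ of columns whose exponents in rows $p+1,\dots,p+q$ all vanish — these are genuine polynomial terms in the original dependent factors — and the set $\mathcal{K}$ of columns still containing a former independent factor. The $\mathcal{K}$‑part is enclosed by a zonotope $\langle c_z,G_z \rangle_Z$ via Prop.~\ref{prop:zonoEnclose}; since discarding the functional dependence of the $\mathcal{K}$‑monomials on the remaining factors only enlarges the set (cf.\ the argument in the proof of Prop.~\ref{prop:zonoEnclose}), replacing the $\mathcal{K}$‑part by this zonotope over‑approximates $\operator{sq}(Q,\widehat{\mathcal{PZ}})$. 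Re‑assembling the $\mathcal{H}$‑part with its exponent rows truncated to the first $p$ rows (identically zero there), adding the zonotope center $c_z$ as a monomial with exponent $\mathbf{0}^{(p,1)}$, and attaching $G_z$ as independent generators gives exactly the \SPZl in \eqref{eq:quadMapPolyZono}; a concluding \operator{compact} removes duplicated variable parts without altering the set.

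For the complexity I would note that $\widehat{h} = h+q = \mathcal{O}(n)$ and $\widehat{p} = p+q = \mathcal{O}(n)$ by \eqref{eq:complexity}, so the invocation of Prop.~\ref{prop:quadMap1} on $\widehat{\mathcal{PZ}}$ costs $\mathcal{O}(n^3\log(n)) + \mathcal{O}(n^3 m)$, which dominates everything else: forming $\widehat{G},\widehat{E},\widehat{\ID}$ is $\mathcal{O}(n)$ plus concatenations; determining $\mathcal{K}$ and $\mathcal{H}$ from the $q$ marked rows of the $\overline{h} = \widehat{h}^2 = \mathcal{O}(n^2)$ columns of $\overline{E}$ is $\mathcal{O}(q\,\overline{h}) = \mathcal{O}(n^3)$; the zonotope enclosure via Prop.~\ref{prop:zonoEnclose} is $\mathcal{O}(n^2)$; and the final \operator{compact} is $\mathcal{O}(n^2\log(n))$. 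Summing yields the claimed $\mathcal{O}(n^3\log(n)) + \mathcal{O}(n^3 m)$.

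I expect the main obstacle to be the bookkeeping in the second step: verifying that the columns of $\overline{E}$ inherited from $\widehat{E}$ split cleanly according to whether any of the last $q$ exponent rows is nonzero, that truncating those rows on the $\mathcal{H}$‑part is legitimate, and — the genuinely substantive point — that feeding the $\mathcal{K}$‑part (which still carries exponents in the original $p$ rows as well) into \operator{zono} and then dropping all of its dependence is an over‑approximation rather than merely a reshuffling. The underlying algebraic identities are routine once this indexing is pinned down.
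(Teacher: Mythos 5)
Your proposal is correct and follows essentially the same route as the paper's proof: rewrite $\mathcal{PZ}$ equivalently as $\langle \widehat{G},[~],\widehat{E},\widehat{\ID}\rangle_{PZ}$, apply Prop.~\ref{prop:quadMap1}, split the resulting columns into the purely dependent part $\mathcal{H}$ and the remainder $\mathcal{K}$, and over-approximate the latter by a zonotope via Prop.~\ref{prop:zonoEnclose}. The only nitpick is that your $\mathcal{O}(n^2\log(n))$ for the concluding \operator{compact} should be $\mathcal{O}(n^3\log(n))$ (it acts on up to $\widehat{h}^2 = \mathcal{O}(n^2)$ columns with $p = \mathcal{O}(n)$ rows), which is anyway absorbed by the claimed overall bound.
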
 

\begin{proof}
	With the extended generator and exponent matrices $\widehat{G}$ and $\widehat{E}$ and the extended identifier vector $\widehat{\ID}$ from \eqref{eq:quadMapExtMat}, $\mathcal{PZ}$ can be represented equivalently as an \SPZl without independent generators
	\begin{equation*}
		\begin{split}
    \mathcal{PZ} = \bigg\{ & \sum _{i=1}^h \bigg( \prod _{k=1}^p \alpha _k ^{E_{(k,i)}} \bigg) G_{(\cdot,i)} + \sum _{j=1}^{q} \beta _j G_{I(\cdot,j)} ~ \bigg| \\ & \alpha_k, \beta_j \in [-1,1] \bigg\} \\
    & ~ \\
    = \bigg\{ & \sum _{i=1}^{h+q} \bigg( \prod _{k=1}^{p+q} \alpha _k ^{\widehat{E}_{(k,i)}} \bigg) \widehat{G}_{(\cdot,i)} ~ \bigg| ~ \alpha_k \in [-1,1] \bigg\} \\
    & ~ \\
    = \langle & \widehat{G},[~],\widehat{E},\widehat{id} \rangle_{PZ},
    \end{split}
	\end{equation*}
	which enables the computation of the quadratic map according to Prop.~\ref{prop:quadMap1}. For computational reasons, the resulting matrices $\overline{G}$ and $\overline{E}$ from Prop.~\ref{prop:quadMap1} are divided into a dependent part that contains the dependent factors $\alpha_1,\dots,\alpha_p$ only ($i \in \mathcal{H}$), and a remainder part that contains all remaining monomials ($i \in \mathcal{K}$):
	\begin{equation*}
	\begin{split}
		& \operator{sq}(Q,\langle \widehat{G},[~],\widehat{E},\widehat{id} \rangle_{PZ}) = \langle \overline{G}, [~], \overline{E}, \widehat{\ID} \rangle_{PZ} \\
		& ~ \\
		& = \bigg\{ \underbrace{ \sum _{i \in \mathcal{H}} \bigg( \prod _{k=1}^{p} \alpha _k ^{\overline{E}_{(k,i)}} \bigg) \overline{G}_{(\cdot,i)} }_{\text{dependent} ~ \text{part}} + \underbrace{ \sum _{i \in \mathcal{K}} \bigg( \prod _{k=1}^{p+q} \alpha _k ^{\overline{E}_{(k,i)}} \bigg) \overline{G}_{(\cdot,i)}}_{\text{remainder} ~ \text{part}} ~ \bigg| \\
		& ~~~~~~~ \alpha_k \in [-1,1] \bigg\}.
	\end{split}
	\end{equation*} 
	Since the part containing the remaining monomials is enclosed by a zonotope, it holds that the resulting \SPZl encloses the quadratic map.

Complexity: With the extended matrices $\widehat{G}$ and $\widehat{E}$ from \eqref{eq:quadMapExtMat}, the calculation of the matrices $\overline{E}_j$ and $\overline{G}_j$ has complexity $\mathcal{O}((h+q)^2 (p+q))$ and $\mathcal{O}(n^2 (h+q) m) + \mathcal{O}(n (h+q)^2 m)$, respectively, which is $\mathcal{O}(n^3 m)$ using \eqref{eq:complexity}. As for the case without independent generators, the complexity from the subsequent application of the \operator{compact} operation is $\mathcal{O}(ph^2 \log (h^2)) = \mathcal{O}(n^3 \log(n))$ using \eqref{eq:complexity}. The resulting overall complexity is therefore $\mathcal{O}(n^3 \log(n)) + \mathcal{O}(n^3 m)$, since all other operations have lower complexity.
\end{proof}

For the extension to cubic and higher-order maps of sets, which is omitted at this point due to space limitations, the results from Prop.~\ref{prop:quadMapSpecial} can be reused. We demonstrate the tightness of the quadratic map enclosure with an example:

\begin{example}
	We consider the \SPZ
	\begin{equation*}
		\mathcal{PZ} = \left\langle \begin{bmatrix} 1 & -1 & 1 \\ -1 & 2 & 1 \end{bmatrix}, \begin{bmatrix} 0.1 \\ 0 \end{bmatrix}, \begin{bmatrix} 1 & 0 & 2 \\ 0 & 1 & 1 \end{bmatrix}, [1 ~ 2 ] \right\rangle_{PZ}
	\end{equation*}
	and the matrices 
	\begin{equation*}
		Q_1 = \begin{bmatrix} 0.5 & 0.5 \\ 1 & -0.5\end{bmatrix}, ~~ Q_2 = \begin{bmatrix} -1 & 0 \\ 1 & 0 \end{bmatrix}.
	\end{equation*}
	A comparison between the exact quadratic map and the over-approximation computed by Prop.~\ref{prop:quadMapSpecial} is shown in Fig.~\ref{fig:exampleQuadMap}.
	\label{ex:quadMap}
\end{example} 

\begin{figure}[h]
\begin{center}
	\includegraphics[width = 0.45 \textwidth]{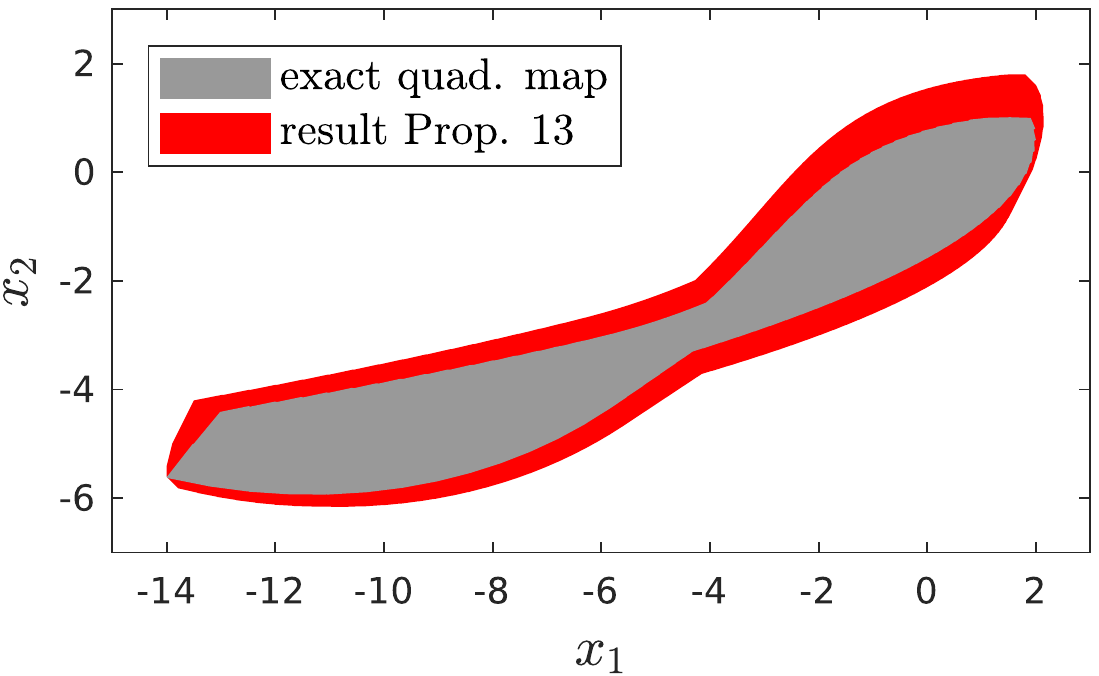}
	\caption{Exact quadratic map and over-approximation computed with Prop.~\ref{prop:quadMapSpecial} for the \SPZl in Example~\ref{ex:quadMap}.}
	\label{fig:exampleQuadMap}
	\end{center}
\end{figure}


\subsubsection{Convex Hull}

Another important set operation is the convex hull of two sets:

\begin{definition}
(Convex Hull) Given two sets $\mathcal{S}_1 \subset \mathbb{R}^n$ and $\mathcal{S}_2 \subset \mathbb{R}^n$, the convex hull of them is defined as 
\begin{equation*}
\begin{split}
\operator{conv}(\mathcal{S}_1,\mathcal{S}_2) = \big \{ & 0.5 \left( 1 + \lambda \right) s_1 + 0.5 \left( 1 - \lambda \right) s_2  ~\big| \\ 
& ~ s_1 \in \mathcal{S}_1,~s_2 \in \mathcal{S}_2,~\lambda \in [-1,1] \big \}.
\end{split}
\end{equation*}
\label{def:convexHull}
\end{definition}

We first consider the special case without independent generators and later present the general case.
\begin{proposition}
	Given $\mathcal{PZ}_1 = \langle G_1, [~], E_1,\linebreak[1] \ID_1 \rangle_{PZ}$ and $\mathcal{PZ}_2 = \langle G_2, [~],$ $E_2, \ID_2 \rangle_{PZ}$, the convex hull is computed as 
	\begin{equation}
		\begin{split}
			& \operator{conv}(\mathcal{PZ}_1, \mathcal{PZ}_2) = \langle \overline{G}, [~], \overline{E}, \overline{\ID} \rangle_{PZ}\\
			& ~~ \\
			& \text{with} ~~ \overline{G} = 0.5 \begin{bmatrix} G_1 & G_1 & G_2 & -G_2 \end{bmatrix}, \\
			& ~~~~~~~ \overline{E} = \begin{bmatrix} E_1 & E_1 & \mathbf{0}^{(p_1,h_2)} & \mathbf{0}^{(p_1,h_2)} \\ \mathbf{0}^{(p_2,h_1)} & \mathbf{0}^{(p_2,h_1)} & E_2 & E_2 \\ \mathbf{0}^{(1,h_1)} & \mathbf{1}^{(1,h_1)} & \mathbf{0}^{(1,h_2)} & \mathbf{1}^{(1,h_2)} \end{bmatrix}, \\
			& ~~~~~~~ \overline{\ID} = \operator{uniqueID}(p_1 + p_2 + 1),
		\end{split}
		\label{eq:convHull1}
	\end{equation}
which has complexity $\mathcal{O}(n^2)$.
\label{prop:convHull1}
\end{proposition}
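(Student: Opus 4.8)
\emph{Proof proposal.} The plan is to obtain the representation directly from Definition~\ref{def:convexHull} by substituting the \SPZl parametrizations of $s_1\in\mathcal{PZ}_1$ and $s_2\in\mathcal{PZ}_2$ and collecting monomials. First I would write $s_1 = \sum_{i=1}^{h_1}\big(\prod_{k=1}^{p_1}\alpha_k^{E_{1(k,i)}}\big)G_{1(\cdot,i)}$ and $s_2 = \sum_{i=1}^{h_2}\big(\prod_{k=1}^{p_2}\gamma_k^{E_{2(k,i)}}\big)G_{2(\cdot,i)}$ with all $\alpha_k,\gamma_k\in[-1,1]$, noting that since $s_1$ and $s_2$ vary \emph{independently} in Definition~\ref{def:convexHull}, the factors of $\mathcal{PZ}_1$ and $\mathcal{PZ}_2$ must be treated as unrelated even if $\ID_1$ and $\ID_2$ share identifiers; this is exactly why $\overline{\ID}=\operator{uniqueID}(p_1+p_2+1)$ assigns fresh identifiers to every factor (relabeling does not change the set defined by \eqref{eq:polyZonotope}, and it correctly records that the convex hull breaks any pre-existing dependency between the two operands).

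Next I would expand $0.5(1+\lambda)s_1 + 0.5(1-\lambda)s_2 = 0.5 s_1 + 0.5\lambda s_1 + 0.5 s_2 - 0.5\lambda s_2$ and observe that this splits into four families of monomials: the generators $0.5\,G_{1(\cdot,i)}$ with variable part $\prod_k\alpha_k^{E_{1(k,i)}}$ (exponent $0$ in the new factor $\lambda$); the generators $0.5\,G_{1(\cdot,i)}$ with variable part $\lambda\prod_k\alpha_k^{E_{1(k,i)}}$ (exponent $1$ in $\lambda$); the generators $0.5\,G_{2(\cdot,i)}$ with $\lambda$-exponent $0$; and the generators $-0.5\,G_{2(\cdot,i)}$ with $\lambda$-exponent $1$. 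Reading off the generator matrix and the exponent matrix — the extra bottom row of $\overline{E}$ being precisely the exponent of the single shared new factor $\lambda$ — reproduces exactly $\overline{G}=0.5[G_1~G_1~G_2~-G_2]$ and the block matrix $\overline{E}$ in \eqref{eq:convHull1}. Since every step is an identity, this yields equality, not merely inclusion, which is what is invoked in the proof of the polytope conversion theorem.

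For the complexity, I would note that, apart from the free concatenations, the construction only requires scaling $G_1$ and $G_2$ by $\pm0.5$ (cost $\mathcal{O}(nh_1)+\mathcal{O}(nh_2)$) and generating $p_1+p_2+1$ fresh identifiers (cost $\mathcal{O}(p_1+p_2+1)$); by \eqref{eq:complexity} these are $\mathcal{O}(n^2)$ and $\mathcal{O}(n)$, hence the overall complexity is $\mathcal{O}(n^2)$.

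The only delicate point, and the one I would watch most carefully, is the bookkeeping of the factors: ensuring that the \emph{same} symbol $\lambda$ appears in both the ``$0.5\,G_1$, $\lambda$-exponent $1$'' block and the ``$-0.5\,G_2$, $\lambda$-exponent $1$'' block (so a single new identifier, not two), while simultaneously arguing that re-indexing the original $\alpha$- and $\gamma$-factors to fresh identifiers neither enlarges nor shrinks the set because $s_1$ and $s_2$ are already independent in Definition~\ref{def:convexHull}. Everything else is a routine substitution and matching of blocks.
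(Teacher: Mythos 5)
Your proposal is correct and follows essentially the same route as the paper: the paper writes the hull as $\{0.5(\mathcal{PZ}_1 \boxplus \lambda\,\mathcal{PZ}_1) \oplus 0.5(\mathcal{PZ}_2 \boxplus (-\lambda)\,\mathcal{PZ}_2) \mid \lambda \in [-1,1]\}$ and invokes the exact-addition and Minkowski-sum propositions, which is exactly your four-block expansion $0.5 s_1 + 0.5\lambda s_1 + 0.5 s_2 - 0.5\lambda s_2$ with $\lambda$ substituted by a single new dependent factor, and the complexity accounting matches as well. Your extra care about fresh identifiers and about $\lambda$ being one shared factor mirrors the paper's remarks, so nothing is missing.
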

\begin{proof}
	We first generate new unique identifiers for all dependent factors to remove possible dependencies between the two \SPZs. For \SPZs, the definition of the convex hull from Def.~\ref{def:convexHull} can be equivalently written as
	\begin{equation}
	\begin{split}
		\operator{conv}(&\mathcal{PZ}_1,\mathcal{PZ}_2) = \big \{ 0.5 \big ( \mathcal{PZ}_1 \boxplus \lambda ~ \mathcal{PZ}_1 \big ) \\
		& \oplus 0.5 \big ( \mathcal{PZ}_2 \boxplus (- \lambda )~ \mathcal{PZ}_2 \big ) ~ \big|~ \lambda \in [-1,1] \big\}.
	\end{split}
	\label{eq:convHullDef}
	\end{equation}	
	Evaluation of the exact additions and the Minkowski sum in \eqref{eq:convHullDef} for the \SPZsl according to Prop.~\ref{prop:exactAddition} and Prop.~\ref{prop:addition} results in the equations in \eqref{eq:convHull1}, where we substituted the parameter $\lambda$ with an additional dependent factor $\alpha_{p_1 + p_2 + 1} = \lambda$. Since $\lambda \in [-1,1]$ and $\alpha_{p_1 + p_2 + 1} \in [-1,1]$, this substitution does not change the set. 
	
Complexity: The construction of the matrix $\overline{G}$ has complexity $\mathcal{O}(2n(h_1 + h_2))$. Generation of $p_1 + p_2 + 1$ unique identifiers for $\overline{\ID}$ has complexity $\mathcal{O}(p_1 + p_2 + 1)$. The overall complexity is therefore $\mathcal{O}(2n(h_1 + h_2)) + \mathcal{O}(p_1 + p_2 + 1)$, which is $\mathcal{O}(n^2)$ using \eqref{eq:complexity}. 	  
\end{proof}

We now extend Prop.~\ref{prop:convHull1} to the general case including independent generators, for which we compute an over-approxi\-mation for computational reasons.
\begin{proposition}
	(Convex Hull) Given the two \SPZsl $\mathcal{PZ}_1 = \langle G_1, G_{I,1}, E_1, \ID_1 \rangle_{PZ}$ and $\mathcal{PZ}_2 = \langle G_2, G_{I,2},$ $E_2, \ID_2 \rangle_{PZ}$,
	\begin{equation*}
		\begin{split}
			& \operator{conv}(\mathcal{PZ}_1, \mathcal{PZ}_2) \subseteq \langle \overline{G}, \overline{G}_{I}, \overline{E}, \overline{\ID} \rangle_{PZ} \\
			& ~~ \\
			& \text{with} ~ \langle \overline{G}, [~], \overline{E}, \overline{\ID} \rangle_{PZ} = \operator{conv}(\overline{\mathcal{PZ}}_1, \overline{\mathcal{PZ}}_2 ), \\
			& ~~~~~~~  \langle \mathbf{0}^{(n,1)}, \overline{G}_I \rangle_Z \supseteq \operator{conv}(\langle \mathbf{0}^{(n,1)}, G_{I,1} \rangle_Z , \langle \mathbf{0}^{(n,1)}, G_{I,2} \rangle_Z), \\
			& ~~~~~~~ \overline{\mathcal{PZ}}_1 = \langle G_1,[~],E_1,\ID_1 \rangle_{PZ}, \\
			& ~~~~~~~ \overline{\mathcal{PZ}}_2 = \langle G_2,[~],E_2,\ID_2 \rangle_{PZ},
		\end{split}
	\end{equation*}
	where $\operator{conv}(\overline{\mathcal{PZ}}_1, \overline{\mathcal{PZ}}_2)$ is calculated according to Prop.~\ref{prop:convHull1}, and an over-approximation of the convex hull of two zonotopes is computed according to \cite[Eq. (2.2)]{Althoff2010a}:
	\begin{equation}
		\begin{split}
			& \operator{conv}(\langle \mathbf{0}^{(n,1)}, G_{I,1} \rangle_Z , \langle \mathbf{0}^{(n,1)}, G_{I,2} \rangle_Z) \subseteq \langle \mathbf{0}^{(n,1)}, \overline{G}_I \rangle_Z \\
			& ~~ \\
			& \text{with} ~~ \overline{G}_I = \begin{cases} [ \widehat{G}_1 ~~ G_{I,1(\cdot,\{q_2+1, \dots ,q_1\})} ], ~~ q_1 \geq q_2 \\ [ \widehat{G}_2 ~~ G_{I,2(\cdot,\{q_1+1, \dots, q_2\})} ], ~~ q_1 < q_2 \end{cases}, \\
			& ~~~~~~~ \widehat{G}_1 = \frac{1}{2} \begin{bmatrix} G_{I,1(\cdot,\mathcal{K}_2)} + G_{I,2} & G_{I,1(\cdot,\mathcal{K}_2)} - G_{I,2} \end{bmatrix}, \\
			& ~~~~~~~ \widehat{G}_2 = \frac{1}{2} \begin{bmatrix} G_{I,1} + G_{I,2(\cdot,\mathcal{K}_1)} & G_{I,1} - G_{I,2(\cdot,\mathcal{K}_1)} \end{bmatrix},
		\end{split}
		\label{eq:convHullZono}
	\end{equation}
	where $\mathcal{K}_1 = \{ 1,\dots,q_1 \}$ and $\mathcal{K}_2 = \{ 1,\dots,q_2 \}$. The overall complexity is $\mathcal{O}(n^2)$.
	\label{prop:convHullSpecial}
\end{proposition}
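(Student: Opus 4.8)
The plan is to reduce the general case to the two building blocks already available: the exact convex hull of \SPZsl without independent generators (Prop.~\ref{prop:convHull1}) and the zonotope convex-hull over-approximation of \cite[Eq.~(2.2)]{Althoff2010a}. The starting observation is that splitting the sum in \eqref{eq:polyZonotope} into its dependent-generator part and its independent-generator part yields the decomposition $\mathcal{PZ}_i = \overline{\mathcal{PZ}}_i \oplus \mathcal{Z}_i$, where $\overline{\mathcal{PZ}}_i = \langle G_i,[~],E_i,\ID_i\rangle_{PZ}$ and $\mathcal{Z}_i = \langle \mathbf{0}^{(n,1)}, G_{I,i}\rangle_Z$, because the factors $\alpha_k$ and $\beta_j$ range independently. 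Given this, the key intermediate claim I would establish is the over-approximative distributivity of the convex hull over the Minkowski sum,
\begin{equation*}
	\operator{conv}(\mathcal{PZ}_1, \mathcal{PZ}_2) \subseteq \operator{conv}(\overline{\mathcal{PZ}}_1, \overline{\mathcal{PZ}}_2) \oplus \operator{conv}(\mathcal{Z}_1,\mathcal{Z}_2).
\end{equation*}

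To prove this inclusion I would take an arbitrary point $0.5(1+\lambda)s_1 + 0.5(1-\lambda)s_2$ of $\operator{conv}(\mathcal{PZ}_1,\mathcal{PZ}_2)$ as in Def.~\ref{def:convexHull}, write $s_i = \overline{s}_i + z_i$ with $\overline{s}_i \in \overline{\mathcal{PZ}}_i$ and $z_i \in \mathcal{Z}_i$, and regroup it as $\big(0.5(1+\lambda)\overline{s}_1 + 0.5(1-\lambda)\overline{s}_2\big) + \big(0.5(1+\lambda)z_1 + 0.5(1-\lambda)z_2\big)$. The first summand lies in $\operator{conv}(\overline{\mathcal{PZ}}_1,\overline{\mathcal{PZ}}_2)$ and the second in $\operator{conv}(\mathcal{Z}_1,\mathcal{Z}_2)$, both realized with the \emph{same} value of $\lambda$; hence the whole point lies in the Minkowski sum of the two hulls. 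Passing to the Minkowski sum decouples the two occurrences of $\lambda$, and that decoupling is exactly the source of the over-approximation announced in the proposition.

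The remaining steps are substitutions of already-stated results. $\operator{conv}(\overline{\mathcal{PZ}}_1,\overline{\mathcal{PZ}}_2)$ is computed exactly by Prop.~\ref{prop:convHull1} since neither operand has independent generators, yielding $\langle \overline{G},[~],\overline{E},\overline{\ID}\rangle_{PZ}$ as in \eqref{eq:convHull1}; $\operator{conv}(\mathcal{Z}_1,\mathcal{Z}_2)$ is over-approximated by $\langle \mathbf{0}^{(n,1)},\overline{G}_I\rangle_Z$ via \cite[Eq.~(2.2)]{Althoff2010a}, where I would note that a zero center in both operands produces a zero center in the standard formula, so that \eqref{eq:convHullZono} is exactly that formula after equalizing the generator counts through the case distinction on $q_1$ versus $q_2$; and the Minkowski sum of the resulting \SPZl with the zero-centered zonotope is evaluated by \eqref{eq:additionZonotope} of Prop.~\ref{prop:addition}, the prepended constant column $c_z=\mathbf{0}^{(n,1)}$ being droppable, which gives precisely $\langle \overline{G}, \overline{G}_I, \overline{E}, \overline{\ID}\rangle_{PZ}$. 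For the complexity, Prop.~\ref{prop:convHull1} costs $\mathcal{O}(n^2)$, forming $\widehat{G}_1$ or $\widehat{G}_2$ in \eqref{eq:convHullZono} costs $\mathcal{O}(n\min(q_1,q_2)) = \mathcal{O}(n^2)$ by \eqref{eq:complexity}, and the closing Minkowski sum is $\mathcal{O}(1)$, for an overall $\mathcal{O}(n^2)$. The only genuinely non-mechanical point is the over-approximative distributivity of the convex hull over the Minkowski sum; everything after it is bookkeeping with the cited propositions.
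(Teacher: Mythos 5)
Your proposal is correct and follows essentially the same route as the paper: the same decomposition $\mathcal{PZ}_i = \overline{\mathcal{PZ}}_i \oplus \langle \mathbf{0}^{(n,1)}, G_{I,i}\rangle_Z$, the same key lemma $\operator{conv}(\mathcal{S}_1 \oplus \mathcal{S}_2, \mathcal{S}_3 \oplus \mathcal{S}_4) \subseteq \operator{conv}(\mathcal{S}_1,\mathcal{S}_3) \oplus \operator{conv}(\mathcal{S}_2,\mathcal{S}_4)$ proved by the same pointwise regrouping with a shared $\lambda$, and the same complexity bookkeeping. No gaps.
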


\begin{proof}
	Each \SPZl $\mathcal{PZ} = \langle G,G_I,E,\ID \rangle_{PZ}$ can be represented equivalently as the Minkowski sum of an \SPZl and a zonotope $\mathcal{PZ} = \langle G,[~],E,\ID \rangle_{PZ} \oplus \langle \mathbf{0}^{(n,1)}, G_I \rangle_Z$ (see \eqref{eq:additionZonotope}). Given sets $S_1$, $S_2$, $S_3$, $S_4 \subset \mathbb{R}^n$, it holds that $\operator{conv}(S_1 \oplus S_2, S_3 \oplus S_4) \subseteq \operator{conv}(S_1,S_3) \oplus \operator{conv}(S_2,S_4)$, which can be derived from the definition of the convex hull from Def.~\ref{def:convexHull}:
\begin{equation*}
	\begin{split}
		& \operator{conv}( S_1 \oplus S_2, S_3 \oplus S_4) \\ 
		& ~ \\
		& = \big \{ 0.5 (1 + \lambda) (s_1 + s_2) + 0.5 (1-\lambda) (s_3 + s_4) ~ \big | \\
		&  ~~~~~~ s_1 \in \mathcal{S}_1, s_2 \in \mathcal{S}_2, s_3 \in \mathcal{S}_3, s_4 \in \mathcal{S}_4, \lambda \in [-1,1] \big \}
	\end{split}
\end{equation*}
\begin{equation*}
\begin{split}
		& = \big \{ \underbrace{0.5 (1 + \lambda) s_1 + 0.5 (1-\lambda) s_3}_{\operator{conv}(\mathcal{S}_1,\mathcal{S}_3)} \\ 
		& ~~~~~ + \underbrace{0.5 (1 + \lambda) s_2 + 0.5 (1-\lambda) s_4}_{\operator{conv}(\mathcal{S}_2,\mathcal{S}_4)}  ~ \big | \\
		&  ~~~~~~ s_1 \in \mathcal{S}_1, s_2 \in \mathcal{S}_2, s_3 \in \mathcal{S}_3, s_4 \in \mathcal{S}_4, \lambda \in [-1,1] \big \} \\
		& ~ \\
		& \subseteq \big \{ 0.5 (1 + \lambda) s_1 + 0.5 (1-\lambda) s_3 ~\big | \\
		& ~~~~~ s_1 \in \mathcal{S}_1, s_3 \in \mathcal{S}_3, \lambda \in [-1,1] \big \} \\ 
		& ~\oplus \big \{ 0.5 (1 + \lambda) s_2 + 0.5 (1-\lambda) s_4 ~ \big | \\
		& ~~~~~~ s_2 \in \mathcal{S}_2, s_4 \in \mathcal{S}_4, \lambda \in [-1,1] \big \}.
	\end{split}
\end{equation*}	
An enclosure of the convex hull of two \SPZsl can therefore be obtained by computing the convex hull of two \SPZsl without independent generators according to Prop.~\ref{prop:convHull1}, and the computation of the convex hull of two zonotopes according to \cite[Eq. (2.2)]{Althoff2010a}.
	
Complexity: As shown in Prop.~\ref{prop:convHull1}, the complexity for computing the convex hull of two \SPZsl without independent generators is $\mathcal{O}(n^2)$. The complexity for the computation of the convex hull of two zonotopes as defined in \eqref{eq:convHullZono} is $\mathcal{O}(2n \min(q_1,q_2))$, resulting from the matrix additions and subtractions that are required for the construction of the matrix $\widehat{G}_1$ or $\widehat{G}_2$. The overall complexity is therefore $\mathcal{O}(n^2) + \mathcal{O}(2n \min(q_1,q_2))$, which is $O(n^2)$ using \eqref{eq:complexity}.
\end{proof}

We demonstrate the tightness of the convex hull enclosure with an example:

\begin{example}
	We consider the \SPZ s
	\begin{equation*}
		\mathcal{PZ}_1 = \left\langle \begin{bmatrix} -2 & 2 & 0 & 1 \\ -2 & 0 & 2 & 1 \end{bmatrix}, [~], \begin{bmatrix} 0 & 1 & 0 & 3 \\ 0 & 0 & 1 & 1 \end{bmatrix}, [1 ~ 2 ] \right\rangle_{PZ}
	\end{equation*}
	and 
	\begin{equation*}
		\mathcal{PZ}_2 = \left\langle \begin{bmatrix} 3 & 1 & -2 & 1 \\ 3 & 2 & 3 & 1 \end{bmatrix}, \begin{bmatrix} 0.5 \\ 0 \end{bmatrix}, \begin{bmatrix} 0 & 1 & 0 & 2 \\ 0 & 0 & 1 & 1 \end{bmatrix}, [1 ~ 2 ] \right\rangle_{PZ}.
	\end{equation*}
	A comparision between the exact convex hull and the over-approximation computed by Prop.~\ref{prop:convHullSpecial} is shown in Fig.~\ref{fig:exampleConvHull}.
	\label{ex:convHull}
\end{example}

\begin{figure}[h]
\begin{center}
	\includegraphics[width = 0.45 \textwidth]{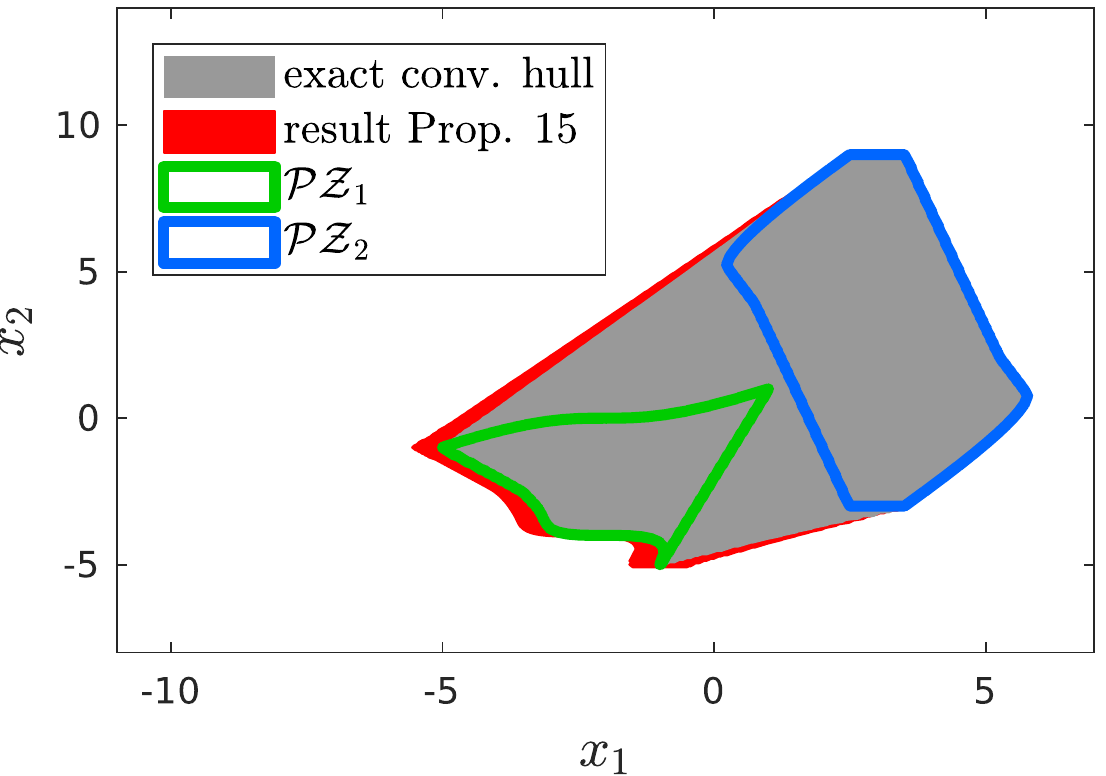}
	\caption{Exact convex hull and over-approximation computed with Prop.~\ref{prop:convHullSpecial} for the \SPZ s in Example~\ref{ex:convHull}.}
	\label{fig:exampleConvHull}
	\end{center}
\end{figure}


\subsection{Auxiliary Operations}
\label{subsec:auxiliaryOperations}

This subsection derives useful auxiliary operations on \SPZs.


\subsubsection{Order Reduction}

Many set operations, such as Minkowski addition or quad\-ratic maps, increase the number of generators and consequently also the order $\rho$ of the \SPZ. Thus, for computational reasons, it is necessary to repeatedly reduce the zonotope order during reachability analysis. We propose a reduction operation for \SPZsl that is based on the order reduction of zonotopes (see e.g., \cite{Kopetzki2017}).

\begin{proposition}
	(Reduce) Given an \SPZl $\mathcal{PZ} = \langle G,G_I,\linebreak[1] E,\ID \rangle_{PZ}$ and a desired zonotope order $\rho_d \geq 1 + 1/n$, the operation \operator{reduce} returns an \SPZl with an order smaller than or equal to $\rho_d$ that encloses $\mathcal{PZ}$:
	\begin{equation*}
 		\begin{split}
			& \operator{reduce}(\mathcal{PZ},\rho_d) = \bigg\langle \begin{bmatrix} c_z & G_{(\cdot,\widehat{\mathcal{K}})} \end{bmatrix}, \begin{bmatrix} G_{I(\cdot,\widehat{\mathcal{H}})} & G_z \end{bmatrix}, \\ 
			& ~~~~~~~~~~~~~~~~~~~~~~~~~~~  \begin{bmatrix} \mathbf{0}^{(p,1)} & E_{(\cdot,\widehat{\mathcal{K}})} \end{bmatrix}, ~ \ID \bigg\rangle_{PZ} \\
			& ~~ \\
			& \text{with} ~~ \langle c_z, G_z \rangle_Z = \operator{reduce}(\mathcal{Z},1), \\
			& ~~~~~~~~\mathcal{Z} = \operator{zono} \big( \big\langle G_{(\cdot,\mathcal{K})},~ G_{I(\cdot,\mathcal{H})},~ E_{(\cdot,\mathcal{K})},~ \ID \big\rangle_{PZ} \big),
		\end{split}
	\end{equation*}
	where the zonotope enclosure is calculated by applying Prop.~\ref{prop:zonoEnclose}. For reduction, the
	\begin{equation}
	 	a = \max \left(0,\min \left( h+q,\lceil h + q - n (\rho_d-1) + 1\rceil \right) \right)
	 	\label{eq:nrOfSelGen}
	\end{equation} 
	smallest generators are selected:
	\begin{equation*}
	\begin{split}
		& \mathcal{K} = \begin{cases} \emptyset, & \mathrm{if} ~ a = 0 \\ \left\{ i ~\left| ~ {|| G_{(\cdot,i)} ||}_2 \leq {|| \widehat{G}_{(\cdot,d_{(a)})} ||}_2 \right. \right\}, & \mathrm{otherwise} \end{cases}, \\
		& \mathcal{H} = \begin{cases} \emptyset, & \mathrm{if} ~ a = 0 \\\left\{ i ~\left| ~ {|| G_{I(\cdot,i)} ||}_2 \leq {|| \widehat{G}_{(\cdot,d_{(a)})} ||}_2 \right. \right\}, & \mathrm{otherwise} \end{cases},\\
		& \widehat{\mathcal{K}} = \{1, \dots ,h \} \setminus \mathcal{K},~~ \widehat{\mathcal{H}} = \{1, \dots ,q \} \setminus \mathcal{H}, \\
		& ~~ \\
		& \text{with} ~~ {|| \widehat{G}_{(\cdot,d_{(1)})} ||}_2 \leq \dotsc \leq {||\widehat{G}_{(\cdot,d_{(h+q)})}||}_2~,
	\end{split}
	\end{equation*} 
	where $\widehat{G} = [G~G_I]$ and $d \in \mathbb{N}_{>0}^{h+q}$ is a vector of indices for which each next entry represents a longer generator than the previous entry. The complexity is $\mathcal{O}(n^2) + \mathcal{O}(\operator{reduce})$, where $\mathcal{O}(\operator{reduce})$ denotes the complexity of the zonotope reduction, which depends on the selected method. 
\end{proposition}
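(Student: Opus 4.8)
The plan is to verify in turn that the returned \SPZl encloses $\mathcal{PZ}$, that its order is at most $\rho_d$, and that the complexity is as stated. For the enclosure, I would first split $\mathcal{PZ}$ into a retained and a reduced part. Since $\widehat{\mathcal{K}}\cup\mathcal{K}=\{1,\dots,h\}$ and $\widehat{\mathcal{H}}\cup\mathcal{H}=\{1,\dots,q\}$ are disjoint unions, regrouping the two sums in \eqref{eq:polyZonotope} yields the set identity $\mathcal{PZ}=\mathcal{PZ}_{\mathrm{keep}}\boxplus\mathcal{PZ}_{\mathrm{red}}$ with $\boxplus$ the exact addition of Prop.~\ref{prop:exactAddition}, where $\mathcal{PZ}_{\mathrm{keep}}=\langle G_{(\cdot,\widehat{\mathcal{K}})},G_{I(\cdot,\widehat{\mathcal{H}})},E_{(\cdot,\widehat{\mathcal{K}})},\ID\rangle_{PZ}$ and $\mathcal{PZ}_{\mathrm{red}}=\langle G_{(\cdot,\mathcal{K})},G_{I(\cdot,\mathcal{H})},E_{(\cdot,\mathcal{K})},\ID\rangle_{PZ}$; this merely reorders (and compacts) the monomials already present and so changes only the representation. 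By Prop.~\ref{prop:zonoEnclose}, $\mathcal{PZ}_{\mathrm{red}}\subseteq\mathcal{Z}$, and since zonotope order reduction is over-approximative, $\mathcal{Z}\subseteq\langle c_z,G_z\rangle_Z$. For every fixed value of the dependent factors $\alpha_k$ the corresponding point of $\mathcal{PZ}_{\mathrm{red}}$ lies in $\langle c_z,G_z\rangle_Z$ irrespective of the $\beta_j$, so replacing $\mathcal{PZ}_{\mathrm{red}}$ by this zonotope only drops dependencies and hence over-approximates; therefore $\mathcal{PZ}\subseteq\mathcal{PZ}_{\mathrm{keep}}\oplus\langle c_z,G_z\rangle_Z$. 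Evaluating the right-hand side with the Minkowski-addition rule \eqref{eq:additionZonotope} of Prop.~\ref{prop:addition} reproduces exactly the \SPZl claimed in the proposition.

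For the order bound, I would observe that the returned \SPZl has $1+|\widehat{\mathcal{K}}|$ dependent generators (the fresh center $c_z$ together with the retained ones) and at most $|\widehat{\mathcal{H}}|+n$ independent generators, since $\operator{reduce}(\mathcal{Z},1)$ yields a zonotope with at most $n$ generators; hence it has at most $h+q-(|\mathcal{K}|+|\mathcal{H}|)+n+1$ generators. By construction, $\mathcal{K}\cup\mathcal{H}$, regarded as indices into $[G~G_I]$, collects every generator whose norm is at most that of the $a$-th shortest, so $|\mathcal{K}|+|\mathcal{H}|\geq a$. I would then substitute $a$ from \eqref{eq:nrOfSelGen} and check $h+q-(|\mathcal{K}|+|\mathcal{H}|)+n+1\leq n\rho_d$ in the three cases $a=0$ (where $\lceil h+q-n(\rho_d-1)+1\rceil\leq 0$, so the order is even below $\rho_d-1$), $a=h+q$ (where all generators are reduced, leaving an order of at most $(n+1)/n\leq\rho_d$ by the hypothesis $\rho_d\geq 1+1/n$), and $a=\lceil h+q-n(\rho_d-1)+1\rceil$ (where the bound simplifies to $n(\rho_d-1)+n=n\rho_d$); in each case the order is at most $\rho_d$.

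For the complexity, computing the $h+q$ generator norms costs $\mathcal{O}(n(h+q))=\mathcal{O}(n^2)$ by \eqref{eq:complexity}, sorting them $\mathcal{O}((h+q)\log(h+q))=\mathcal{O}(n\log n)$, forming $a$ and the four index sets $\mathcal{O}(h+q)=\mathcal{O}(n)$, the enclosure $\operator{zono}(\cdot)$ costs $\mathcal{O}(n^2)$ by Prop.~\ref{prop:zonoEnclose}, the zonotope reduction contributes $\mathcal{O}(\operator{reduce})$, and the remaining assembly consists only of concatenations, $\mathcal{O}(1)$; summing yields $\mathcal{O}(n^2)+\mathcal{O}(\operator{reduce})$.

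I expect the main obstacle to be the order-bound bookkeeping: handling the ceiling together with the clamping by $\max$ and $\min$, accounting for possible ties among the generator norms (which can make $|\mathcal{K}|+|\mathcal{H}|$ strictly larger than $a$) and for the fact that the reduced zonotope may actually carry fewer than $n$ generators plus the extra center $c_z$, all while keeping the inequality oriented correctly in every case. The enclosure and complexity parts are routine once Props.~\ref{prop:exactAddition}, \ref{prop:zonoEnclose}, and \ref{prop:addition} are in hand.
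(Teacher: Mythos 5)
Your proposal is correct and follows essentially the same route as the paper's proof: split off the selected generators, enclose them with a zonotope, reduce that zonotope, and recombine via Minkowski addition, with the order bound following from the definition of $a$. You simply make explicit what the paper asserts tersely — in particular the case analysis showing $|\widehat{\mathcal{K}}|+|\widehat{\mathcal{H}}|+n+1\leq\rho_d n$ and the observation that exact addition is contained in Minkowski addition — and your complexity accounting matches the paper's.
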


\begin{proof}
	The definition of $a$ (see \eqref{eq:nrOfSelGen}) ensures that $|\widehat{\mathcal{K}}| + |\widehat{\mathcal{H}}| + n + 1 \leq \rho_d n$ for values $\rho_d \geq 1 + 1/n$. Furthermore, $\operator{reduce}(\mathcal{PZ},\rho_d) \supseteq \mathcal{PZ}$ since the operators \operator{zono} and \operator{reduce} are both over-approximative, and therefore $\operator{reduce}(\operator{zono}(\cdot))$ is over-$\linebreak[1]$approximative, too. 
	
Complexity: Sorting the generators has a complexity of $\mathcal{O}(n(h+q)) + \mathcal{O}((h+q) \log(h+q))$, which is $\mathcal{O}(n^2)$ using \eqref{eq:complexity}. In the worst-case where all dependent generators get reduced, the enclosure with a zonotope has complexity $\mathcal{O}(p h) + \mathcal{O}(n h)$ (see. Prop.~\ref{prop:zonoEnclose}), which is $\mathcal{O}(n^2)$ using \eqref{eq:complexity}. The overall complexity is therefore $\mathcal{O}(n^2) \linebreak[1] + \mathcal{O}(\operator{reduce})$.
\end{proof}

After reduction, we remove possibly generated all-zero rows in the exponent matrix.


\subsubsection{Restructure}

\label{subsubsec:restructure}

Due to the repeated order reduction and Minkowski addition during reachability analysis, the volume spanned by independent generators grows relative to the volume spanned by dependent generators. As explained later in Sec. \ref{sec:ReachabilityAnalysis}, this has a negative effect on the tightness of the reachable sets. We therefore define the operation \operator{restructure}, which introduces new dependent generators that over-approximate the independent ones:
\begin{proposition}
	(Restructure) Given an \SPZl $\mathcal{PZ} = \langle G, \linebreak[1] G_I,E,\ID \rangle_{PZ}$, \operator{restructure} returns an \SPZl that encloses $\mathcal{PZ}$ and removes all independent generators:
	\begin{equation}
		\begin{split}
			& \operator{restructure}(\mathcal{PZ}) = \left\langle \begin{bmatrix} c_z & G & G_z \end{bmatrix}, [~], \overline{E}, \overline{\ID} \right\rangle_{PZ} \\
			& ~~ \\
			& \text{with} ~~ \langle c_z,G_z\rangle_Z = \operator{reduce}(\langle \mathbf{0}^{(n,1)},G_I \rangle_Z, 1), \\
			& ~~~~~~~~~ \overline{E} = \begin{bmatrix} \mathbf{0}^{(p,1)} & E & \mathbf{0}^{(p,n)} \\ \mathbf{0}^{(n,1)} & \mathbf{0}^{(n,h)} & I_n \end{bmatrix}, \\
			& ~~~~~~~~~ \overline{\ID} = \begin{bmatrix} \ID & \operator{uniqueID}(n) \end{bmatrix}.
		\end{split}
		\label{eq:restructure}
	\end{equation}
The overall complexity is $\mathcal{O}(n) + \mathcal{O}(\operator{reduce})$, where $\mathcal{O}(\operator{reduce})$ is the complexity of the zonotope reduction.
 \label{prop:restructure}
\end{proposition}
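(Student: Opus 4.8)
The plan is to present \operator{restructure} as a composition of three steps — an exact split of $\mathcal{PZ}$ into a dependent-only part and a zonotope, an over-approximating order reduction of that zonotope, and an exact re-attachment of the reduced zonotope as new dependent generators — and then to read off both the closed form in \eqref{eq:restructure} and the stated complexity.

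First I would split $\mathcal{PZ}$. Directly from the definition in \eqref{eq:polyZonotope}, the first sum is the dependent-only \SPZl $\langle G, [~], E, \ID \rangle_{PZ}$ and the second sum is the zonotope $\langle \mathbf{0}^{(n,1)}, G_I \rangle_Z$, so by the definition of Minkowski addition
\begin{equation*}
	\mathcal{PZ} = \langle G, [~], E, \ID \rangle_{PZ} \oplus \langle \mathbf{0}^{(n,1)}, G_I \rangle_Z .
\end{equation*}
Applying zonotope order reduction to $\langle \mathbf{0}^{(n,1)}, G_I \rangle_Z$ with target order $1$ yields a zonotope $\langle c_z, G_z \rangle_Z$ with $G_z \in \mathbb{R}^{n \times n}$ and $\langle \mathbf{0}^{(n,1)}, G_I \rangle_Z \subseteq \langle c_z, G_z \rangle_Z$, since order reduction is over-approximative; hence $\mathcal{PZ} \subseteq \langle G, [~], E, \ID \rangle_{PZ} \oplus \langle c_z, G_z \rangle_Z$.

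Next I would re-attach the reduced zonotope. By Prop.~\ref{prop:conversionZonotope}, $\langle c_z, G_z \rangle_Z$ equals the \SPZl $\langle [c_z ~ G_z], [~], [\mathbf{0}^{(n,1)} ~ I_n], \operator{uniqueID}(n) \rangle_{PZ}$, which has no independent generators and whose $n$ identifiers are, by construction, disjoint from $\ID$. Forming the exact Minkowski sum of $\langle G, [~], E, \ID \rangle_{PZ}$ with this \SPZl via Prop.~\ref{prop:exactAddition} — where \operator{mergeID} (Prop.~\ref{prop:mergeID}) merely zero-pads the exponent matrices, the identifier sets being disjoint — then amounts to concatenating the generator matrices and stacking the exponent matrices block-diagonally, with the all-zero column handling the center $c_z$. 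This produces exactly $\langle [c_z ~ G ~ G_z], [~], \overline{E}, \overline{\ID} \rangle_{PZ}$ with $\overline{E}$ and $\overline{\ID}$ as in \eqref{eq:restructure} (the ordering of the generator columns being immaterial). Chaining the two inclusions gives $\mathcal{PZ} \subseteq \operator{restructure}(\mathcal{PZ})$, and the result contains no independent generators, as claimed.

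Finally, for the complexity I would observe that only two steps incur cost that is not a pure concatenation or initialization: the zonotope order reduction, contributing $\mathcal{O}(\operator{reduce})$, and the call $\operator{uniqueID}(n)$, contributing $\mathcal{O}(n)$; assembling $[c_z ~ G ~ G_z]$, the block matrix $\overline{E}$, and $\overline{\ID}$ uses only concatenations and the initialization of zero and identity blocks. Hence the total complexity is $\mathcal{O}(n) + \mathcal{O}(\operator{reduce})$. The proof has no genuinely hard step; the only place that needs care is verifying that the block structure of $\overline{E}$ and the list $\overline{\ID}$ emerging from the \operator{mergeID}/exact-addition bookkeeping match \eqref{eq:restructure} exactly, and that each of the three steps is set-enclosing — the split being exact, the reduction over-approximative, and the re-attachment exact.
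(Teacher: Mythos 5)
Your proof is correct and rests on the same two observations as the paper's: the zonotope reduction is the only over-approximative step, and turning the (reduced) independent generators into dependent ones with fresh identifiers merely changes the representation, not the set, while the complexity is just $\mathcal{O}(\operator{uniqueID}(n)) + \mathcal{O}(\operator{reduce})$ since everything else is concatenation. The paper verifies the re-representation by writing out the set directly rather than routing it through Prop.~\ref{prop:conversionZonotope} and Prop.~\ref{prop:exactAddition} as you do, but this is a presentational difference only (modulo a harmless \operator{compact} step if $E$ already contains an all-zero column).
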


\begin{proof}
	The result of the \operator{restructure} operation encloses the original set since \operator{reduce} is over-approx\-imative, and the redefinition of independent generators as new dependent generators just changes the set representation, but not the set in \eqref{eq:restructure} itself:
	\begin{equation*}
  	\begin{split}
    & \operator{restructure}(\mathcal{PZ}) = \bigg\{ c_z + \sum _{i=1}^h \bigg( \prod _{k=1}^p \alpha _k ^{E_{(k,i)}} \bigg) G_{(\cdot,i)} \\ 
    & ~~~~~~~~~~~~~~~~~~~~~~~~~~~~~ + \sum _{j=1}^{n} \beta _j G_{z(\cdot,j)} ~ \bigg| ~ \alpha_k, \beta_j \in [-1,1] \bigg\} \\
    & ~ \\
    & = \bigg\{   \sum _{i=1}^{h+n+1} \bigg( \prod _{k=1}^{p+n} \alpha _k ^{\overline{E}_{(k,i)}} \bigg) \overline{G}_{(\cdot,i)} ~ \bigg| ~ \alpha_k \in [-1,1] \bigg\}.
    \end{split}
  \end{equation*}
	
Complexity: The generation of $n$ new unique identifiers has complexity $\mathcal{O}(n)$. Since the construction of the matrices only involves concatenations, the overall complexity equals $\mathcal{O}(n) + \mathcal{O}(\operator{reduce})$, where $\mathcal{O}(\operator{reduce})$ is the complexity of the zonotope reduction. 
\end{proof}

We demonstrate the effectiveness of Prop.~\ref{prop:restructure} by numerical examples in Sec. \ref{sec:NumericalExamples}. To save computation time, we define an upper bound $p_d$ of factors for the \SPZl after restructuring so that $p+n \leq p_d$ holds, where independent factors are removed first to maintain as much dependence as possible. If required, dependent factors are removed by enclosing the corresponding generators with a zonotope (see Prop.~\ref{prop:zonoEnclose}).


\section{Reachability Analysis}
\label{sec:ReachabilityAnalysis}

In this section, we demonstrate how \SPZsl can be used to improve reachability analysis for nonlinear systems. We consider nonlinear systems of the form
\begin{equation}
	\dot x (t) = f(x(t),u(t)), ~ x(t) \in \mathbb{R}^n,~ u(t) \in \mathbb{R}^m,
	\label{eq:nonlinSys}
\end{equation}
where $x$ is the vector of system states, $u$ is the input vector, and $f:~\mathbb{R}^n \times \mathbb{R}^m \to \mathbb{R}^n$ is a Lipschitz continuous function. The reachable set of the system is defined as follows:

\begin{definition}
	(Reachable Set) Let $\xi(t,x_0,u(\cdot))$ denote the solution to \eqref{eq:nonlinSys} for an initial state $x(0) = x_0$ and the input trajectory $u(\cdot)$. The reachable set for an initial set $\mathcal{X}_0 \subset \mathbb{R}^n$ and a set of possible input values $\mathcal{U} \subset \mathbb{R}^m$ is 
	\begin{equation*}
		\mathcal{R}_{\mathcal{X}_0}^e(t) := \big\{ \xi(t,x_0,u(\cdot)) ~\big |~ x_0 \in \mathcal{X}_0, \forall \tau \in [0,t]~ u(\tau) \in \mathcal{U} \big \}.
	\end{equation*} 
\end{definition}

The superscript $e$ on $\mathcal{R}_{\mathcal{X}_0}^e(t)$ denotes the exact reachable set, which cannot be computed for general nonlinear systems. Therefore, we compute a tight over-approximation $\mathcal{R}(t) \supseteq \mathcal{R}_{\mathcal{X}_0}^e(t)$. Furthermore, we calculate the reachable set for consecutive time intervals $\tau_s = [t_s,t_{s+1}]$ with $t_{s+1} = t_s + \Delta t$ so that the reachable set for a time horizon $t_f$ is given as $\mathcal{R}([0,t_f]) = \bigcup_{s = 0}^{t_f/\Delta t -1} \mathcal{R}(\tau_s)$, where $t_f$ is a multiple of  $\Delta t$.

\subsection{Reachability Algorithm}

Our algorithm in based on the conservative polynomialization approach in \cite{Althoff2013a}, which is an extension to the conservative linearization approach in \cite{Althoff2008c}. The principle of conservative linearization is quite simple: In each time interval $\tau_s$ the nonlinear function $f(\cdot)$ is linearized and the set of linearization errors is treated as an additional uncertain input to the system, which then allows to calculate the reachable set with a reachability algorithm for linear systems. The conservative polynomialization approach improves this method by abstracting the nonlinear function $f(\cdot)$ by a Taylor expansion of order $\kappa$:
\begin{equation*}
	\begin{split}
		\dot x_{(i)}(t) & = f_{(i)}(z(t)) \\
		& \in \sum_{j=0}^{\kappa} \frac{\left((z(t)-z^*)^T \nabla \right)^j f_{(i)}(z^*)}{j!} \oplus \mathcal{L}_{(i)}(t),
	\end{split}
\end{equation*} 
where $z = [x^T~u^T]^T$, the set $\mathcal{L}_{(i)}(t)$ is the Lagrange remainder, defined in \cite[Eq. (2)]{Althoff2013a}, and the vector $z^* \in \mathbb{R}^{n+m}$ is the expansion point for the Taylor series.

In order to fully exploit the advantages of \SPZs, Alg. \ref{alg:reach} is slightly modified from \cite{Althoff2013a}. We only specify the algorithm for the Taylor order $\kappa = 2$ for simplicity, since the extension to higher orders is straightforward. The operation \operator{taylor} returns the matrices
\begin{equation*}
	\begin{split}
	& A_{(i,j)} = \frac{\partial f_{(i)}(\cdot)}{\partial x_{(j)}} \bigg |_{z^*}\in \mathbb{R}^{n \times n},~B_{(i,j)} = \frac{\partial f_{(i)}(\cdot)}{\partial u_{(j)}} \bigg |_{z^*}\in \mathbb{R}^{n \times m},  \\
	& ~ \\
	& ~~~~~~~~ D = \nabla^2 f(z^*),~~E = \nabla^3 f(z^*),~~w = f(z^*)
	\end{split}
\end{equation*}
storing the coefficients of the Taylor series expansion for the nonlinear function $f(\cdot)$ at the expansion point $z^*$, and the operation \operator{enlarge} enlarges a set by a given scalar factor $\lambda \in \mathbb{R}_{> 1}$. The definitions of the operations $\operator{post}^{\Delta}$ \cite[Sec. 4.1]{Althoff2013a}, \operator{varInputs} \cite[Sec. 4.2]{Althoff2013a}, and \operator{lagrangeRemainder} \cite[Sec. 4.1]{Althoff2013a} are identical to the ones in \cite{Althoff2013a}. The \texttt{post} operator changed since we precompute some of the sets in our algorithm, and since we use the exact addition as defined in Prop.~\ref{prop:exactAddition} instead of the Minkowski addition to add $\mathcal{F}_1$ to $\mathcal{F}_2$: 

\begin{equation}
	\begin{split}
		& \text{\texttt{post}}\left(\mathcal{R}(t_s),A,\mathcal{V}(t_s),\mathcal{V}^{\Delta}(\tau_s),\mathcal{L}(\tau_s)\right) = \\
		& ~ \\
		& ~~~ \underbrace{e^{A \Delta t} \mathcal{R}(t_s)}_{\mathcal{F}_1} \boxplus \underbrace{\Gamma(\Delta t) \mathcal{V}(t_s)}_{\mathcal{F}_2} \oplus \mathcal{R}^{p,\Delta}\left(\mathcal{V}^{\Delta}(\tau_s) \oplus \mathcal{L}(\tau_s),\Delta t \right),
	\end{split}
	\label{eq:post}
\end{equation}
where $\Gamma(\Delta t)$ is defined as in \cite[Sec.~3.2]{Althoff2013a}, and $\mathcal{R}^{p,\Delta}(\cdot)$ is defined as in \cite[Eq.~(9)]{Althoff2013a}. We heuristically trigger the restructure process (see Sec. \ref{subsubsec:restructure}) when 
\begin{equation*}
	\begin{split}
	& \operator{volRatio}(\mathcal{PZ}) \\
	& ~~ = \frac{\operator{volume}(\operator{interval}(\langle \mathbf{0}^{(n,1)}, G_I \rangle_Z))}{\operator{volume}(\operator{interval}(\operator{zono}(\langle G,[~],E,\ID \rangle_{PZ})))} > \mu_d,
	\end{split}
\end{equation*}
where $\mathcal{PZ} = \langle G, G_I,E,\ID \rangle_{PZ}$, \operator{interval} computes an interval enclosure, and \operator{volume} calculates the volume of a multi-dimensional interval. 

\begin{algorithm}[h!tb]
	\caption{Compute a tight enclosure of the reachable set} \label{alg:reach}
	\textbf{Require:} Initial set $\mathcal{R}(0)$, input set $\mathcal{U}$, time horizon $t_f$,
	
	$~~~~~~~~~~~$ time step size $\Delta t$, enlargement factor $\lambda$, maximum 
	
	$~~~~~~~~~~~$ zonotope order $\rho_d$, maximum volume ratio $\mu_d$.
	
	\textbf{Ensure:} Reachable set $\mathcal{R}([0,t_f])$.
	\begin{algorithmic}[1]
		\State $t_0 = 0,~ s = 0,~ \mathcal{R}^{union} = \emptyset,~\mathcal{U}_s = \{ \mathbf{0}^{(m,1))} \} $
		\State $\Psi(\tau_0) = \{ \mathbf{0}^{(n,1)} \}$
		\While{$t_s < t_f$} \label{line:startWhile}
			\State \texttt{taylor} $\rightarrow w, A, B ,D ,E, ~ z^* = [x^{*T}~u^{*T}]^T$ \label{line:taylor}
			\State $\mathcal{R}^d(t_s) = \mathcal{R}(t_s) \oplus (-x^*),~\mathcal{U}^{\Delta} = \mathcal{U} \oplus (-u^*)$ \label{line:startLinError}
			\State $\mathcal{V}(t_s) = \{w - A x^* \} \oplus \frac{1}{2}sq(D,\mathcal{R}^d(t_s) \times \mathcal{U}_s )$ \label{line:quadMap}
			\State $\mathcal{Z}_z(t_s) = \operator{zono}(\mathcal{R}^d(t_s)) \times \mathcal{U}_s$
			\Repeat
				\State $\overline{\Psi} (\tau _s) = $~\texttt{enlarge}$(\Psi(\tau_s),\lambda)$ \label{line:enlarge}
				\State $\mathcal{R}^{\Delta}(\tau_s) = $~\texttt{post}$^{\Delta}(\mathcal{R}(t_s),\overline{\Psi} (\tau _s),A)$
				\State $\mathcal{R}_z^{\Delta}(\tau_s) = \operator{zono}(\mathcal{R}^{\Delta}(\tau_s))$
				\State $\mathcal{V}^{\Delta}(\tau_s) = $~\texttt{varInputs}$(${\small$\mathcal{Z}_z(t_s),\mathcal{R}_z^{\Delta}(\tau_s),\mathcal{U}^{\Delta},B,D$}$)$
				\State $\mathcal{R}(\tau_s) = \mathcal{R}(t_s) \oplus \mathcal{R}_z^{\Delta}(\tau_s)$ \label{line:reachInt}
				\State $\mathcal{L}(\tau_s) = $~\texttt{lagrangeRemainder}$(\mathcal{R}(\tau_s),E,z^*)$
				\State $\Psi(\tau_s) = \mathcal{V}(t_s) \oplus \mathcal{V}^{\Delta}(\tau_s) \oplus \mathcal{L}(\tau_s)$ \label{line:linError}
			\Until{$\Psi(\tau_s) \subseteq \overline{\Psi} (\tau _s)$} \label{line:endLinError}
			\State $\mathcal{R}(t_{s+1}) = $~\texttt{post}$(\mathcal{R}(t_s),A,\mathcal{V}(t_s),\mathcal{V}^{\Delta}(\tau_s),\mathcal{L}(\tau_s))$ \label{line:post}
			\State $\mathcal{R}(t_{s+1}) = \operator{reduce}(\mathcal{R}(t_{s+1}),\rho_d)$ \label{line:reduce}
			\If{$\operator{volRatio}(\mathcal{R}(t_{s+1})) > \mu_d$}
				\State $\mathcal{R}(t_{s+1}) = \operator{restructure}(\mathcal{R}(t_{s+1}))$\label{line:restructure}
			\EndIf
			\State $\mathcal{R}^{union} = \mathcal{R}^{union} \cup \mathcal{R}(\tau_s)$
			\State $t_{s+1} = t_s + \Delta t, ~\Psi(\tau_{s+1}) = \Psi(\tau_s),~ s = s+ 1$ \label{line:linErrorUpdate}
		\EndWhile \label{line:endWhile}
		\State $\mathcal{R}([0,t_f]) = \mathcal{R}^{union}$
	\end{algorithmic}
\end{algorithm}

The while-loop in lines~\ref{line:startWhile}-\ref{line:endWhile} of Alg.~\ref{alg:reach} iterates over all time intervals $\tau_s$ until the time horizon $t_f$ is reached. For each time interval we first abstract the nonlinear equation $f(\cdot)$ by a Taylor expansion in Line~\ref{line:taylor}. In lines~\ref{line:startLinError}-\ref{line:endLinError} we then compute the set of linearization errors $\Psi(\tau_s)$ on the time interval reachable set $\mathcal{R}(\tau_s)$. The problem we are facing here is that we need $\mathcal{R}(\tau_s)$ to calculate $\Psi(\tau_s)$, but on the other hand we also need $\Psi(\tau_s)$ to calculate $\mathcal{R}(\tau_s)$. To resolve this mutual dependence we first compute $\mathcal{R}(\tau_s)$ using the set of linearization errors from the previous time step ($\Psi(\tau_{s+1}) = \Psi(\tau_{s})$, see Line~\ref{line:linErrorUpdate}) as an initial guess in Line~\ref{line:reachInt}. Next, we use $\mathcal{R}(\tau_s)$ to calculate $\Psi(\tau_s)$ in Line~\ref{line:linError}. In the next iteration of the repeat-until loop we then enlarge the set of linearization errors in Line~\ref{line:enlarge} and calculate $\mathcal{R}(\tau_s)$ and $\Psi(\tau_s)$ using the enlarged set $\overline{\Psi}(\tau_s)$. We repeat this process until the enlarged set of linearization errors $\overline{\Psi}(\tau_s)$ contains the set of actual linearization erros $\Psi(\tau_s)$ (see Line~\ref{line:endLinError}), which guarantees that $\mathcal{R}(\tau_s)$ and $\Psi(\tau_s)$ are both over-approximative. For the benchmarks in Sec.~\ref{sec:NumericalExamples} this procedure results in 1 to 3 iterations of the repeat-until loop. After we computed the set of linearization errors, we finally calculate the reachable set $\mathcal{R}(t_{s+1})$ for the next point in time in Line~\ref{line:post}, which we then use as the new initial set for the next time interval $\tau_{s+1}$. 

At the end of each time step, we apply the operation \operator{reduce} in Line~\ref{line:reduce} to reduce the zonotope order to the desired order $\rho_d$. Since the zonotope order is defined as $\rho = \frac{h+q}{n}$ (see Sec.~\ref{sec:SparsePolynomialZonotope}), this ensures that the SPZs that represent the reachable set contain at most $\rho_d  n$ generators, where $n$ is the system dimension. While order reduction limits the growth of the size of the exponent matrix, the growth of the integer entries in the exponent matrix is not contolled explicitly. However, since the generators that belong to large exponents are usually small they are automatically removed by the operation \operator{reduce}.

Since Alg.~\ref{alg:reach} contains several tuning parameters we shortly discuss their influence on the performance. The parameter with the largest effect is the time step size $\Delta t$; a smaller time step size improves the accuracy, but also prolongs the computation time. In addition, the maximum volume ratio $\mu_d$ can have a large impact. A smaller ratio potentially increases the accuracy, but a very small ratio can have a negative effect since every restructure operation results in an over-approximation. All other parameters have a smaller influence on the performance. Good default values are $\lambda = 0.1$ for the enlargement factor, $\rho_d = 50$ for the maximum zonotope order, and $p_d = 50$ for the maximum number of dependent factors. Furthermore, the method in \cite[Sec. 3.4]{Girard2005} (Girard's method) is applied for zonotope reduction, and we use principal-component-analysis-based order reduction in combination with Girard's method \cite[Sec.~III.A]{Kopetzki2017} for the reduction during the restructure operation. For the future we aim to develop strategies to tune the parameters automatically. 

We proceed with a discussion of the main advantages resulting from \SPZs.


\subsection{Advantages of using Sparse Polynomial Zonotopes}

\label{subsec:advantages}

As mentioned earlier, one of the main advantages of \SPZsl is that they reduce the dependency problem in Alg. \ref{alg:reach}. We demonstrate this with a short example: 

\begin{example}
We consider the one-dimensional system $\dot x = f(x) = -x + x^2$, the initial set $\mathcal{R}(0) = \{\alpha_1 | \alpha_1 \in [-1,1] \}$, and the time step size $\Delta t = 1$. Computation of the Taylor expansion at $z^* = 0$ in Line~\ref{line:taylor} of Alg.~\ref{alg:reach} results in the parameter values $w=f(z^*) = 0$, $A= \frac{\partial f}{\partial x} |_{z^*}=-1$, and $D= \frac{\partial^2 f}{\partial x^2} |_{z^*}=2$. The quadratic map in Line \ref{line:quadMap} evaluates to $\frac{1}{2} \operator{sq}(D,\mathcal{R}(0)) = \{\alpha_1^2 | \alpha_1 \in [-1,1] \}$ for \SPZs. On the other hand, if we use zonotopes, then the quadratic map has to be over-approximated with $\frac{1}{2} \operator{sq}(D,\mathcal{R}(0)) = \{0.5 + 0.5 \alpha_2 | \alpha_2 \in [-1,1] \}$. Furthermore, the exact addition as defined in Prop.~\ref{prop:exactAddition} is not possible for zonotopes. Therefore, the sets $\mathcal{F}_1$ and $\mathcal{F}_2$ in \eqref{eq:post} have to be added using the Minkowski sum, which results in an additional over-approximation due to the loss of dependency. With zonotopes, we obtain for \eqref{eq:post}
\begin{equation*}
	\begin{split}
		\mathcal{F}_1 \oplus \mathcal{F}_2 = \big \{ & 0.368 \alpha_1 + 0.632(0.5 + 0.5 \alpha_2)   \\
		& |~ \alpha_1,\alpha_2 \in [-1,1] \big \} = [-0.368,1].
		\end{split}
\end{equation*}
With \SPZs, however, we obtain the exact set
\begin{equation*}
	\begin{split}
		\mathcal{F}_1 \boxplus \mathcal{F}_2 = \big \{ & 0.368 \alpha_1 + 0.632 \alpha_1^2 ~|~ \alpha_1 \in [-1,1] \big \} \\
		& = [-0.054,1].
		\end{split}
\end{equation*}
\end{example}
Using zonotopes for reachability analysis therefore leads to a significant over-approximation error in each time step. A similar problem occurs with the polynomial zonotope representation from \cite{Althoff2013a}, since this requires limiting the maximum polynomial degree in advance.


\subsection{Hybrid Systems}

In reachability analysis for hybrid systems, the main difficulty is the calculation of the intersection between the reachable set and the guard sets. For \SPZs, three different strategies exist: 
\begin{enumerate}
 	\item We calculate the intersection with a zonotope over-approximation of the \SPZ . By doing so, it is possible to directly apply the well-developed techniques for the computation of guard intersections with zonotopes, like e.g., the ones from \cite[Sec.~VI]{Althoff2011f} or \cite{Girard2008}.
 	\item We calculate the intersections with the guard sets by using the guard-mapping approach in \cite{Althoff2012a} or the time-scaling approach in \cite{Bak2017c}. Both methods require basic set operations only and can therefore by applied to \SPZs. 
 	\item We apply the method in \cite{Kochdumper2020} which tightly encloses the intersection of the reachable set with guard sets represented by nonlinear level sets with an \SPZ. 
\end{enumerate}
Which method performs best depends on the system.


\section{Numerical Examples}
\label{sec:NumericalExamples}

In this section we demonstrate the improvements to reachability analysis due to using \SPZsl on four benchmark systems. The computations for our approach are carried out in MATLAB on a 2.9GHz quad-core i7 processor with 32GB memory. Our implementation of \SPZsl will be made publicly available with the next release of the CORA toolbox \cite{Althoff2015a}.

\begin{figure}
\begin{center}
	\includegraphics[width = 0.435 \textwidth]{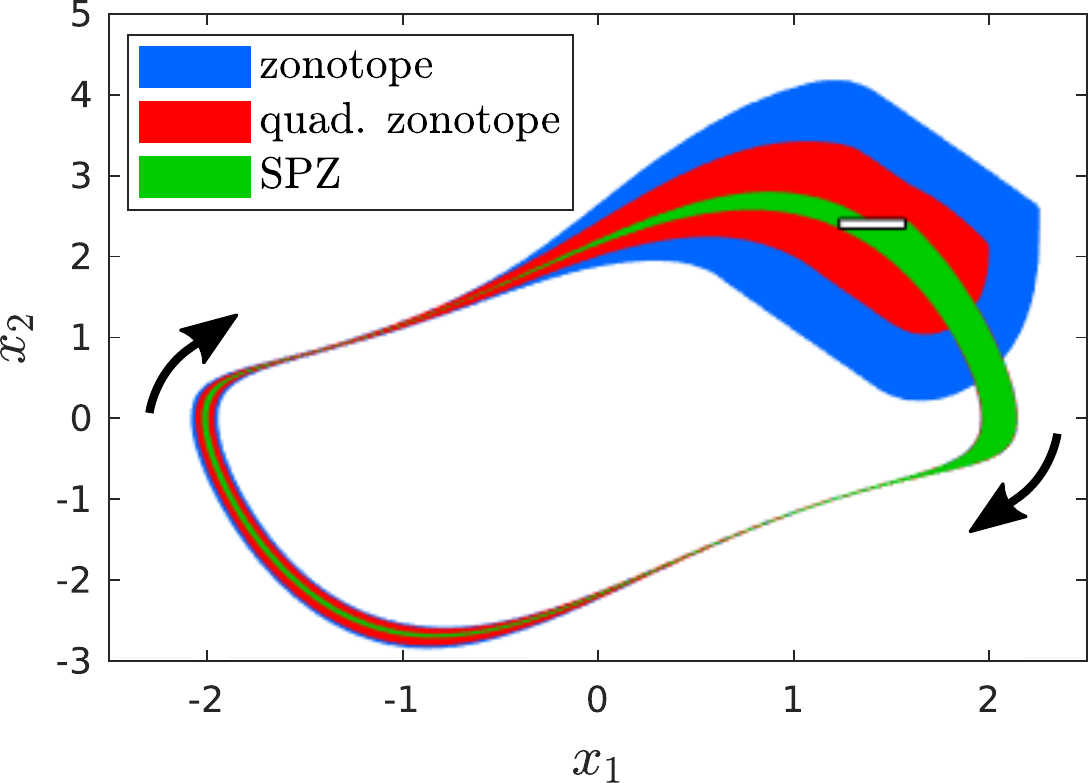}
	\caption{Reachable set of the Van-der-Pol oscillator calculated with different set representations. The initial set is depicted in white with a black border.}
	\label{fig:vanDerPol}
	\end{center}
\end{figure}

\subsection{Van-der-Pol Oscillator}

The system considered first is the Van-der-Pol oscillator taken from the 2019 ARCH competition \cite[Sec.~3.1]{ARCH19nonlinear}:

\begin{equation*}
	\begin{split}
		& \dot x_1 = x_2 \\
		& \dot x_2 = (1-x_1^2) x_2 - x_1.
	\end{split}
\end{equation*} 
For this system, we compare the results for the computation of the reachable set with Alg. \ref{alg:reach} using zonotopes, the quadratic zonotopes from \cite{Althoff2013a}, and our \SPZl representation. We consider the initial set $x_1 \in [1.23,1.57]$ and $x_2 \in [2.34,2.46]$, and use the parameter values $\Delta t = 0.005$ s, $\rho_d = 50$, $\lambda = 0.1$, $\mu_d = 0.01$, and $p_d = 100$. For a fair comparison, we use the same parameter values for every set representation. 

The resulting reachable sets are shown in Fig.~\ref{fig:vanDerPol}. It is clearly visible that the stability of the limit cycle can only be verified with \SPZsl when sets are not split. The computation time is $9.33$ seconds for zonotopes, $13.38$ seconds for quadratic zonotopes, and $16.52$ seconds for \SPZs.

\begin{figure}
\begin{center}
	\includegraphics[width = 0.45 \textwidth]{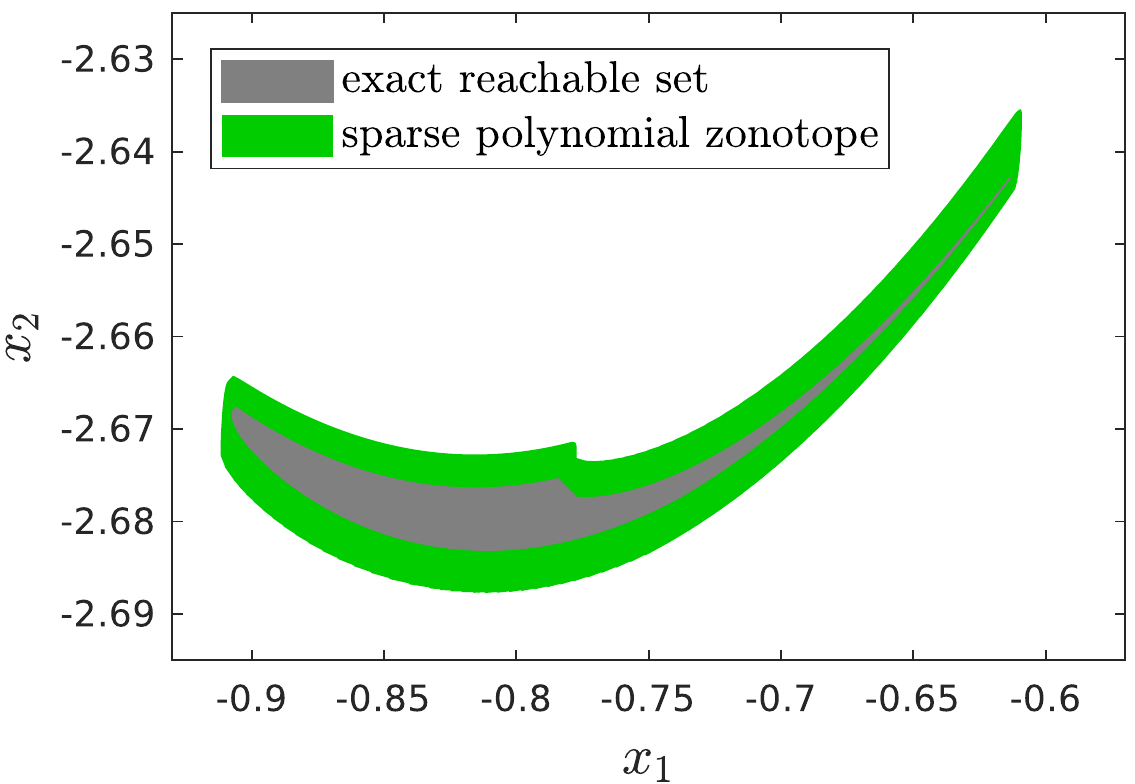}
	\caption{Comparison of the exact reachable set of the Van-der-Pol oscillator after $t=3.15$ seconds with the reachable set over-approximation calculated with \SPZs.}
	\label{fig:vanDerPolSingleSet}
	\end{center}
\end{figure}

An impression on how tight the reachable set can be over-approximated with \SPZsl is provided in Fig.~\ref{fig:vanDerPolSingleSet}, where the reachable set after $t = 3.15$ seconds computed with a time step size of $\Delta t = 0.0001$ s and a maximum volume ratio of $\mu_d = 0.001$ is compared to the exact reachable set of the system. The figure also demonstrates how well the \SPZl approximates the shape of the exact reachable set.

\subsection{Drivetrain}

For the second numerical example, we examine a drivetrain, which is also a benchmark from the ARCH 2019 competition \cite[Sec.~3.3]{ARCH19linear}. We consider the case with two rotating masses, resulting in a system dimension of $n=11$. The model is a hybrid system with linear dynamics. However, we apply the novel approach from \cite{Bak2017c} for calculating the intersections with guard sets, which is based on time-triggered conversion of guards and results in a significant nonlinearity due to the time-scaling process. The initial set is given by $\mathcal{R}(0) = 0.5 (\mathcal{X}_0 - \operator{center}(\mathcal{X}_0)) + \operator{center}(\mathcal{X}_0)$, where $\mathcal{X}_0$ is defined as in \cite[Sec.~3.3]{ARCH19linear}, and we consider the same extreme acceleration maneuver as in \cite[Sec.~3.3]{ARCH19linear}. As a specification, we require that the engine torque after $1.5$ seconds is at least $59 Nm$, which can be formally specified as $T_m \geq 59Nm ~ \forall t \geq 1.5 s$. 

\begin{figure}[h]
\begin{center}
	\includegraphics[width = 0.48 \textwidth]{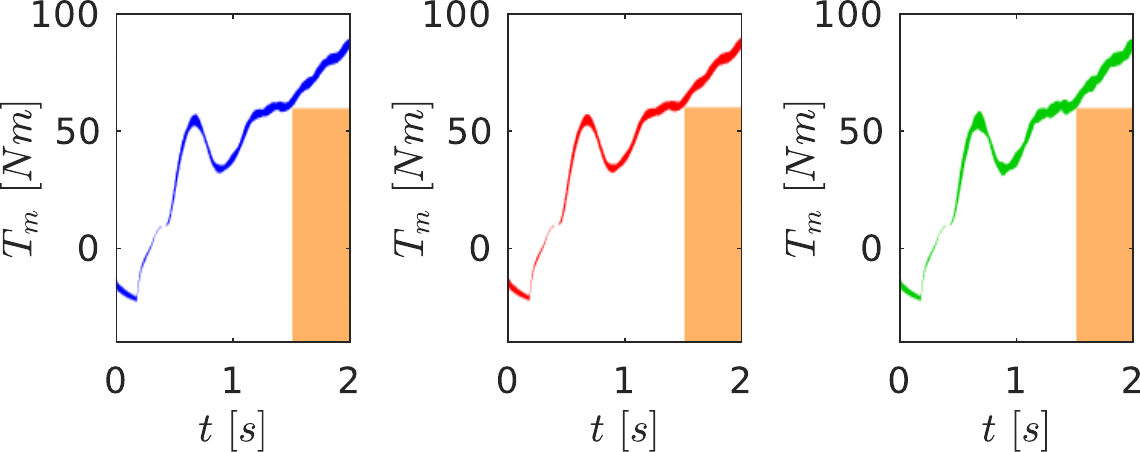}
	\caption{Reachable sets for the drivetrain benchmark calculated with zonotopes (left), quadratic zonotopes (middle), and \SPZsl (right). The forbidden set defined by the specification is depicted in orange.}
	\label{fig:drivetrain}
	\end{center}
\end{figure}

The results for the drivetrain model are shown in Fig.~\ref{fig:drivetrain}. We explicitly considered the possibility of splitting the reachable sets along the largest generator vector so that the specification could be verified with all set representations. However, splitting sets prolongs the computation time: with quadratic zonotopes, the verification took $93$ seconds, and $221$ seconds with zonotopes. Only with \SPZsl was it possible to verify the specification without splitting, resulting in a computation time of $15$ seconds, which is six times faster than with quadratic zonotopes and more than $14$ times faster than with zonotopes. Compared to other non-zonotopic set representations, the speed-up is even larger.

\subsection{Spacecraft Rendezvous}

As a third numerical example we consider the docking-maneuver of a spacecraft taken from the ARCH 2019 competition \cite[Sec.~3.4]{ARCH19nonlinear}. The model is a hybrid system with $n=4$ states and nonlinear dynamics. The three discrete modes are \textit{approaching}, \textit{rendezvous attempt}, and \textit{aborting}. We consider the same initial set and the same specifications as in \cite[Sec.~3.4]{ARCH19nonlinear}. The specifications are that in mode \textit{rendezvous attempt} the spacecraft is located inside the line-of-sight cone and the absolute velocity stays below 3.3 m/min. Furthermore, in mode \textit{aborting} the spacecraft should not collide with the space station.

We apply Alg. \ref{alg:reach} with the parameter values $\Delta t = 0.2$ min (mode \textit{approaching} and \textit{abortion}), $\Delta t = 0.05$ min (mode \textit{rendezvous attempt}), $\rho_d = 10$, $\lambda = 0.1$, $\mu_d = 1$, and $p_d = 10$. To calculate the intersection between the reachable set and the guard sets we use the method in \cite{Kochdumper2020}. The resulting reachable satisfies all specifications. To compare the performance of \SPZsl with other reachability tools we consider the results from the ARCH 19 competition \cite{ARCH19nonlinear}. The comparison in Tab.~\ref{tab:compTime} shows that using \SPZsl resulted in the smallest computation time.

\begin{table}
\begin{center}
\caption{Computation times for the spacecraft rendezvous benchmark. The results for the different tools are taken from \cite[Tab.~4]{ARCH19nonlinear}. The computation times are measured on the machines of the participants (see \cite[Appendix~A]{ARCH19nonlinear}).}
\label{tab:compTime}
\begin{tabular}{ l c c c}
 \toprule
 \textbf{Tool} & \textbf{Comp. Time [s]} & \textbf{Set Rep.} & \textbf{Language} \\ \midrule 
 Ariadne \cite{Benvenuti2014} & $172$ & Taylor models & C++ \\
 CORA \cite{Althoff2015a} & $11.8$ & Zonotopes & MATLAB \\ 
 DynIbex \cite{Sandretto2016b} & $294$ & Zonotopes & C++ \\
 Flow* \cite{Chen2013} & $18.7$ & Taylor models & C++ \\
 Isabelle/HOL \cite{Immler2015} & $295$ & Zonotopes & SML \\
 Our approach & $10.1$ & \SPZs & MATLAB \\
 \bottomrule 
\end{tabular}
\end{center}
\end{table}

\subsection{Transcriptional Regulator Network}

To demonstrate the scalability of our approach, we consider the benchmark in \cite[Sec. VIII.D]{Maiga2015} describing a transcriptional regulator network with $N$ genes. For a network with $N$ genes the system has $n = 2N$ dimensions. We consider the case without artificial guard set so that the benchmark represents a continuous nonlinear system with uncertain inputs. Furthermore, we consider the same initial set, time horizon, and set of uncertain inputs as in \cite[Sec. VIII.D]{Maiga2015}.

We compute the reachable set with \SPZsl using Alg.~\ref{alg:reach} with the parameter values $\Delta t = 0.1$ min, $\rho_d = 10$, $\lambda = 0.1$, $\mu_d = 1$, and $p_d = 50$. The reachable set is visualized in Fig.~\ref{fig:bioNet}, and the computation times for different system dimensions are listed in Tab.~\ref{tab:bioNet}. Even for a system dimension of $n=48$ the computation of reachable set with \SPZsl takes only $122$ seconds, which demonstrates how well our approach scales with the system dimension.

\begin{table}[h]
\begin{center}
\caption{Computation times in seconds for the transcriptional regulator network for different system dimensions.}
\label{tab:bioNet}
\begin{tabular}{l c c c c}
 \toprule
 \textbf{System Dimension} & $\mathbf{n = 12}$ & $\mathbf{n = 24}$ & $\mathbf{n = 36}$ & $\mathbf{n = 48}$ \\ \midrule 
 \textbf{Computation Time [s]} & 6 & 20 & 54 & 122 \\
 \bottomrule 
\end{tabular}
\end{center}
\end{table}

\begin{figure}[h]
\begin{center}
	\includegraphics[width = 0.48 \textwidth]{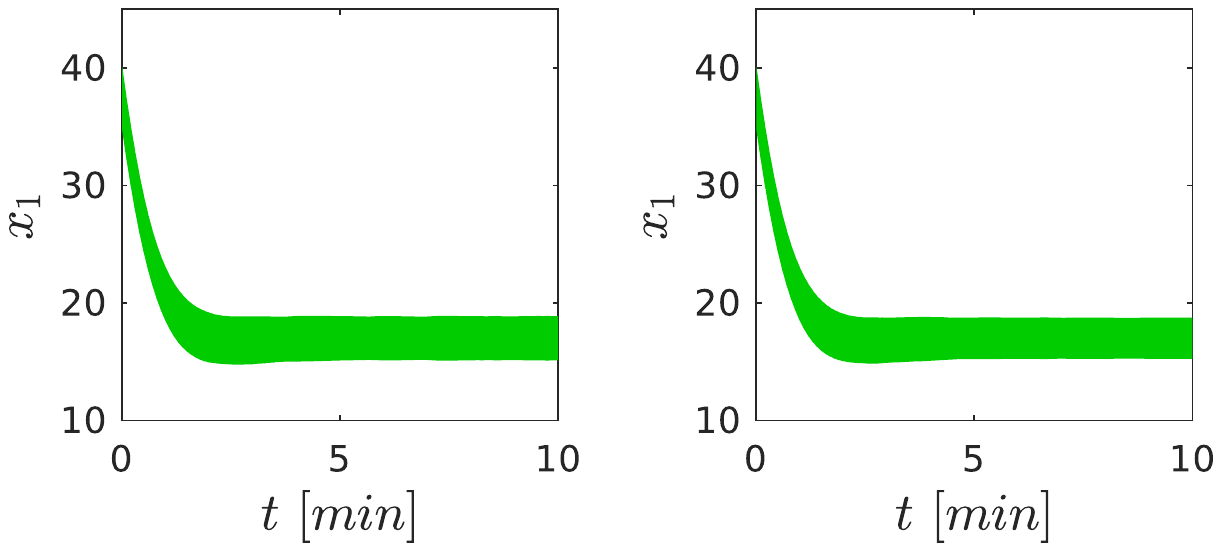}
	\caption{Reachable set of the transcriptional regulator network for the system dimensions $n=12$ (left) and $n=48$ (right).}
	\label{fig:bioNet}
	\end{center}
\end{figure}


\section{Conclusions}

We have introduced \textit{sparse polynomial zonotopes}, a new non-convex set representation. The sparsity results in several advantages compared to previous representations of polynomial zonotopes: sparse polynomial zonotopes enable a compact representation of sets, they are closed under all relevant set operations, and all operations have at most polynomial complexity. The fact that sparse polynomial zonotopes are a generalization of several other set representations like Taylor models, polytopes, and zonotopes further substantiates the relevance of this new representation.

The main application for sparse polynomial zonotopes is reachability analysis for nonlinear systems. Numerical examples demonstrate that using sparse polynomial zonotopes results in much tighter over-approximations of reachable sets compared to using zonotopes or quadratic zonotopes. Due to the improved accuracy, splitting can be avoided, resulting in a significant reduction of the computation time, since splitting of sets results in an exponential number of sets to be propagated with respect to the system dimension.

\bibliography{kochdumper,cpsGroup}
\bibliographystyle{plain}

\vskip 0pt plus -30fil

\begin{IEEEbiography}[{\includegraphics[width=1in,height=1.25in,clip,keepaspectratio]{./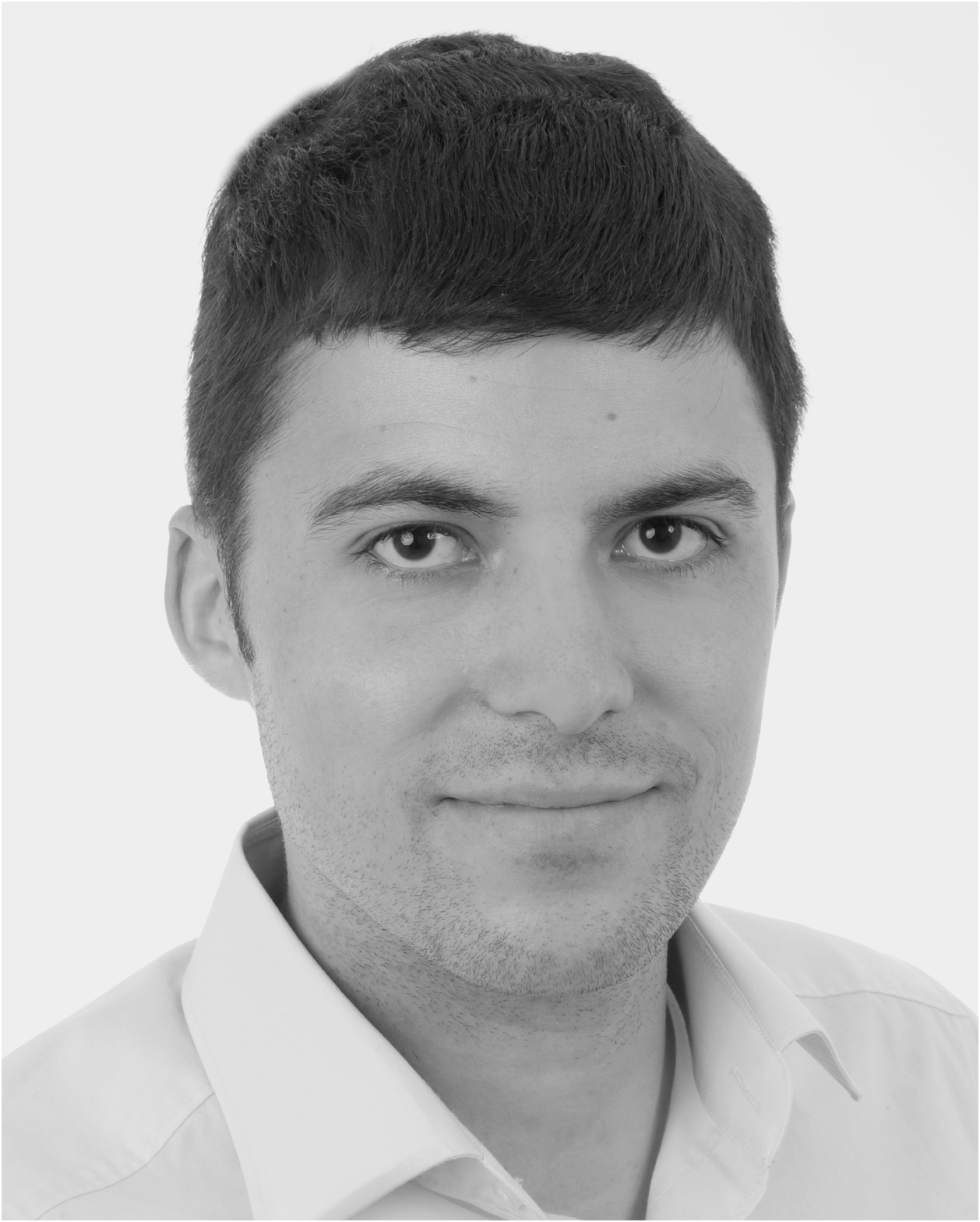}}]{Niklas Kochdumper} received the B.S. degree in Mechanical 
Engineering in 2015 and the M.S. degree in Robotics, Cognition and Intelligence in 2017, both from
Technische Universit\"at M\"unchen, Germany. He is currently pursuing the Ph.D. 
degree in computer science at Technische Universit\"at M\"unchen, Germany. His research interests include formal verification of continuous and hybrid systems, reachability analysis, computational geometry, controller synthesis and electrical circuits.
\end{IEEEbiography}

\vskip 0pt plus -30fil

\begin{IEEEbiography}[{\includegraphics[width=1in,height=1.25in,clip,keepaspectratio]{./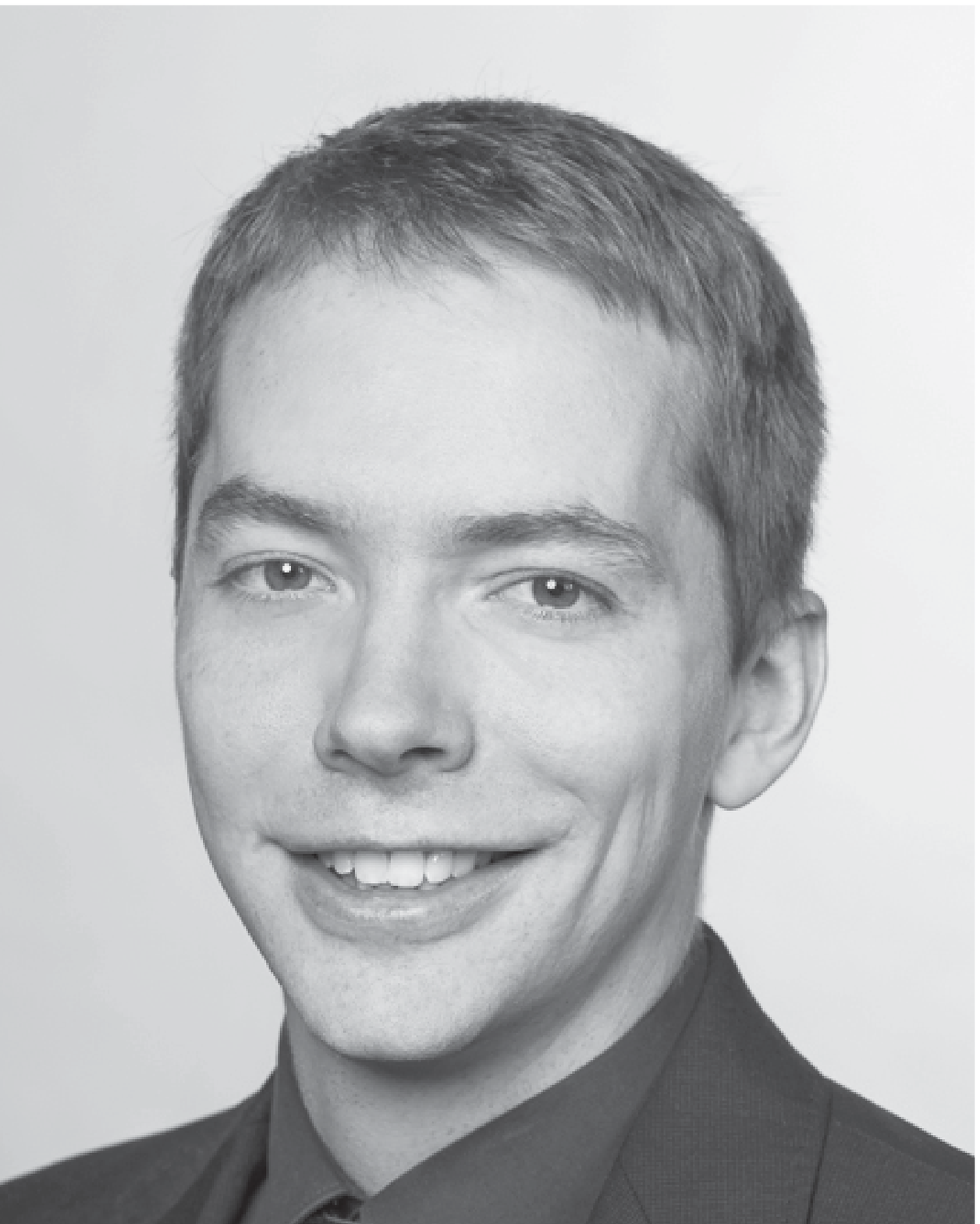}}]
{Matthias Althoff} is an associate professor in computer science at Technische Universit\"at M\"unchen, Germany. He received his diploma engineering degree in Mechanical
Engineering in 2005, and his Ph.D. degree in Electrical Engineering in
2010, both from Technische Universit\"at M\"unchen, Germany.
From 2010 to 2012 he was a postdoctoral researcher at Carnegie Mellon University,
Pittsburgh, USA, and from 2012 to 2013 an assistant professor at Technische Universit\"at Ilmenau, Germany. His research interests include formal verification of continuous and hybrid systems, reachability analysis, planning algorithms, nonlinear control, automated vehicles, and power systems.
\end{IEEEbiography}

\end{document}